\newcounter{bincount}
\newcommand{\binnum}[2][4]{
  \setcounter{bincount}{\numexpr #2\relax}
  \padzeroes[#1]{\binary{bincount}}
}
\def\mathcolor#1#{\@mathcolor{#1}}
\def\@mathcolor#1#2#3{
  \protect\leavevmode
  \begingroup
    \color#1{#2}#3
  \endgroup
}
\newtheorem{theorem}{Theorem}[section]
\newtheorem{lemma}[theorem]{Lemma}
\newtheorem{proposition}[theorem]{Proposition}
\newtheorem{observation}[theorem]{Observation}
\newtheorem{open}{Open Problem}
\newcommand{\cF}{\mathcal{F}}
\newcommand{\cB}{\mathcal{B}}
\newcommand{\cS}{\mathcal{S}}
\newcommand{\Poly}{\mathbf{P}}
\newcommand{\NP}{\mathbf{NP}}
\title{Role colouring graphs in hereditary classes}
\author{Christopher Purcell \and Puck Rombach\thanks{Department of Mathematics \& Statistics, University of Vermont, USA.}}
\date{}
\begin{document}

\maketitle

\begin{abstract}

    We study the computational complexity of computing role colourings
    of graphs in hereditary classes. We are interested in 
    describing the family of hereditary classes on which a role colouring with $k$ colours can be 
    computed
    in polynomial time.
    In particular, we wish to describe
    the boundary
    between the ``hard'' and ``easy'' classes. The notion of a {\em boundary class} has been
    introduced by Alekseev in order to study such boundaries. Our main results are
    a boundary class for the $k$-role colouring problem and the related
    $k$-coupon colouring problem which has recently received a lot of attention in 
    the literature. The latter result makes use of a technique
    for generating regular graphs of arbitrary girth which may be of independent 
    interest.
\end{abstract}

\section{Introduction}

A {\em role colouring} of a graph $G$ is an assignment $r$ of colours to its vertices such that
if $r(u)=r(v)$ then the image $r(N(u))$ of the neighbourhood of $u$ is identical to the 
image $r(N(v))$ of the neighbourhood of $v$. The concept arises from the study 
of roles in a social network \cite{borgatti1992notions}, and has appeared in the literature
under the names {\em role assignment} \cite{van2010computing,Dourado2016} and {\em regular equivalence} \cite{borgatti1993two,everett1994regular}
and similar names \cite{burt1990detecting,sailer1979structural}.

In a recent paper \cite{PurcellRombach2015}, the present authors proved that the problem of deciding the existence of a role colouring with $k$ colours ($k$-{\sc rcol}) is $\NP$-complete even when the input is restricted to planar graphs. This result is best possible, in the sense that if $P$ is a minor-closed class of graphs, restricting $k$-{\sc rcol} to graphs in $P$ yields a polynomial time solution if and only if at least one planar graph is not in $P$. This is because planar graphs are the unique minimal minor-closed class of graphs of unbounded treewidth, and $k$-role colouring 
can be solved efficiently for bounded treewidth classes.

Our current focus is hereditary classes, where such classifications are not usually possible. Unlike the minor-closed case, hereditary classes are not well founded with respect to the containment relation; there need not be a minimal element in a set of hereditary classes. In order to overcome this difficulty, the concept of a {\em boundary} class was introduced by Alekseev \cite{Alekseev2003} for the maximum independent set problem. For a problem $\Pi$ we denote the family
of graph classes on which $\Pi$ has a polynomial-time solution by $\Pi$-easy.
The importance of the notion of a boundary class is due to the following.
If $X$ is defined by finitely many minimal forbidden induced subgraphs,
then $X$ is outside $\Pi$-easy if and only if $X$ contains a boundary class
for the family $\Pi$-easy (under the assumption that $\Poly \not= \NP$).
One of our two main results is the first boundary class for $k$-{\sc rcol}.

We also consider the special case of a {\em coupon colouring}.
A colouring of a graph $G$ is a coupon colouring if every vertex
has a neighbour of every colour.
Coupon colourings are also known as {\em total domatic partitions}, and have recently received attention from the combinatorial \cite{chen2015coupon,shi2017coupon,ChenJin2017}
and algorithmic \cite{Koivisto2017,lee2016total} communities.
We will denote the problem of deciding the existence of a $k$-coupon colouring 
by $k$-{\sc ccol}.
We give the first boundary class for this problem. 
Our proof makes use of a technique for constructing regular graphs
of arbitrary girth which may be of independent interest.
Explicit constructions for regular graphs of large girth have been known for a long time~\cite{biggs1998constructions,lazebnik1995new}. Our construction is far from extremal, but is meant to increase the girth of graphs while preserving $k$-coupon colourability.

In addition to the main results, we show that every non-trivial $2K_2$-free
graph has a $2$-role colouring and that this colouring can be found
in polynomial time. This is in contrast with the fact that the
problem is known to be $\NP$-hard in this class for $k\geq 4$.

The rest of this paper is organised as follows. In the next section we give some definitions and preliminary discussion including a description of boundary classes. For a more thorough overview we direct the reader to \cite{korpelainen2011boundary}. The boundary classes for $k$-role colouring
and $k$-coupon colouring are given in Section~3 and Section~4 respectively. We conclude the paper
with a discussion of the possibility of other boundary classes and some open problems.

\section{Preliminaries}

In this paper graphs may have loops but not multiple edges. Edges are undirected.
The {\em neighbourhood} of a vertex $N_G(v)$ is the set of vertices that share an edge with $v$ in $G$. This may include $v$ if $\{v\} \in E(G)$. 
The {\em degree} of $v$ is $d(v)=|N_G(v)|$. The minimum degree of all the vertices in $G$ is denoted $\delta(G)$; the maximum is denoted $\Delta(G)$.

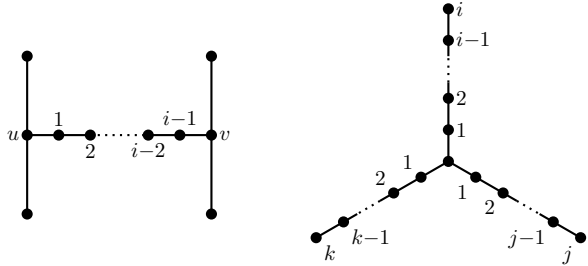
\begin{figure}[!ht]
\centering
\begin{tikzpicture}[thick,scale=0.7, every node/.style={scale=0.8}]
\def\x{1}
\draw   (0,\x) -- (0,3+\x)
        (3.5,\x) -- (3.5,3+\x)
        (0,1.5+\x) -- (1.2,1.5+\x)
        (2.3,1.5+\x) -- (3.5,1.5+\x);
\draw[dotted]   (1.2,1.5+\x)    --  (2.3,1.5+\x);

\fill (0,0+\x) circle (3pt);
\fill (3.5,0+\x) circle (3pt);
\fill (0,3+\x) circle (3pt);
\fill (3.5,3+\x) circle (3pt);
\fill (0,1.5+\x) circle (3pt) node[left] {$u$};
\fill (0.6,1.5+\x) circle (3pt);
\fill (1.2,1.5+\x) circle (3pt);
\fill (2.3,1.5+\x) circle (3pt);
\fill (2.9,1.5+\x) circle (3pt);
\fill (3.5,1.5+\x) circle (3pt) node[right] {$v$};
\draw (.6,1.8+\x) node{$1$};
\draw (1.2,1.2+\x) node{$2$};
\draw (2.3,1.2+\x) node{$i{-}2$};
\draw (2.9,1.8+\x) node{$i{-}1$};

\begin{scope}[style={shift={(8,2)}}]

\draw   (0,0) -- (90:1.5)
        (0,0) -- (210:1.5)
        (0,0) -- (330:1.5)
        (90:2)  --  (90:2.9)
        (210:2) --  (210:2.9)
        (330:2) --  (330:2.9);
        
\draw[dotted]   (90:1.5)    --  (90:2)
                (210:1.5)   --  (210:2)
                (330:1.5)   --  (330:2);
                
\fill   (0,0)       circle  (3pt);
\fill   (90:0.6)    circle  (3pt) node[right] {$1$};
\fill   (90:1.2)    circle  (3pt) node[right] {$2$};
\fill   (90:2.3)    circle  (3pt) node[right] {$i{-}1$};
\fill   (90:2.9)    circle  (3pt) node[right] {$i$};

\fill   (210:0.6)    circle  (3pt) node[above left] {$1$};
\fill   (210:1.2)    circle  (3pt) node[above left] {$2$};
\fill   (210:2.3)    circle  (3pt) node[below right] {$k{-}1$};
\fill   (210:2.9)    circle  (3pt) node[below right] {$k$};

\fill   (330:0.6)    circle  (3pt) node[below left] {$1$};
\fill   (330:1.2)    circle  (3pt) node[below left] {$2$};
\fill   (330:2.3)    circle  (3pt) node[below left] {$j{-}1$};
\fill   (330:2.9)    circle  (3pt) node[below left] {$j$};
            
\end{scope}

\end{tikzpicture}
\caption{Illustration of the graphs $H_i$ and $S_{ijk}$. }\label{fig:HS}
\end{figure}

As usual, $P_n,C_n$ and $K_n$ denote the path, cycle and clique on $n$ vertices respectively, and $K_{m,n}$ denotes
the biclique on $m+n$ vertices. 
The graphs $P_n$ and $C_n$ will be particularly important, and it will be useful to define
a consistent ordering of their vertices.
Let $V(P_n)=\{ 1,2,\ldots,n \}$ and let $E(P_n)=\{  \{1,2\}, \{2,3\}, \ldots, \{n-1,n\} \}$.
Let $V(C_n)=V(P_n)$ and let $E(C_n)=E(P_n)\cup\{\{n,1\}\}$. We define two more important
graphs similarly. Let $P_n^*$ be the graph with $V(P_n^*)=V(P_n)$,$E(P_n^*)=E(P_n)\cup\{\{n\}\}$.
Let $P_n^{**}$ be the graph with $V(P_n^*)=V(P_n)$,$E(P_n^*)=E(P_n)\cup\{\{1\},\{n\}\}$.

Let $H$ be the graph on the left of Figure~\ref{fig:HS}, and let $u,v$ be the vertices
of $H$ of degree $3$. We define the graph $H_i$
to be the graph obtained from $H$ by subdividing the edge $\{u,v\}$ $i-1$ times.
Let $S_{ijk}$ be the tree with $3$ leaves at distance $i,j,$ and $k$ respectively 
from the single vertex of degree $3$, for $i,j,k\geq 1$. If one of $i,j$ or $k$ is 
$0$, then let $S_{ijk}=P_{i+j+k+1}$. Figure~\ref{fig:HS} shows $S_{ijk}$ to the right.
An {\em ear} of a graph is a set of vertices of degree 2 that induces a path.

A set of graphs is called a {\em class} if it is closed under isomorphism. 
If a class is also closed under deleting vertices, we say that the class is {\em hereditary}. 
If a hereditary class is closed under deleting edges we say that it is a {\em monotone} class. 
A monotone class which is closed under edge contraction is said to be {\em minor-closed}. 
This paper mainly deals with hereditary classes. It is well known (and easy to verify) that a hereditary class can be characterised by its minimal forbidden induced subgraphs. For a set of graphs $M$, we say that a graph $G$ is $M$-free if no graph in $M$ is an induced subgraph of $G$. The class of $M$-free graphs will be denoted by $Free(M)$. It will be useful to refer to the monotone class of graphs that contain no element of $M$ as a subgraph (not necessarily induced) as $Free_m(M)$.
Let $\cF$ be a family of hereditary classes closed under taking subclasses. 
Suppose $Y_1 \supseteq Y_2 \supseteq \ldots$ is an infinite sequence of graphs
outside $\cF$ and let $Y$ be their intersection. We say that $Y$ is a {\em limit class} for $\cF$,
and a minimal limit class is a {\em boundary class}. A finitely defined graph class $X$ is outside
$\cF$ if and only if it contains none of the boundary classes for $\cF$.

A {\em role colouring} $r:V(G) \to \mathbb{N}^+$ of a graph $G$ is an assignment of colours to its vertices, such that if $r(u)=r(v)$ then $\{ r(u') : u' \in N(u)\} = \{r(v') : v' \in N(v)\}$. In other words, two vertices with the same colour have identical sets of colours in their respective neighbourhoods. 
For a role colouring $r$ we define the {\em role graph} $R$ to be the graph whose
vertex set is $\{i: \exists v\in V(G) r(v)=i\}$ and an edge $\{x,y\}$ if 
a vertex of colour $x$ has a neighbour of colour $y$.
Formally, a role colouring $r$ of a graph $G$ with role
graph $R$ is a {\em locally surjective homomorphism} from $G$ to $R$.
We consider only finite role graphs.
If $R$ has $k$ vertices, we say that $r$ has $k$ colours
and is a $k$-{\em role colouring}.
We may identify the colours of $r$ and the vertices of $R$,
and without loss of generality, these are $\{1,\ldots,k\}$.
In particular, if $R\in\{C_k,P_k,P_k^*,P_k^{**}\}$ then
the colours of $r$ are ordered according to the definition
given above.
We refer to the problem of deciding the existence of a $k$-role colouring for a graph $G$ as $k$-{\sc rcol}, and the problem of deciding the existence of a role-colouring with role graph $R$ for a graph $G$ as $R$-{\sc rcol}.

\begin{observation}\label{obs:connected}
If $G$ is a connected graph with a role colouring $r$, then the role graph $R$ of the colouring is connected.
\end{observation}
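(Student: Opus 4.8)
The plan is to exploit the fact, essentially built into the definition, that a role colouring is a graph homomorphism from $G$ to $R$ that is surjective on vertices. First I would record the two structural facts we need. (i) If $\{u,v\}\in E(G)$ then $\{r(u),r(v)\}\in E(R)$: this is immediate from the construction of $R$, since the vertex $u$, which has colour $r(u)$, has a neighbour $v$ of colour $r(v)$, so by definition $\{r(u),r(v)\}$ is an edge of $R$ (a loop if $r(u)=r(v)$). (ii) Every vertex of $R$ is of the form $r(w)$ for some $w\in V(G)$, which is exactly how $V(R)$ is defined.

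Next, to establish connectivity, I would take two arbitrary vertices $x,y\in V(R)$ and, using (ii), choose preimages $u,v\in V(G)$ with $r(u)=x$ and $r(v)=y$. Since $G$ is connected there is a walk $u=w_0,w_1,\dots,w_\ell=v$ in $G$. Applying (i) to each edge $\{w_i,w_{i+1}\}$, the sequence $r(w_0),r(w_1),\dots,r(w_\ell)$ is a walk in $R$ from $x$ to $y$: consecutive terms are either adjacent in $R$ or equal, the latter merely shortening the walk. Hence $x$ and $y$ lie in the same connected component of $R$, and since $x,y$ were arbitrary, $R$ is connected.

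The argument is short and the only thing requiring a little care — the closest thing to an obstacle — is the bookkeeping around degenerate steps: consecutive vertices of the walk in $G$ may receive the same colour, and both $G$ and $R$ may contain loops. Neither causes a problem, since the image of a walk under a homomorphism is still a walk once we permit it to stay at a vertex or traverse a loop; phrasing the argument in terms of walks rather than paths from the outset avoids this being an issue at all. One could alternatively phrase the whole thing as the standard fact that a homomorphic image of a connected graph is connected, but I would prefer to spell it out at this level of generality since $R$ here is precisely the image of $r$.
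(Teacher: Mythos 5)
Your argument is correct, and since the paper states this as an observation without proof, your walk-image argument (a role colouring is a vertex-surjective homomorphism onto $R$, and homomorphic images of connected graphs are connected) is exactly the standard reasoning the paper implicitly relies on. Your care with loops and repeated colours along the walk is consistent with the paper's conventions, which explicitly allow loops in $R$.
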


\begin{observation}
If $G$ has a non trivial role colouring $r$ with a connected role graph $R$ then $N[v]$ is not monochromatic for any vertex $v$.
\end{observation}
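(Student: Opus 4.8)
The plan is to argue by contradiction. Suppose some vertex $v$ has $N[v]$ monochromatic, and let $c=r(v)$ be the common colour of every vertex of $N[v]$. The first, immediate step is to record the consequence that $r(N(v))\subseteq\{c\}$: every neighbour of $v$ (if $v$ has any) receives colour $c$, and if $v$ has a loop then $v$ itself lies in $N(v)$, still of colour $c$.

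Next I would exploit the hypotheses on $R$. Since $r$ is non-trivial, $R$ has at least two vertices, and since $R$ is connected it can have no isolated vertex; in particular the vertex $c$ of $R$ is incident with an edge joining it to some vertex $c'\neq c$ (a loop at $c$ on its own would place $c$ in a component by itself, contradicting that $R$ is connected with more than one vertex). By the definition of the role graph, the edge $\{c,c'\}$ of $R$ is witnessed by a vertex $u$ of $G$ with $r(u)=c$ that has a neighbour of colour $c'$, so $c'\in r(N(u))$.

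Finally I would invoke the defining property of a role colouring: since $r(u)=c=r(v)$, we have $r(N(u))=r(N(v))$. Combining this with the two previous steps yields $c'\in r(N(v))\subseteq\{c\}$, i.e.\ $c'=c$, a contradiction. I do not expect any substantial obstacle; the only point requiring a little care is the extraction of a colour $c'$ \emph{distinct} from $c$ adjacent to $c$ in $R$, where one must rule out the degenerate case that $c$'s only incidence in $R$ is a loop — and this is precisely the place where connectedness of $R$ together with $\lvert V(R)\rvert\ge 2$ is used.
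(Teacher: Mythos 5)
Your argument is correct: the paper states this observation without proof, and your contradiction argument (a monochromatic $N[v]$ forces $r(N(v))\subseteq\{r(v)\}$, while connectedness of $R$ with at least two colours supplies a neighbouring colour $c'\neq c$ that must then appear in $r(N(v))$ by regular equivalence) is precisely the intended justification. The care you take to exclude the case where $c$'s only incidence is a loop, and the implicit handling of the case $N(v)=\emptyset$, are exactly the points where non-triviality and connectedness of $R$ are needed, so nothing is missing.
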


A {\em CNF formula} $\phi$ is a disjunction of clauses $C_1 \wedge \ldots \wedge C_m$,
each of which is a conjunction of literals $C_i = l_1 \vee \ldots \vee l_j$, each of which
is a variable (or its negation) that takes a value $T$ or $F$. A formula
is {\em monotone} if every literal is a variable (and not a negation of a variable).
A {\em not-all-equal satisfying assignment} of $\phi$ is a satisfying assignment of $T,F$ to 
its variables so that each clause has at least one literal whose value is $F$. The problem of deciding whether a (monotone)
formula has a not-all-equal satisfying assignment is denoted by (monotone) {\sc nae-sat}.
This problem is known to be $\NP$-complete \cite{schaefer1978complexity}.

\section{Role Colourings}

Let $S_j$ be the class of $(K_{1,4},C_3,\ldots,C_j,H_1,\ldots,H_j)$-free graphs and let $\cS=\cap_j^{\infty} S_j$. In this section we prove the following theorem.

\begin{theorem}\label{th:boundary}
For all $k\geq 2$, $\cS$ is a boundary class for the $k$-{\sc rcol} problem.
\end{theorem}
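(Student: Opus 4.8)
The plan is to determine the class $\cS$ explicitly, observe that $k$-{\sc rcol} is easy on it, and then verify the two properties that define a boundary class: that $\cS$ is a limit class for the family $\cF$ of classes on which $k$-{\sc rcol} is solvable in polynomial time, and that no proper hereditary subclass of $\cS$ is a limit class.

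First I would pin down $\cS$. Since $\cS\subseteq Free(C_3)$ it is triangle-free, so $K_{1,4}$-freeness forces maximum degree at most $3$; forbidding every $C_\ell$ makes every graph in $\cS$ a forest; and in a subcubic forest the family $\{H_i\}_{i\ge 1}$ is forbidden precisely when no two vertices of degree $3$ lie at finite distance, that is, when every connected component contains at most one vertex of degree $3$. Hence $\cS$ is exactly the class of disjoint unions of paths $P_n$ and spiders $S_{abc}$. In particular $\cS$ has treewidth at most $1$, so by the fact that $k$-{\sc rcol} is solvable in polynomial time on bounded-treewidth classes we have $\cS\in\cF$ --- the usual state of affairs at a boundary.

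Next, $\cS$ is a limit class: since $S_1\supseteq S_2\supseteq\cdots$ and $\bigcap_j S_j=\cS$, it suffices to show $k$-{\sc rcol} is $\NP$-hard on $S_j$ for all sufficiently large $j$. Here I would reduce from monotone {\sc nae-sat}, turning a formula $\phi$ into a graph built from variable- and clause-gadgets joined by long paths, then subdividing every edge a number of times equal to a large common multiple of the relevant periods, so that the resulting graph (i) lies in $S_j$ --- it has girth greater than $j$, maximum degree $3$, and no two degree-$3$ vertices within distance $j$ --- and (ii) admits a $k$-role colouring if and only if $\phi$ has a not-all-equal satisfying assignment. The subtlety is that a plain subdivision does not preserve role-colourability, so one subdivides by a multiple of the length of a suitable closed walk in the intended role graph $R$ --- with $R$, and therefore the gadgets, depending on $k$ --- so that colours can be threaded along the inserted paths, while arranging that essentially only this $R$ can be realised on the constructed graph. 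Carrying out this reduction is the main obstacle.

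Finally, minimality. Let $\cS'\subsetneq\cS$ be hereditary, fix $G_0\in\cS\setminus\cS'$, and let $s$ be the maximum number of vertices in a component of $G_0$ and $m$ its number of components. The combinatorial heart is: for all sufficiently large $\ell$, the class $S_\ell\cap Free(G_0)$ has treewidth bounded by a function of $G_0$ alone. Indeed, in any $G\in S_\ell$ the degree-$3$ vertices are pairwise at distance more than $\ell$ and have tree-like neighbourhoods; if $m$ of them each have all three incident ears long (at least $s$ vertices of degree $2$), then these $m$ vertices, being far apart, carry $m$ pairwise vertex-disjoint and mutually non-adjacent induced copies of $S_{s,s,s}$, and since every component of $G_0$ embeds into $S_{s,s,s}$ this produces an induced $G_0$ in $G$, which is impossible; so fewer than $m$ degree-$3$ vertices are of this kind, and deleting from each of the remaining degree-$3$ vertices one short pendant ear leaves a graph with fewer than $m$ vertices of degree $3$, hence treewidth at most $m+2$, while the removed pendant paths can be reattached without increasing the treewidth. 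Granting this, suppose $\cS'=\bigcap_j Z_j$ with $Z_1\supseteq Z_2\supseteq\cdots$ and every $Z_j\notin\cF$, and fix $\ell$ large. Each of the finitely many graphs $G_0,K_{1,4},C_3,\dots,C_\ell,H_1,\dots,H_\ell$ lies outside $\cS'=\bigcap_j Z_j$ (the last ones because they already lie outside $\cS\supseteq\cS'$), so some $Z_{j^\ast}$ avoids all of them and, being hereditary, satisfies $Z_{j^\ast}\subseteq S_\ell\cap Free(G_0)$; but then $Z_{j^\ast}$ has bounded treewidth, so $k$-{\sc rcol} is polynomial on it and $Z_{j^\ast}\in\cF$, contradicting $Z_{j^\ast}\notin\cF$. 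Hence no proper hereditary subclass of $\cS$ is a limit class, and together with the previous paragraph this shows that $\cS$ is a minimal limit class, i.e.\ a boundary class, for $k$-{\sc rcol}.
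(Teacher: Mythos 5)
The gap is in the limit-class half: you never actually prove that $k$-{\sc rcol} is $\NP$-hard on $S_j$ for every $j$, and this is precisely the hard content of the theorem. You correctly identify the plan (a reduction from monotone {\sc nae-sat} built from degree-$3$ variable and clause gadgets, followed by subdividing every edge a number of times divisible by a suitable period) and you correctly identify the two difficulties, namely that subdivision preserves $H$-role-colourability only when the number of subdivisions matches the length of a closed walk in the intended role graph, and that one must exclude unintended role graphs $R$; but you then declare this ``the main obstacle'' and stop. In the paper this is the bulk of the argument: the period lemma for paths of degree-$2$ vertices (Lemma~\ref{lem:period}), the {\sc nae-sat} reduction for the fixed role graph $P_k^{**}$ (Proposition~\ref{prop:pkstarstar}), the lemmas restricting which connected role graphs can occur in the presence of a long ear and of dangling cycles of lengths $2k-1$ and $2k$ (Lemmas~\ref{lem:ear}, \ref{lem:dang}, \ref{lem:dang2}), the extra gadgets that force $R\in\{C_k,P_k^{**}\}$ so that hardness transfers to $k$-{\sc rcol} for graphs of maximum degree $3$ (Proposition~\ref{prop:krole}), and finally the subdivision lemma (Lemma~\ref{lem:subdiv}) applied with $p(k)=2k(k-1)(2k-1)$ subdivisions per edge to land in $S_j$ while preserving the equivalence with satisfiability (Proposition~\ref{prop:krolelim}). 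None of this is routine, and without it the claim that $\cS$ is even a limit class is unproven, so the theorem is not established by your proposal.

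By contrast, your minimality argument is essentially correct and is a genuinely different, self-contained route: the paper invokes the minimality criterion of Lemma~\ref{lem:min} together with the cited theorem that a monotone class excluding some graph of $\cS$ has bounded treewidth, after observing that $Free(\{mS_{jjj}\}\cup M_j)\subseteq Free_m(mS_{jjj})$ (Lemma~\ref{lem:rolp}), whereas you prove directly that $S_\ell\cap Free(G_0)$ has treewidth bounded in terms of $G_0$ alone via the ears-and-spiders counting argument, then pick a term $Z_{j^*}$ of the defining sequence avoiding the finitely many graphs $K_{1,4},C_3,\dots,C_\ell,H_1,\dots,H_\ell,G_0$, and conclude with Lemma~\ref{lem:btw}. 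Two small imprecisions should be fixed (degree-$3$ vertices in a graph of $S_\ell$ are only guaranteed to be at distance at least roughly $\ell$ minus a constant, since building the induced $H_d$ uses a few extra vertices; and you should state that the classes in the defining sequence are hereditary, so that not containing a graph as an element implies excluding it as an induced subgraph), but these are easily repaired. The unexecuted hardness reduction is the real defect.
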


\subsection{The limit class}

In order to prove Theorem~\ref{th:boundary}, we first consider the complexity of the $P_k^{**}$-{\sc rcol} problem. We need the following lemma.

\begin{lemma}
\label{lem:period}
Let $G$ be a graph and let $(H,p)$ be in $\{(C_k,k)$,$(P_k,2k-2)$,$(P_k^*,2k-1)$,$(P_k^{**},2k)\}$. Let $\{x,y_1,\ldots,y_{p-1},z\}$ be a path in $G$ such that the degree of $y_i$ is $2$ for all $i$.
Then if $G$ has a valid $H$-role-colouring $r$, we have that $r(x)=r(z)$.
\end{lemma}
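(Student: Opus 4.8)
The plan is to analyze how a role colouring with role graph $H$ behaves along a long induced path of degree-2 vertices, exploiting the rigidity of these particular role graphs $C_k, P_k, P_k^*, P_k^{**}$. The key observation is that in each of these role graphs, once we know the colour of a vertex $w$ and the colour of one of its neighbours in $G$, the colour of a degree-2 vertex $w$ is forced on its remaining neighbour — because in $C_k$ each vertex $i$ has exactly two neighbours $i-1$ and $i+1$ (indices mod $k$); in $P_k$ each internal vertex $i$ has neighbours $i-1,i+1$ while the endpoints $1,k$ have a unique neighbour; and $P_k^*, P_k^{**}$ differ only by loops at the ends, which still leaves each non-loop vertex with at most two distinct neighbour-colours. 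So I would first set up a lemma-internal claim: if $y_i$ has degree $2$ in $G$ with neighbours $a$ and $b$, then in any $H$-role-colouring, the pair $\{r(a),r(b)\}$ together with $r(y_i)$ is constrained so that $r(a)$ and $r(b)$ are the two neighbours of $r(y_i)$ in $H$ — unless $r(y_i)$ is an endpoint $1$ or $k$ of a path-type $H$, in which case both $r(a)=r(b)$ equal the unique neighbour, or (if there is a loop at that endpoint) one of them may also equal $r(y_i)$ itself.

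Next I would track the colour sequence $r(x), r(y_1), \ldots, r(y_{p-1}), r(z)$ as a walk in $H$. Because each $y_i$ has degree exactly $2$, consecutive colours in this sequence are adjacent in $H$, so the sequence is a walk of length $p$ in $H$. The heart of the argument is to show that such a walk, constrained by the local-surjectivity condition at each $y_i$, must in fact be a walk that returns the "phase" to where it started after $p$ steps, i.e. $r(x) = r(z)$. For $H = C_k$: the walk on a $k$-cycle is forced to move by $\pm 1$ at each step, but the local-surjectivity at $y_i$ (its neighbourhood must see both colours $r(y_i)-1$ and $r(y_i)+1$) forces the walk to actually keep moving in a consistent direction — it cannot reverse — so after $k$ steps it has gone once around and $r(x)=r(z)$. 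For the path-type graphs, the walk on $P_k$ is like a reflecting walk on $\{1,\ldots,k\}$ moving $\pm1$ except that it must reflect at the endpoints (forced by local surjectivity, except at an endpoint with a loop where it may also stay); the period of this reflecting walk is $2k-2$ for $P_k$, $2k-1$ for $P_k^*$ (one loop), and $2k$ for $P_k^{**}$ (two loops), which is exactly the value of $p$ chosen in each case. So walking $p$ steps returns to the same colour.

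The step I expect to be the main obstacle is proving that the walk cannot "stall" or "reverse" at an interior vertex — that is, showing the local-surjectivity condition genuinely forces the walk forward. Concretely: if $r(y_i) = c$ is an interior colour of $H$ with neighbours $c-1, c+1$, and $r(y_{i-1}) = c+1$ say, then local surjectivity at $y_i$ demands that $r(N(y_i)) \ni c-1$, and since $N(y_i) = \{y_{i-1}, y_{i+1}\}$ we are forced to have $r(y_{i+1}) = c-1$; this rules out $r(y_{i+1}) = c+1$ and (for loopless interior vertices) $r(y_{i+1}) = c$. Handling the endpoints carefully — where reflection happens, and where loops permit an extra "wait" step that accounts for the $+1$ or $+2$ in the period — is the delicate bookkeeping. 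Once the walk is pinned down to be the unique (up to direction and starting point) periodic walk of period exactly $p$ on $H$, the conclusion $r(x) = r(z)$ follows by reading off positions $0$ and $p$ of the walk. I would organize the final write-up by doing the $C_k$ case first (cleanest), then $P_k$, then noting $P_k^*$ and $P_k^{**}$ as modifications where each loop contributes one extra unit to the period.
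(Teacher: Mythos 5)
Your proposal is correct and follows essentially the same route as the paper: interpret the colour sequence $r(x),r(y_1),\ldots,r(z)$ as a walk in $H$ that is forced by local surjectivity at the degree-$2$ vertices, and observe that such a forced walk is periodic with period exactly $p$ in each of the four role graphs. The paper phrases the forcing as ``a walk that never uses the same edge twice in a row'' and writes out only the $P_k^{**}$ case (declaring the others analogous), whereas you spell out the reflecting/looping behaviour for all four cases; just tighten the phrasing ``may also stay'' at a looped endpoint to ``must stay for exactly one step,'' since that is what local surjectivity gives and what pins the period down.
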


\begin{proof}
We give the proof for the case $(H,p)=(P_k^{**},2k)$, the other cases
are analogous. Let $r$ be a $P_k^{**}$-role colouring of $G$.
Consider a walk in $P_k^{**}$ that never uses the same edge twice in a row. It is easy to see that after $2k$ steps, the walk will return to the initial vertex.
Observe that the colours $r(x),r(y_1),\ldots,r(y_{2k-1}),r(z)$ describe such a walk: the two neighbours of $y_i$ must have different colours. Since this walk is $2k$ steps, we have that $r(x)=r(z)$.
\end{proof}

\begin{proposition}\label{prop:pkstarstar}
For all $k\geq 2$, the $P_k^{**}$-{\sc rcol} problem is hard for graphs of vertex degree at most 3.
\end{proposition}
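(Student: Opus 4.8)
The plan is to reduce from an $\NP$-complete problem—monotone {\sc nae-sat} looks like the natural choice given the preliminaries—to $P_k^{**}$-{\sc rcol} restricted to graphs of maximum degree $3$. The key design principle is Lemma~\ref{lem:period}: in any $P_k^{**}$-role colouring, a path of $2k-1$ internal degree-$2$ vertices forces the two endpoints to receive the same colour. This gives me a ``wire'' gadget: I can transmit a colour along a long induced path without changing it, and by choosing the length of such a path modulo $2k$ I can also force a fixed colour \emph{shift} between two points. So the first step is to set up, for each variable $x_i$, a long cycle-like structure whose colouring along the path is forced to be periodic; the ``state'' of this variable gadget will be which of the $k$ colours appears at a designated anchor vertex, but I will arrange (via an initial gadget that pins down colours locally, using the looped endpoints of $P_k^{**}$) that only two colours are actually attainable at the anchors, encoding $T$ and $F$.

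The second step is the clause gadget. A clause is a degree-$3$ ``junction'': I need a bounded-size gadget, attached to three wires coming from the literals in the clause, such that a valid local surjective homomorphism to $P_k^{**}$ exists through the junction if and only if the three incoming colours are not all equal to the ``true'' value and not all equal to the ``false'' value—i.e.\ exactly the not-all-equal condition. Here the vertex-degree-$3$ restriction is exactly why $P_k^{**}$ (a path with loops at both ends) is convenient: it has vertices of degree $\le 3$ as a role graph, and the local surjectivity constraint at a degree-$3$ vertex of $G$ is a genuine combinatorial constraint on the multiset of three neighbouring colours. The star-like graph $S_{ijk}$ and the $H_i$ graphs defined earlier are presumably the building blocks: $H_i$ contributes the long ``ear'' that enforces periodicity (an ear being a run of degree-$2$ vertices), and the degree-$3$ vertices $u,v$ of $H$ are where wires meet junctions. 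I would route the variable wire so that each occurrence of the variable in a clause is a branch of an $S_{ijk}$-type tree, with the branch lengths chosen $\bmod\ 2k$ so that the colour delivered at the clause is a prescribed function of the variable's value.

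The correctness argument then has the usual two directions. Given a not-all-equal assignment, colour each variable wire periodically according to $T$ or $F$, propagate through every $S_{ijk}$ branch using the length-mod-$2k$ bookkeeping, and check that at each clause junction the three delivered colours are not monochromatic, hence extend to a valid colouring of the junction; conversely, from any $P_k^{**}$-role colouring, Lemma~\ref{lem:period} forces each variable wire to be globally periodic, so it reads off a consistent Boolean value, and the existence of a valid colouring at each clause junction certifies the not-all-equal property. Finally I must verify the reduction is polynomial: the ears have length $O(k)$, which is a constant, and the number of gadgets is linear in $|\phi|$, so the constructed graph has size $O(|\phi|)$ and, by construction, maximum degree $3$.

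The main obstacle I anticipate is the clause gadget: getting a \emph{degree-$\le 3$} junction whose set of locally-surjective-extendable boundary colourings is exactly the not-all-equal set, rather than something slightly larger or smaller, for \emph{every} $k\ge 2$ simultaneously. The loops at the ends of $P_k^{**}$ are the crucial freedom here—a looped endpoint absorbs an arbitrary colour at an adjacent vertex without imposing a ``next colour,'' which is what lets a junction tolerate two equal inputs but not three—so the gadget will need its wires calibrated (again mod $2k$) to feed the junction colours near those looped ends. Checking that no unintended homomorphism sneaks through, and that the anchor vertices genuinely admit only the two intended colours and not a third, is where the careful case analysis will live; the rest is routine ear-length arithmetic.
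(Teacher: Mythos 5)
Your high-level skeleton---reduce from monotone {\sc nae-sat}, use Lemma~\ref{lem:period} to build degree-$2$ ``wires'' whose lengths are calibrated modulo $2k$, encode truth values in which of two colours appears at designated anchor vertices, and meet three wires at a degree-$3$ clause junction---is the same as the paper's reduction. However, there is a genuine gap exactly where you flag ``the main obstacle'': the clause gadget is never constructed, and neither is the mechanism guaranteeing that only two colours are attainable at the anchors (you defer this to an unspecified ``initial gadget that pins down colours locally''). These are the two steps that carry the entire proof, so as written the proposal is a plan rather than a proof.

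The missing idea is that no bespoke calibrated junction is needed at all. In the paper, each variable $x_i$ is a cycle of length $2k(s_i+1)$; one global base cycle of length $2kn$, joined to each variable cycle by a path of $2k-1$ degree-$2$ vertices, forces (via Lemma~\ref{lem:period}) all the vertices $x_i^1$ and $v_i^1$ to share a single colour $c$. Hence every occurrence anchor $x_i^{2pk+2}$ must receive one of the two role-graph neighbours of $c$, say $c_t$ and $c_f$, and for a fixed variable all of its anchors receive the same one (the walk argument behind Lemma~\ref{lem:period}). The clause gadget is then just a single vertex adjacent to one anchor per literal: since every vertex of $P_k^{**}$ has exactly two distinct colours in its neighbourhood (the loops ensure no colour has degree $1$), a clause vertex cannot have a monochromatic neighbourhood, so both $c_t$ and $c_f$ must appear among its neighbours---which is precisely the not-all-equal condition, uniformly for all $k\ge 2$. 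Note also that the reduction never pins $c$ to a specific colour, so your concern about forcing particular colours at anchors (and about ``unintended homomorphisms'' through a junction) dissolves; what matters is only that all anchors are neighbours of one common, arbitrary colour. Until you either supply this observation or exhibit an explicit junction gadget together with a verified analysis of its admissible boundary colourings (and likewise an explicit anchor-synchronisation gadget), the reduction is not established.
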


\begin{proof}
We reduce from monotone {\sc nae-sat}. Let $\phi$ be a boolean formula
such that each of its variables appears positively. We will construct a graph $G_\phi$ such that $\phi$ has a not-all-equal satisfying assignment if and only if $G_\phi$ has a $P_k^{**}$-role-colouring. Furthermore, $G_\phi$ will have vertex degree at most 3.

Suppose $\phi$ has $n$ variables $x_1,\ldots,x_n$ and that $x_i$ appears $s_i$ times in $\phi$ (in some arbitrary order). For each variable $x_i$ we include in $G_\phi$ a cycle $\{ x_i^1,\ldots, x_i^{2k(s_i +1)} \}$.
The vertex set of $G_\phi$ will also include the clauses of $\phi$, in some arbitrary order $\{C_1,C_2,\ldots\}$. Suppose $x_i$ appears for the $pth$ time in clause $C_q$. Then we add an edge connecting $x_i^{2pk+2}$ and $C_q$. We will refer to this part of the graph as the {\em formula gadget}.

Additionally, we add a cycle of length $2kn$. We denote this cycle by $\{ v_i^j \}$ with $1 \leq i \leq n$,$1\leq j \leq 2k$. 
For each $x_i$ we add a path $\{w_i^1,\ldots w_i^{2k-1} \}$ and edges $v_i^1 w_i^1$ and $w_i^{2k-1} x_i^1$. This completes the description of $G_\phi$.

We now argue that $G_\phi$ has a $P_k^{**}$-role-colouring if and only if $\phi$ has a not-all-equal satisfying assignment. Suppose first that $G_\phi$ has such a colouring, $r$. We will say that a set of vertices $X$ is {\em monochromatic} if every vertex in $X$ has the same colour. Observe that by Lemma~\ref{lem:period} the set $\{v_i^1: 1 \leq i \leq n \} \cup \{x_i^1: 1 \leq i \leq n \}$ is monochromatic; let the colour of the vertices in this set be $c$. For each $i$ there are two choices for the colour of $x_i^2$, we call them $c_t$ and $c_f$ (which need not be distinct from $c$). Observe that, for each $i$, we have that $\{x_i^{2pk+2}: 0 \leq p \leq s_i \}$ is monochromatic with colour $c_t$ or $c_f$.

For each clause vertex $C_q$ there are three cases to consider, since these vertices only have neighbours in the sets $\{x_i^{2pk+2}: 0 \leq p \leq s_i \}$. In the first two cases, the neighbourhood of $C_q$ is monochromatic with colour $c_t$ or $c_f$. This is a contradiction since there is no colour that has degree 1 in $P_k^{**}$. In the remaining case, $C_q$ has at least one neighbour of colour $c_t$ and at least one neighbour of colour $c_f$, and $r(C_q)=c$. Now we can obtain a not-all-equal satisfying assignment for $\phi$. Those variables $x_i$ for which  $\{x_i^{2pk+2}: 0 \leq p \leq s_i \}$ has colour $c_b$ can be assigned the truth value $b\in \{t,f\}$. Clearly, each clause has at least one variable assigned $0$ and at least one variable assigned $1$ in this case.

It remains for us to show that a $P_k^{**}$-role-colouring for $G_\phi$ can be constructed from a not-all-equal satisfying assignment for $\phi$. 
Given such an assignment we construct a colouring $r$ as follows. For any vertex $w$ in $\{x_i^1: 1 \leq i \leq n\} \cup \{v_i^1: 1 \leq i \leq n \} \cup \{C_1,C_2,\ldots\}$ we set $r(w)=1$.
Observe that this does not contradict the condition given by Lemma~\ref{lem:period}. 
Since the variable gadget for $x_i$ is a cycle of length a multiple of $2k$, 
we can colour its vertices with colours $1,1,2,3,4,\ldots,k-1,k,k,k-1,\ldots,3,2$ repeated $s_i$ times, 
in one of two possible directions. If $x_i$ is assigned $t$, then we set $r(x_i^2)=1,r(x_i^3)=2$ and so on. 
Otherwise we set $r(x_i^2)=2,r(x_i^3)=3$ and so on. 
Observe that each vertex in the set $\{x_i^{2pk+2}: 0 \leq p \leq s_i \}$ has colour $1$ if $x_i$ is assigned $t$ and with colour $2$ otherwise. By colouring the clause vertices with colour $1$, we have that the colouring is valid for all vertices of the formula gadget.

The cycle in the base gadget is also of length a multiple of $2k$, so we can colour it with colours $1,1,2,3,\ldots,k-1,k,k,k-1,\ldots,3,2$ repeated $n$ times. We restate that $r(v_i^1)=1$ for all $i$. We complete the colouring by giving the vertices $w_i^1,w_i^2,\ldots,w_i^{2k-1}$ the colours $2,3,\ldots,k-1,k,k,k-1,\ldots,3,2,1$ respectively, for all $i$. Since $w_i^1$ has a neighbour $v_i^1$ of colour $1$, and $w_i^{2k-1}$ has a neighbour $x_i^1$ of colour $1$, we see that the colouring is valid for all vertices of the base gadget. This completes the proof. We illustrate this construction in Figure \ref{fig:octopus}, for the boolean formula $\Phi= (x \vee y) \wedge (y \vee z)$, with $k=3$ and role graph $P_3^{**}$.
\end{proof}

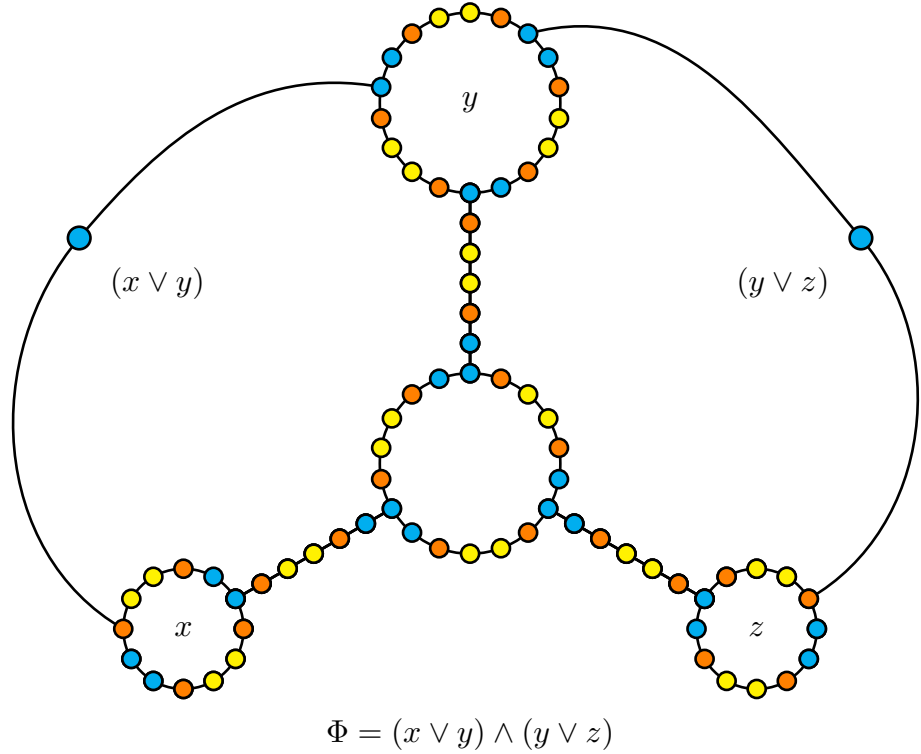
\begin{figure}[!ht]
    \centering
\begin{tikzpicture}[scale=0.6,every node/.style={scale=1.2}]
\draw[black!100,line width=1pt] (0,0) circle (2);
\draw[black!100,line width=1pt] (0,8) circle (2);
	\begin{scope}[style={shift={(-30:6+4/3)}}]
   \draw[black!100,line width=1pt] (0,0) circle (4/3);
   \tkzDefPoint(30:4/3){A}
	\end{scope}
		\begin{scope}[style={shift={(90:8)}}]
   \tkzDefPoint(-90+7*360/18:2){AA}
   \tkzDefPoint(-90+13*360/18:2){BB}
	\end{scope}
		\begin{scope}[style={shift={(210:6+4/3)}}]
   \draw[black!100,line width=1pt] (0,0) circle (4/3);
   \tkzDefPoint(-4/3,0){B}
	\end{scope}
\draw[black!100,line width=1pt](30:10) to[out=-50,in=30] (A);
\draw[black!100,line width=1pt](150:10) to[out=-130,in=150] (B);
\draw[black!100,line width=1pt](30:10) to[out=130,in=15] (AA);
\draw[black!100,line width=1pt](150:10) to[out=50,in=170] (BB);

\foreach \z in {1,2,3}
{
	\draw[fill=cyan,draw=black,line width=1pt] (90+\z*120:2) circle (.2);
	\draw[fill=cyan,draw=black,line width=1pt] (90+\z*120+20:2) circle (.2);
	\draw[fill=orange,draw=black,line width=1pt] (90+\z*120+40:2) circle (.2);
	\draw[fill=yellow,draw=black,line width=1pt] (90+\z*120+60:2) circle (.2);
	\draw[fill=yellow,draw=black,line width=1pt] (90+\z*120+80:2) circle (.2);
	\draw[fill=orange,draw=black,line width=1pt] (90+\z*120+100:2) circle (.2);

	\begin{scope}[style={shift={(90:8)}}]
    \draw[fill=cyan,draw=black,line width=1pt] (-90+\z*120:2) circle (.2);
	\draw[fill=cyan,draw=black,line width=1pt] (-90+\z*120+20:2) circle (.2);
	\draw[fill=orange,draw=black,line width=1pt] (-90+\z*120+40:2) circle (.2);
	\draw[fill=yellow,draw=black,line width=1pt] (-90+\z*120+60:2) circle (.2);
	\draw[fill=yellow,draw=black,line width=1pt] (-90+\z*120+80:2) circle (.2);
	\draw[fill=orange,draw=black,line width=1pt] (-90+\z*120+100:2) circle (.2);
	\end{scope}
	\draw[black!100,line width=1pt] (90:2) -- (90:6);
	\draw[fill=cyan,draw=black,line width=1pt] (90:2+4/6) circle (.2);
	\draw[fill=orange,draw=black,line width=1pt] (90:2+2*4/6) circle (.2);
	\draw[fill=yellow,draw=black,line width=1pt] (90:2+3*4/6) circle (.2);
	\draw[fill=yellow,draw=black,line width=1pt] (90:2+4*4/6) circle (.2);
	\draw[fill=orange,draw=black,line width=1pt] (90:2+5*4/6) circle (.2);
	\draw[fill=cyan,draw=black,line width=1pt] (90:6) circle (.2);

	\draw[black!100,line width=1pt] (210:2) -- (210:6);
	\draw[fill=cyan,draw=black,line width=1pt] (210:2+4/6) circle (.2);
	\draw[fill=orange,draw=black,line width=1pt] (210:2+2*4/6) circle (.2);
	\draw[fill=yellow,draw=black,line width=1pt] (210:2+3*4/6) circle (.2);
	\draw[fill=yellow,draw=black,line width=1pt] (210:2+4*4/6) circle (.2);
	\draw[fill=orange,draw=black,line width=1pt] (210:2+5*4/6) circle (.2);
	\draw[fill=cyan,draw=black,line width=1pt] (210:6) circle (.2);
		\draw[black!100,line width=1pt] (-30:2) -- (-30:6);
	\draw[fill=cyan,draw=black,line width=1pt] (-30:2+4/6) circle (.2);
	\draw[fill=orange,draw=black,line width=1pt] (-30:2+2*4/6) circle (.2);
	\draw[fill=yellow,draw=black,line width=1pt] (-30:2+3*4/6) circle (.2);
	\draw[fill=yellow,draw=black,line width=1pt] (-30:2+4*4/6) circle (.2);
	\draw[fill=orange,draw=black,line width=1pt] (-30:2+5*4/6) circle (.2);
	\draw[fill=cyan,draw=black,line width=1pt] (-30:6) circle (.2);

	\begin{scope}[style={shift={((-30:6+4/3))}}]
    \draw[fill=cyan,draw=black,line width=1pt] (150+\z*180:4/3) circle (.2);
	\draw[fill=cyan,draw=black,line width=1pt] (150+\z*180+30:4/3) circle (.2);
	\draw[fill=orange,draw=black,line width=1pt] (150+\z*180+60:4/3) circle (.2);
	\draw[fill=yellow,draw=black,line width=1pt] (150+\z*180+90:4/3) circle (.2);
	\draw[fill=yellow,draw=black,line width=1pt] (150+\z*180+120:4/3) circle (.2);
	\draw[fill=orange,draw=black,line width=1pt] (150+\z*180+150:4/3) circle (.2);
	\end{scope}

	\begin{scope}[style={shift={((-150:6+4/3))}}]
    \draw[fill=cyan,draw=black,line width=1pt] (30+\z*180:4/3) circle (.2);
	\draw[fill=cyan,draw=black,line width=1pt] (30+\z*180+30:4/3) circle (.2);
	\draw[fill=orange,draw=black,line width=1pt] (30+\z*180+60:4/3) circle (.2);
	\draw[fill=yellow,draw=black,line width=1pt] (30+\z*180+90:4/3) circle (.2);
	\draw[fill=yellow,draw=black,line width=1pt] (30+\z*180+120:4/3) circle (.2);
	\draw[fill=orange,draw=black,line width=1pt] (30+\z*180+150:4/3) circle (.2);
	\end{scope}
}

\draw[fill=cyan,draw=black,line width=1pt] (30:10) circle (.25);
\draw[fill=cyan,draw=black,line width=1pt] (150:10) circle (.25);
\draw[fill=cyan,draw=black,line width=1pt] (90:2) circle (.2);
\draw[fill=cyan,draw=black,line width=1pt] (-30:2) circle (.2);
\draw[fill=cyan,draw=black,line width=1pt] (-150:2) circle (.2);

\draw (0,-6) node{$\Phi= (x \vee y) \wedge (y \vee z)$};
\draw (-150:6+4/3) node{$x$};
\draw (-30:6+4/3) node{$z$};
\draw (90:8) node{$y$};
\draw (150:8) node{$(x \vee y)$};
\draw (30:8) node{$(y \vee z)$};

\end{tikzpicture}
    \caption{Example of $G$ for the boolean formula $\Phi= (x \vee y) \wedge (y \vee z)$, with $k=3$ and role graph $P_3^{**}$. The $P_3^{**}$-role colouring corresponds to the assignment $y=T$ and $x=z=F$.}
    \label{fig:octopus}
\end{figure}

The graph $G_\phi$ may have a $k$-role-colouring with a role graph different to $P_k^{**}$. In particular, there exists a formula $\phi$ with no not-all-equal satisfying assignment that is $k$-role colourable. Thus our construction does not immediately prove that $k$-role-colouring is hard for graphs of vertex degree at most 3. We now show how to overcome this difficulty by modifying $G_\phi$ to fix the role graph. 

Let $H$ be an induced subgraph of $G$ such that exactly one vertex $v$ in $V(H)$ has exactly one neighbour in $V(G)\setminus V(H)$, while all other vertices of $H$ have no neighbours in $V(G)\setminus V(H)$. We say that $H$ is a {\em dangling} induced subgraph of $G$, and that $v$ is the {\em hook} of $H$.

\begin{observation}\label{obs:deg}
Let $r$ be a role colouring of a graph $G$, with role graph $R$. Then the degree of $r(v)$ in $R$ is
at most the degree of $v$ in $G$. Furthermore,
$\delta(G) \leq \delta (R) \leq \Delta(R) \leq \Delta(G)$.
\end{observation}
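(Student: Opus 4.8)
The plan is to derive everything from the fact, recorded just after the definition, that a role colouring is a \emph{locally surjective} homomorphism onto its role graph. Concretely, I would first spell out the identity $N_R(r(v)) = r(N_G(v))$ for every $v\in V(G)$: the defining property of a role colouring says that the colour set $r(N_G(v))$ of $v$'s neighbourhood depends only on $r(v)$, while the definition of $R$ says precisely that a colour $y$ is adjacent to $r(v)$ in $R$ exactly when some vertex coloured $r(v)$ --- equivalently, every such vertex --- has a neighbour coloured $y$. Putting these two facts together gives the stated equality (note that it is an equality, not merely an inclusion in one direction).

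Granting this, the first sentence is immediate, since the image of a set under a map is never larger than the set: $\deg_R(r(v)) = |N_R(r(v))| = |r(N_G(v))| \le |N_G(v)| = \deg_G(v)$. Loops need no special treatment: if $v$ carries a loop then $v\in N_G(v)$ and $r(v)\in N_R(r(v))$, each counted once, and the bound $|r(S)|\le|S|$ still applies verbatim. The reason the inequality can be strict is exactly that distinct neighbours of $v$ may receive the same colour, which is the one subtlety worth stating explicitly.

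For the displayed chain I would then pass to extrema using the per-vertex bound. Every colour $x\in V(R)$ equals $r(v)$ for some vertex $v$ of $G$, so $\deg_R(x)\le\deg_G(v)\le\Delta(G)$, whence $\Delta(R)\le\Delta(G)$; and every vertex $v$ of $G$ has $\deg_G(v)\ge\deg_R(r(v))$, so taking minima over $V(G)$ and over $V(R)$ relates $\delta(G)$ and $\delta(R)$, with the inner inequality $\delta(R)\le\Delta(R)$ being trivial. I do not expect a genuine obstacle: the only thing that requires care is getting the identity $N_R(r(v)) = r(N_G(v))$ from the definitions with the correct direction of inclusion, and keeping straight that it is the role-graph degree that is bounded by the host-graph degree at each vertex, the collapsing of equally coloured neighbours being what accounts for any difference.
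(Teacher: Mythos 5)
Your identity $N_R(r(v)) = r(N_G(v))$ and the per-vertex bound $\deg_R(r(v)) \le \deg_G(v)$ are correct, and so is the deduction $\Delta(R)\le\Delta(G)$; the paper states the observation without proof, and these are exactly the parts it actually uses later (e.g.\ in Lemma~\ref{lem:ear}, where a colour of degree $3$ in $R$ cannot appear on a degree-$2$ vertex of an ear).

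The gap is in the sentence ``taking minima over $V(G)$ and over $V(R)$ relates $\delta(G)$ and $\delta(R)$'': it is left deliberately vague, and if you make it precise it goes the wrong way. From $\deg_R(r(v))\le\deg_G(v)$, applied to a vertex $v$ attaining $\delta(G)$, you get $\delta(R)\le\deg_R(r(v))\le\delta(G)$, i.e.\ $\delta(R)\le\delta(G)$ --- the reverse of the claimed $\delta(G)\le\delta(R)$. No repair is possible, because the claimed inequality is false as stated: take $G=C_4$ with the alternating $2$-role colouring; the role graph is $R=K_2$ (no loops arise), so $\delta(G)=2$ while $\delta(R)=1$. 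The same phenomenon occurs throughout the paper's own constructions, e.g.\ any $P_k$-role colouring of a long cycle has $\delta(G)=2>1=\delta(R)$. So the first inequality in the displayed chain is evidently a slip (presumably $\delta(R)\le\delta(G)$ was intended, which your per-vertex bound does give); your proof should either prove that corrected statement or explicitly flag the counterexample, rather than pass over the direction of the inequality, since the equally-coloured-neighbour collapsing you mention can shrink the $R$-degree of the colour of even a minimum-degree vertex.
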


\begin{lemma}
\label{lem:ear}
Let $G$ have an ear of length at least $2k$ and suppose $G$ is $k$-role-colourable with connected role graph $R$. Then $R\in\{C_k,P_k,P_k^*,P_k^{**}\}$.
\end{lemma}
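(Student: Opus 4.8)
The plan is to prove that every vertex of $R$ has degree at most $2$. Since $R$ is connected and loops are allowed, such a graph must be a path carrying at most one loop at each of its two ends, i.e.\ a member of $\{C_k,P_k,P_k^*,P_k^{**}\}$, which is the desired conclusion.

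Write the ear as a path $v_1v_2\cdots v_m$ with $\deg_G(v_i)=2$ for every $i$ and $m\ge 2k$, and set $c_i=r(v_i)$ and $S=\{c_1,\dots,c_m\}$. I would first record three facts. (a) By Observation~\ref{obs:deg}, $\deg_R(c)\le 2$ for every colour $c\in S$. (b) For each $i$ the colours $c_i$ and $c_{i+1}$ are equal or adjacent in $R$, so all of $c_1,\dots,c_m$ lie in a single component of $R[S]$; hence $R[S]$ is connected, and by (a) it is one of $C_\ell,P_\ell,P_\ell^*,P_\ell^{**}$ where $\ell=|S|$. (c) For an internal ear vertex $v_i$ (that is, $2\le i\le m-1$) we have $N_G(v_i)=\{v_{i-1},v_{i+1}\}$, so local surjectivity of $r$ gives $N_R(c_i)=\{c_{i-1},c_{i+1}\}$; in particular every colour occurring at an internal ear vertex has all of its $R$-neighbours in $S$.

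By (c), the term $c_{i+1}$ is determined by the pair $(c_{i-1},c_i)$: it is the neighbour of $c_i$ in $R[S]$ other than $c_{i-1}$, or $c_{i-1}$ itself when $\deg_R(c_i)=1$. Thus the walk $c_1,c_2,\dots$ follows exactly the local rule used in the proof of Lemma~\ref{lem:period}, and since that rule is an invertible map on the (finitely many) oriented edges of $R[S]$, the walk is \emph{purely} periodic, with period $\ell$, $2\ell-2$, $2\ell-1$, or $2\ell$ according as $R[S]$ is $C_\ell$, $P_\ell$, $P_\ell^*$, or $P_\ell^{**}$. In each case the period is at most $2\ell$ and one period of the walk already visits every vertex of $R[S]$, so any $2\ell$ consecutive terms of the walk contain all of $V(R[S])$.

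It remains to show $S=V(R)$, for then $R=R[S]$ and we are done by the first paragraph. Suppose $S\subsetneq V(R)$, so $\ell\le k-1$. The internal part of the ear, $v_2,\dots,v_{m-1}$, has $m-2\ge 2k-2\ge 2\ell$ vertices, so by the periodicity above $\{c_2,\dots,c_{m-1}\}=V(R[S])=S$. As $R$ is connected there is an edge $ss'$ of $R$ with $s\in S$ and $s'\notin S$; writing $s=c_i$ with $2\le i\le m-1$, fact (c) gives $s'\in N_R(c_i)=\{c_{i-1},c_{i+1}\}\subseteq S$, a contradiction; hence $S=V(R)$. The genuinely delicate step here is the periodicity claim in the previous paragraph: one must check that the local rule is well defined and invertible on the oriented edges of each of the four shapes of $R[S]$ (so that the walk is purely, not merely eventually, periodic) and evaluate the four periods --- and it is exactly this that forces us to assume $m\ge 2k$, so that a whole period of the walk lies strictly inside $v_2,\dots,v_{m-1}$. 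The degenerate shapes $\ell\in\{1,2\}$ (a single looped vertex, or a loop-bearing edge) present no difficulty, since a connected graph on at least two vertices has no vertex whose only incident edge is a loop, which forces $S=V(R)$ directly in those cases.
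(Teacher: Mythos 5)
Your proof is correct, but it follows a genuinely different route from the paper's. The paper argues by contradiction: assuming $R$ has a colour $c_1$ of degree at least $3$, it notes (via Observation~\ref{obs:deg}) that no ear vertex can receive $c_1$, and then runs an induction on distance layers: for each $j$, the ear vertices in the shrinking middle window carry colours at distance more than $j$ from $c_1$, so the colour of the central ear vertex would lie at distance more than $k$ from $c_1$, which is impossible in a connected graph on $k$ vertices. You instead never single out a high-degree colour: you show that the set $S$ of colours used on the ear induces a connected subgraph of maximum degree at most $2$, that the colour sequence along the ear is the non-backtracking walk of Lemma~\ref{lem:period} on $R[S]$ and hence is periodic with period at most $2|S|$, so (under the assumption $S\subsetneq V(R)$) every colour of $S$ already appears at an interior ear vertex; local surjectivity at interior vertices then makes $S$ closed under $R$-neighbourhoods, contradicting connectedness of $R$, so $S=V(R)$ and the degree bound finishes the proof. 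Your version is longer and leans on the walk-periodicity machinery (essentially re-proving a refined form of Lemma~\ref{lem:period} for all four shapes), but it buys a stronger conclusion -- every colour of $R$ occurs in the interior of the ear -- and avoids the somewhat terse distance-layer induction of the paper; the paper's argument is more local and shorter. One small remark: your closing aside on the degenerate shapes $\ell\in\{1,2\}$ is not really needed (and the claim that the loop observation ``forces $S=V(R)$ directly'' is not actually argued for $\ell=2$); these cases are already covered by your main periodicity-plus-closure argument, since under the hypothesis $\ell\le k-1$ the interior of the ear still has at least $2\ell$ vertices.
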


\begin{proof}

Suppose the contrary. Then there must be a colour $c_1$ in $R$ of degree at least $3$. 
Let $v_1,\ldots,v_{2k}$ be the vertices of an ear. By Observation~\ref{obs:deg},
no vertex of the ear has colour $c_1$. 
For some $j$, assume that no vertex in $v_i$ with $j+1 \leq i \leq 2k-j$
has a colour of distance at most $j$ from $c_1$ in $R$.
If there is no colour of distance $j+1$ from $c_1$ in $R$ then
we have reached a contradiction.
Otherwise, let $c'$ be such a colour, and observe that 
if a vertex $v_i$ with $j+2 \leq i \leq 2k-j-1$ has colour $c'$
then $c'$ has degree 2 in $R$, and one of the two neighbours 
of $v_i$ necessarily has a colour of distance $j$ from $R$,
a contradiction.
By induction, we see that $r(v_{k+1})$ is at distance
strictly more than $k$ from $c_1$, which gives a contradiction
that completes the proof.

\end{proof}

\begin{lemma}
\label{lem:dang}
Let $G$ be a graph with a dangling induced subgraph $H$ isomorphic to $C_{2k-1}$,
and suppose that $G$ is $k$-role-colourable with connected role graph $R$. Then $R\in\{P_k^*,P_k^{**}\}$.
\end{lemma}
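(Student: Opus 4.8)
The plan is to pin down $R$ in three steps: first show $\Delta(R)\le 2$; then deduce $R\in\{P_k,C_k,P_k^{*},P_k^{**}\}$ from the structure of connected graphs of maximum degree $2$; and finally rule out $P_k$ and $C_k$. Write $H=C_{2k-1}$ as a cycle $u_0,u_1,\dots,u_{2k-2}$ with $u_0$ the hook, and let $w$ be the unique neighbour of $u_0$ outside $H$, so that $u_1,\dots,u_{2k-2}$ have degree $2$ in $G$ while $u_0$ has degree $3$. Setting $\rho(i)=r(u_i)$, the images trace a closed walk of length $2k-1$ in $R$, and the colouring constraint at each degree-$2$ vertex $u_i$ ($i\ne 0$) reads $N_R(\rho(i))=\{\rho(i-1),\rho(i+1)\}$, with indices taken mod $2k-1$ (so the neighbours of $u_{2k-2}$ are $u_{2k-3}$ and $u_0$).

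For the first step I would argue by contradiction: suppose some colour $c_1$ has $\deg_R(c_1)\ge 3$. By Observation~\ref{obs:deg} no $u_i$ with $i\ne 0$ can carry colour $c_1$, so $d_i:=\operatorname{dist}_R(\rho(i),c_1)\ge 1$ for $i\ne 0$. Each $\rho(i)$ with $d_i\ge 1$ has a neighbour at distance $d_i-1$ from $c_1$, and that neighbour lies in $N_R(\rho(i))=\{\rho(i-1),\rho(i+1)\}$; combining this with $d_i\le\min(d_{i-1},d_{i+1})+1$ (adjacency is preserved by $r$) yields the discrete eikonal equation $d_i=\min(d_{i-1},d_{i+1})+1$ for all $i\ne 0$. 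It is immediate that $d$ attains its minimum only at position $0$ (a minimiser $i^{*}\ne 0$ would give $d_{i^{*}}\ge d_{i^{*}}+1$), and then a routine induction forces $d_i=d_0+\min(i,2k-1-i)$ for every $i$, so $\max_i d_i=d_0+(k-1)$. As $R$ has $k$ vertices its diameter is at most $k-1$, hence $d_0=0$ and $\rho(0)=c_1$. Now $\rho(0),\rho(1),\dots,\rho(k-1)$ is a geodesic of length $k-1$ that exhausts $V(R)$, so $\rho(1)$ is the unique vertex at distance $1$ from $c_1$; since $d_{2k-2}=1$ this forces $\rho(2k-2)=\rho(1)$, and therefore $N_R(c_1)=r(N(u_0))=\{\rho(1),\rho(2k-2),r(w)\}$ has at most two elements --- contradicting $\deg_R(c_1)\ge 3$. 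So $\Delta(R)\le 2$.

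For the second step, a connected graph on $k\ge 2$ vertices with maximum degree at most $2$ has a path or a cycle as underlying simple graph, and a loop anywhere except at a path endpoint would create a vertex of degree $3$; hence $R\in\{P_k,C_k,P_k^{*},P_k^{**}\}$. For the third step I would invoke Lemma~\ref{lem:period}. If $R=P_k$, applying it with $(P_k,2k-2)$ to the subpath $u_0,u_1,\dots,u_{2k-2}$ of $G$ (whose $2k-3$ interior vertices all have degree $2$) gives $r(u_0)=r(u_{2k-2})$; but $u_0u_{2k-2}$ is an edge, so this colour would be looped in $P_k$, which has no loops. If $R=C_k$, applying Lemma~\ref{lem:period} with $(C_k,k)$ to $u_0,u_1,\dots,u_k$ and to $u_0,u_{2k-2},u_{2k-3},\dots,u_k,u_{k-1}$ gives $r(u_0)=r(u_k)=r(u_{k-1})$; since $u_{k-1}u_k$ is an edge, $C_k$ would carry a loop --- again impossible. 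Therefore $R\in\{P_k^{*},P_k^{**}\}$.

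I expect the first step to be the main obstacle: the ear $u_1,\dots,u_{2k-2}$ has length $2k-2$, one pair of vertices shy of what Lemma~\ref{lem:ear} requires, so the missing leverage must come from the cyclic structure of $H$ --- bounding the eikonal function on the odd cycle $C_{2k-1}$ by $\operatorname{diam}(R)$ pins its minimiser exactly on $c_1$, after which the single outside neighbour $w$ of the hook cannot realise the third neighbour of $c_1$. The points needing care are the verification of the eikonal equation (including the constraint at $u_{2k-2}$, whose cyclic neighbour is $u_0$) and the induction identifying $d_i$ with cyclic distances; the remaining steps are routine once Lemma~\ref{lem:period} is available.
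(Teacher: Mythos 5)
Your proof is correct, but it takes a genuinely different route from the paper's at the decisive step. The paper gets $R\in\{C_k,P_k,P_k^{*},P_k^{**}\}$ by citing Lemma~\ref{lem:ear}, dismisses $C_k$ with a one-line remark, and kills $P_k$ via Lemma~\ref{lem:period} exactly as you do. You instead prove $\Delta(R)\le 2$ from scratch with the distance function $d_i=\operatorname{dist}_R(\rho(i),c_1)$ on the odd cycle: the eikonal identity at the degree-$2$ vertices, the unique minimiser at the hook, the bound $\max_i d_i=d_0+(k-1)\le\operatorname{diam}(R)\le k-1$, and the resulting collapse $\rho(2k-2)=\rho(1)$ which leaves $c_1$ with at most two neighbours. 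This is more than a stylistic difference: as you note, the dangling $C_{2k-1}$ only supplies an ear of $2k-2$ degree-$2$ vertices, two short of the hypothesis of Lemma~\ref{lem:ear}, so the paper's appeal to that lemma is only justified because the graphs it is eventually applied to happen to contain long ears elsewhere; your argument makes the lemma self-contained as stated, for an arbitrary $G$ with such a dangling cycle. Your elimination of $C_k$ (two applications of Lemma~\ref{lem:period} along the two arcs of the cycle, forcing a loop which $C_k$ lacks) also makes explicit what the paper asserts as ``clearly''. The one place you gesture rather than prove is the ``routine induction'' giving $d_i=d_0+\min(i,2k-1-i)$; it does hold, most cleanly by following from any $i\ne 0$ a strictly descending path (which exists by the eikonal identity and can only terminate at position $0$), whose length $d_i-d_0$ is at least the cyclic distance to $0$, matched by the Lipschitz upper bound. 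In exchange for being longer, your argument is tighter and would survive in settings where no $2k$-vertex ear is available; the paper's is shorter but leans on its surrounding construction.
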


\begin{proof}
Let $V(H)=\{v_1,\ldots,v_{p(k)}\}$ and let $v_1$ be the hook of $H$. 
By Lemma~\ref{lem:ear}, $R\in\{C_k,P_k,P_k^*,P_k^{**}\}$. 
Clearly there is no way to colour the vertices of $H$ such
that the role colouring is a $C_{2k}$.
By Lemma~\ref{lem:period},
if $R=P_k$, then $v_1$ and $v_{p(k)}$ have the same colour, which
is a contradiction. Figure \ref{fig:dang} illustrates two copies of $H$ for $k=3$, one that is $P_3^*$-role coloured, and one that is $P_3^{**}$-role coloured.
\end{proof}

\begin{figure}[!ht]
    \centering

\begin{tikzpicture}[scale=.7,every node/.style={scale=1}]

  	\begin{scope}[style={shift={(0,0)}}]
    \draw[fill=black!10,draw=black!20,line width=0pt] (0,0) circle (1.5);
  \draw[black!100,line width=1pt] (0,0) circle (1);
\draw[black!100,line width=1pt,dashed] (0,1) -- (0,2);
\draw[black!100,line width=1pt,dashed] (0,2) -- (1,3);
\draw[black!100,line width=1pt,dashed] (0,2) -- (-1,3);
\draw[fill=cyan,draw=black,line width=1pt] (0,2) circle (.2);
  \draw[fill=cyan,draw=black,line width=1pt] (90+0*72:1) circle (.2);
  \draw[fill=cyan,draw=black,line width=1pt] (90+1*72:1) circle (.2);
  \draw[fill=orange,draw=black,line width=1pt] (90+2*72:1) circle (.2);
  \draw[fill=yellow,draw=black,line width=1pt] (90+3*72:1) circle (.2);
  \draw[fill=orange,draw=black,line width=1pt] (90+4*72:1) circle (.2);
  \draw (0,0) node{$H$};
  \draw (0,-2) node{$R=P_3^*$};
	\end{scope}

  	\begin{scope}[style={shift={(4,0)}}]
    \draw[fill=black!10,draw=black!20,line width=0pt] (0,0) circle (1.5);
  \draw[black!100,line width=1pt] (0,0) circle (1);
\draw[black!100,line width=1pt,dashed] (0,1) -- (0,2);
\draw[black!100,line width=1pt,dashed] (0,2) -- (1,3);
\draw[black!100,line width=1pt,dashed] (0,2) -- (-1,3);
\draw[fill=cyan,draw=black,line width=1pt] (0,2) circle (.2);
  \draw[fill=cyan,draw=black,line width=1pt] (90+0*72:1) circle (.2);
  \draw[fill=orange,draw=black,line width=1pt] (90+1*72:1) circle (.2);
  \draw[fill=yellow,draw=black,line width=1pt] (90+2*72:1) circle (.2);
  \draw[fill=yellow,draw=black,line width=1pt] (90+3*72:1) circle (.2);
  \draw[fill=orange,draw=black,line width=1pt] (90+4*72:1) circle (.2);
  \draw (0,0) node{$H$};
  \draw (0,-2) node{$R=P_3^{**}$};
	\end{scope}
\end{tikzpicture}
    \caption{Examples $H$ for $k=3$, with a $P_k^*$-role colouring and a $P_k^{**}$-role colouring, respectively.}
    \label{fig:dang}
\end{figure}
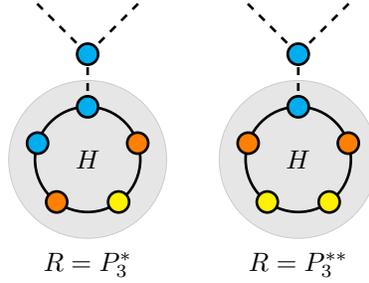

\begin{lemma}
\label{lem:dang2}
Let $p(k)$ as before. Let $G$ be a graph with two dangling induced subgraphs $H,H'$,
each isomorphic to $C_{2k}$, with respective hooks $v_1,v'_1$. Let $u$ be the unique neighbour of $v_1$ in $V(G)\setminus V(H)$ and let $u'$ be the unique neighbour of $v'_1$
in $V(G)\setminus V(H')$. Let $u$ and $u'$ be adjacent in $G$. Suppose $r$ is a $k$-role-colouring of $G$ with connected role graph $R$. Then $R\in\{C_k,P_k,P_k^{**}\}$
\end{lemma}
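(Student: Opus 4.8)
The plan is to invoke Lemma~\ref{lem:ear} to cut the possibilities for $R$ down to $\{C_k,P_k,P_k^*,P_k^{**}\}$, and then to use the edge $uu'$ to eliminate $R=P_k^*$. First I would fix notation: write $V(H)=\{v_1,\ldots,v_{2k}\}$ with $v_1$ the hook and the cyclic labelling chosen so that the two $H$-neighbours of $v_1$ are $v_2$ and $v_{2k}$; likewise $V(H')=\{v'_1,\ldots,v'_{2k}\}$ with $v'_1$ the hook adjacent in $H'$ to $v'_2$ and $v'_{2k}$. The dangling hypothesis then says $N_G(v_1)=\{v_2,v_{2k},u\}$, $N_G(v'_1)=\{v'_2,v'_{2k},u'\}$, and every vertex of $H$ (resp.\ $H'$) other than its hook has degree $2$ in $G$.

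Just as in the proof of Lemma~\ref{lem:dang}, applying Lemma~\ref{lem:ear} to the dangling $C_{2k}$ gives $R\in\{C_k,P_k,P_k^*,P_k^{**}\}$, so it remains to derive a contradiction from $R=P_k^*$. Assume this. The heart of the argument is a double application of Lemma~\ref{lem:period} with the pair $(P_k^*,2k-1)$ inside $H$: the path $v_1v_2\cdots v_{2k}$ has exactly the $2k-2$ interior vertices $v_2,\ldots,v_{2k-1}$, all of degree $2$, so the lemma gives $r(v_1)=r(v_{2k})$; the path $v_2v_3\cdots v_{2k}v_1$ has interior vertices $v_3,\ldots,v_{2k}$, again all of degree $2$, so the lemma gives $r(v_2)=r(v_1)$. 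Thus $r(v_1)=r(v_2)=r(v_{2k})$, and since $v_1v_2$ is an edge with both endpoints of colour $r(v_1)$, the role graph $R=P_k^*$ must have a loop at $r(v_1)$; as its unique loop is at the colour $k$, we conclude $r(v_1)=r(v_2)=r(v_{2k})=k$.

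Next I would read off the local-surjectivity condition at the hook: $\{r(v_2),r(v_{2k}),r(u)\}=N_{P_k^*}(k)=\{k-1,k\}$, and since $r(v_2)=r(v_{2k})=k$ this forces $r(u)=k-1$. Running the identical argument on $H'$ gives $r(u')=k-1$. But $u$ and $u'$ are adjacent, so $R$ would have to contain a loop at the colour $k-1$, which is impossible because $k-1\neq k$ and $P_k^*$ has its only loop at $k$. Hence $R\neq P_k^*$, and therefore $R\in\{C_k,P_k,P_k^{**}\}$.

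The one step that needs care is the double application of Lemma~\ref{lem:period}: one has to recognise that the period attached to $P_k^*$ is $2k-1$ and that a dangling $C_{2k}$ offers precisely two interior-degree-$2$ paths of the right length, which is exactly what collapses $v_1,v_2,v_{2k}$ onto the single looped colour $k$; everything afterwards — reading off $r(u)=r(u')=k-1$ and invoking the edge $uu'$ — is immediate. I would also note in passing that the scheme genuinely distinguishes $P_k^*$ from $P_k^{**}$: for $P_k^{**}$ the relevant period is $2k$, the analogous paths in $C_{2k}$ fall one vertex short of what Lemma~\ref{lem:period} requires, and indeed $P_k^{**}$ survives.
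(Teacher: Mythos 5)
Your proof is correct and follows essentially the same route as the paper: Lemma~\ref{lem:ear} narrows $R$ to $\{C_k,P_k,P_k^*,P_k^{**}\}$, Lemma~\ref{lem:period} applied inside each dangling $C_{2k}$ forces $r(v_1)=r(v_2)=r(v_{2k})=k$ when $R=P_k^*$ (the paper states this more tersely, omitting the explicit loop justification you supply), and local surjectivity at the hooks gives $r(u)=r(u')=k-1$, contradicting the edge $uu'$. Your extra remarks on the period $2k-1$ versus $2k$ and on why $P_k^{**}$ survives are accurate but not needed.
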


\begin{proof}
Let $V(H)=\{v_1,\ldots,v_{p(k)}\}$ and $V(H')=\{v'_1,\ldots,v'_{2k}\}$. By Lemma~\ref{lem:ear}, $R \in \{C_k,P_k,P_k^*,P_k^{**} \}$. By Lemma~\ref{lem:period}, if $R=P_k^*$ then $v_1$ and $v_{p(k)}$ have the same colour; i.e., they have colour $k$. Similarly,
$v_2$ has colour $k$. For the same reason, $v'_1,v'_2$ and $v'_{2k}$ have colour $k$.
Since $r$ is a $P_k^*$-colouring, $v_1$ must have a neighbour of colour $k-1$. 
We must have that $r(u)=k-1$ and symmetrically, $r(u')=k-1$, which gives a contradiction. Figure \ref{fig:dang2} illustrates three examples of this construction for $k=3$: one that is $C_3$-role coloured, one that is $P_3^*$-role coloured and one that is $P_3^{**}$-role coloured.
\end{proof}

\begin{figure}[!ht]
    \centering
   \begin{tikzpicture}[scale=.6,every node/.style={scale=1}]
  \draw[fill=black!10,draw=black!20,line width=0pt] (0,0) circle (1.5);
    \draw[fill=black!10,draw=black!20,line width=0pt] (1,4) circle (1.5);
  \draw[black!100,line width=1pt] (0,0) circle (1);
\draw[black!100,line width=1pt,dashed] (0,1) -- (0,2);
\draw[black!100,line width=1pt,dashed] (0,2) -- (1,2);
\draw[black!100,line width=1pt,dashed] (1,2) -- (2,1);
\draw[black!100,line width=1pt,dashed] (0,2) -- (-1,3);
\draw[black!100,line width=1pt,dashed] (1,2) -- (1,3);
\draw[fill=yellow,draw=black,line width=1pt] (0,2) circle (.2);
\draw[fill=orange,draw=black,line width=1pt] (1,2) circle (.2);
  \draw[fill=cyan,draw=black,line width=1pt] (90+0*60:1) circle (.2);
  \draw[fill=orange,draw=black,line width=1pt] (90+1*60:1) circle (.2);
  \draw[fill=yellow,draw=black,line width=1pt] (90+2*60:1) circle (.2);
  \draw[fill=cyan,draw=black,line width=1pt] (90+3*60:1) circle (.2);
  \draw[fill=orange,draw=black,line width=1pt] (90+4*60:1) circle (.2);
    \draw[fill=yellow,draw=black,line width=1pt] (90+5*60:1) circle (.2);
    	\begin{scope}[style={shift={(1,4)}}]
    \draw[black!100,line width=1pt] (0,0) circle (1);
    \draw[fill=cyan,draw=black,line width=1pt] (90+0*60:1) circle (.2);
  \draw[fill=orange,draw=black,line width=1pt] (90+1*60:1) circle (.2);
  \draw[fill=yellow,draw=black,line width=1pt] (90+2*60:1) circle (.2);
  \draw[fill=cyan,draw=black,line width=1pt] (90+3*60:1) circle (.2);
  \draw[fill=orange,draw=black,line width=1pt] (90+4*60:1) circle (.2);
    \draw[fill=yellow,draw=black,line width=1pt] (90+5*60:1) circle (.2);
    	\end{scope}
  \draw (0,0) node{$H$};
   \draw (1,4) node{$H$};
  \draw (0,-2) node{$R=C_3$};

  	\begin{scope}[style={shift={(5,0)}}]
     \draw[fill=black!10,draw=black!20,line width=0pt] (0,0) circle (1.5);
    \draw[fill=black!10,draw=black!20,line width=0pt] (1,4) circle (1.5);
  \draw[black!100,line width=1pt] (0,0) circle (1);
\draw[black!100,line width=1pt,dashed] (0,1) -- (0,2);
\draw[black!100,line width=1pt,dashed] (0,2) -- (1,2);
\draw[black!100,line width=1pt,dashed] (1,2) -- (2,1);
\draw[black!100,line width=1pt,dashed] (0,2) -- (-1,3);
\draw[black!100,line width=1pt,dashed] (1,2) -- (1,3);
\draw[fill=orange,draw=black,line width=1pt] (0,2) circle (.2);
\draw[fill=orange,draw=black,line width=1pt] (1,2) circle (.2);
  \draw[fill=cyan,draw=black,line width=1pt] (90+0*60:1) circle (.2);
  \draw[fill=orange,draw=black,line width=1pt] (90+1*60:1) circle (.2);
  \draw[fill=yellow,draw=black,line width=1pt] (90+2*60:1) circle (.2);
  \draw[fill=orange,draw=black,line width=1pt] (90+3*60:1) circle (.2);
  \draw[fill=cyan,draw=black,line width=1pt] (90+4*60:1) circle (.2);
    \draw[fill=orange,draw=black,line width=1pt] (90+5*60:1) circle (.2);
    	\begin{scope}[style={shift={(1,4)}}]
    \draw[black!100,line width=1pt] (0,0) circle (1);
    \draw[fill=orange,draw=black,line width=1pt] (90+0*60:1) circle (.2);
  \draw[fill=cyan,draw=black,line width=1pt] (90+1*60:1) circle (.2);
  \draw[fill=orange,draw=black,line width=1pt] (90+2*60:1) circle (.2);
  \draw[fill=cyan,draw=black,line width=1pt] (90+3*60:1) circle (.2);
  \draw[fill=orange,draw=black,line width=1pt] (90+4*60:1) circle (.2);
    \draw[fill=yellow,draw=black,line width=1pt] (90+5*60:1) circle (.2);
    	\end{scope}
  \draw (0,0) node{$H$};
   \draw (1,4) node{$H$};
  \draw (0,-2) node{$R=P_3$};

	\end{scope}

  	\begin{scope}[style={shift={(10,0)}}]
    \draw[fill=black!10,draw=black!20,line width=0pt] (0,0) circle (1.5);
    \draw[fill=black!10,draw=black!20,line width=0pt] (1,4) circle (1.5);
  \draw[black!100,line width=1pt] (0,0) circle (1);
\draw[black!100,line width=1pt,dashed] (0,1) -- (0,2);
\draw[black!100,line width=1pt,dashed] (0,2) -- (1,2);
\draw[black!100,line width=1pt,dashed] (1,2) -- (2,1);
\draw[black!100,line width=1pt,dashed] (0,2) -- (-1,3);
\draw[black!100,line width=1pt,dashed] (1,2) -- (1,3);
\draw[fill=cyan,draw=black,line width=1pt] (0,2) circle (.2);
\draw[fill=orange,draw=black,line width=1pt] (1,2) circle (.2);
  \draw[fill=cyan,draw=black,line width=1pt] (90+0*60:1) circle (.2);
  \draw[fill=cyan,draw=black,line width=1pt] (90+1*60:1) circle (.2);
  \draw[fill=orange,draw=black,line width=1pt] (90+2*60:1) circle (.2);
  \draw[fill=yellow,draw=black,line width=1pt] (90+3*60:1) circle (.2);
  \draw[fill=yellow,draw=black,line width=1pt] (90+4*60:1) circle (.2);
    \draw[fill=orange,draw=black,line width=1pt] (90+5*60:1) circle (.2);
    	\begin{scope}[style={shift={(1,4)}}]
    \draw[black!100,line width=1pt] (0,0) circle (1);
    \draw[fill=yellow,draw=black,line width=1pt] (90+0*60:1) circle (.2);
  \draw[fill=yellow,draw=black,line width=1pt] (90+1*60:1) circle (.2);
  \draw[fill=orange,draw=black,line width=1pt] (90+2*60:1) circle (.2);
  \draw[fill=cyan,draw=black,line width=1pt] (90+3*60:1) circle (.2);
  \draw[fill=cyan,draw=black,line width=1pt] (90+4*60:1) circle (.2);
    \draw[fill=orange,draw=black,line width=1pt] (90+5*60:1) circle (.2);
    	\end{scope}
  \draw (0,0) node{$H$};
   \draw (1,4) node{$H$};
  \draw (0,-2) node{$R=P_3^{**}$};

	\end{scope}
\end{tikzpicture}
    \caption{Examples $2H$ for $k=3$, with a $C_3$-role colouring, a $P_3^*$-role colouring and a $P_3^{**}$-role colouring, respectively.}
    \label{fig:dang2}
\end{figure}
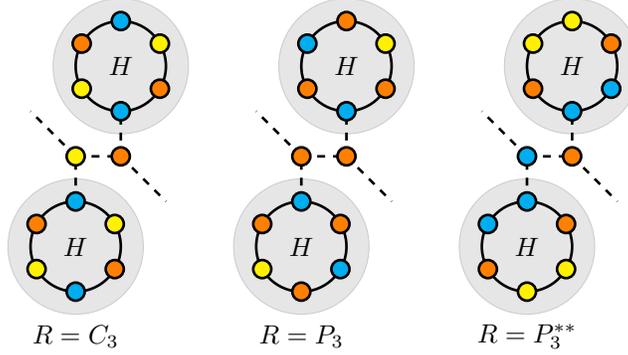

\begin{proposition}\label{prop:krole}
For all $k\geq 2$, the $k$-{\sc rcol} problem is hard for graphs of vertex degree at most 3

\end{proposition}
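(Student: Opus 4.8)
The plan is to reduce from monotone {\sc nae-sat} as in Proposition~\ref{prop:pkstarstar}, but to repair the graph $G_\phi$ built there so that the role graph of \emph{any} $k$-role-colouring of the new graph is forced to equal $P_k^{**}$; then having a $k$-role-colouring is equivalent to having a $P_k^{**}$-role-colouring, which by Proposition~\ref{prop:pkstarstar} is equivalent to $\phi$ having a not-all-equal satisfying assignment, and monotone {\sc nae-sat} is $\NP$-complete.

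First I would normalise $\phi$: we may assume every variable occurs, every clause has at least two literals (a one-literal clause is never not-all-equal satisfiable), and $\phi$ has at least two variables (otherwise $\phi$ is never not-all-equal satisfiable). Then $\delta(G_\phi)=2$, and each variable cycle contains an ear of length $2k$ (the degree-$2$ vertices strictly between $x_i^1$ and $x_i^{2k+2}$), so Lemma~\ref{lem:ear} applies to $G_\phi$. Next I would lengthen the path $w_1^1,\dots,w_1^{2k-1}$ to $w_1^1,\dots,w_1^{4k-1}$; then $w_1^{2k}$ is the common endpoint of two applications of Lemma~\ref{lem:period} with $(H,p)=(P_k^{**},2k)$, one along $v_1^1,\dots,w_1^{2k}$ and one along $w_1^{2k},\dots,x_1^1$, and these require only the \emph{interior} vertices of the two subpaths to have degree $2$. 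Hence $w_1^{2k}$ may be given degree $3$ without disturbing any step of the proof of Proposition~\ref{prop:pkstarstar}. Finally I would attach at $w_1^{2k}$, by a single edge, a small appendage $F$: two paths whose lengths are multiples of $2k$, joined at two degree-$3$ branch vertices, from which hang one dangling induced $C_{2k-1}$ and two dangling induced $C_{2k}$'s, the two $C_{2k}$'s being hung from a pair of adjacent vertices as required by Lemma~\ref{lem:dang2}. Call the result $G'_\phi$; it is connected, has $\Delta(G'_\phi)=3$ and $\delta(G'_\phi)=2$, and still contains an ear of length $2k$ (on a variable cycle other than the first).

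If $r$ is a $k$-role-colouring of $G'_\phi$ with role graph $R$, then $R$ is connected on $k$ vertices (Observation~\ref{obs:connected}) with $2\le\delta(R)\le\Delta(R)\le 3$ (Observation~\ref{obs:deg}); by Lemma~\ref{lem:ear}, $R\in\{C_k,P_k,P_k^*,P_k^{**}\}$, and since $\delta(P_k)=\delta(P_k^*)=1$ we get $R\in\{C_k,P_k^{**}\}$, while Lemma~\ref{lem:dang} applied to the dangling $C_{2k-1}$ in $F$ forces $R\in\{P_k^*,P_k^{**}\}$; therefore $R=P_k^{**}$. (Lemmas~\ref{lem:dang} and~\ref{lem:dang2} together exclude $C_k,P_k,P_k^*$ even without the degree remark.) Since $r$ is then a $P_k^{**}$-role-colouring and every invocation of Lemma~\ref{lem:period} used in the proof of Proposition~\ref{prop:pkstarstar} still applies in $G'_\phi$ (the base cycle, the variable cycles, and the $w_i$-paths for $i\ge 2$ are untouched, and $w_1$ was handled above), that proof shows $\phi$ has a not-all-equal satisfying assignment. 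Conversely, from such an assignment Proposition~\ref{prop:pkstarstar} gives a $P_k^{**}$-role-colouring of $G_\phi$; colouring the lengthened $w_1$-path with two consecutive copies of the period-$2k$ bouncing pattern keeps $r(w_1^{2k})$ at the common colour of the $x_i^1$, and then $F$ is coloured by running a bouncing walk along each internal path, the period-$2k$ bounce around each dangling $C_{2k}$ from a suitable hook colour, and the cycle $1,2,\dots,k,k,k-1,\dots,2$ (hook colour $1$) around the dangling $C_{2k-1}$, exactly as in Figures~\ref{fig:dang} and~\ref{fig:dang2}. So $G'_\phi$ has maximum degree $3$, is polynomial-time computable from $\phi$, and is $k$-role-colourable iff $\phi$ is not-all-equal satisfiable, and the Proposition follows.

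I expect the bookkeeping in the last two paragraphs to be the main obstacle. The appendage must be hung from the carefully prepared vertex $w_1^{2k}$ rather than from an arbitrary degree-$2$ vertex, since raising the degree of an \emph{interior} vertex of a Lemma~\ref{lem:period}-path would destroy the monochromaticity conclusions that drive the hardness argument; and one must check that the $P_k^{**}$-role-colouring genuinely extends over the appendage, the only slightly delicate point being the dangling $C_{2k-1}$, whose length is not a multiple of $2k$, so that the resulting ``defect'' in the bouncing walk has to be absorbed at the degree-$3$ hook, precisely as pictured in Figure~\ref{fig:dang}.
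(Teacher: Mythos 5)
Your proposal is correct and is essentially the paper's own argument: the paper likewise pins the role graph to $P_k^{**}$ by attaching one dangling $C_{2k-1}$ and two dangling $C_{2k}$'s with adjacent attachment vertices to $G_\phi$, invoking Lemmas~\ref{lem:ear}, \ref{lem:dang}, \ref{lem:dang2} and~\ref{lem:period}, and then reusing the equivalence from Proposition~\ref{prop:pkstarstar} in both directions. The only difference is cosmetic: the paper hangs the three cycles directly on the consecutive base-cycle vertices $v_n^2,v_n^3,v_n^4$ (the resulting loss of degree~$2$ on that single segment is harmless, since the monochromaticity of the $v_i^1$'s is carried by the remaining segments and the $w_i$-paths), whereas you package them into an appendage hung off a lengthened $w_1$-path.
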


\begin{proof}
Let $G_\phi$ be the graph constructed in the proof of Proposition~\ref{prop:pkstarstar}. We modify $G_\phi$ to obtain a graph $G'_\phi$ such that $G'_\phi$ has a $k$-role colouring if and only if $\phi$
has a not-all-equal satisfying assignment. To obtain $G'_\phi$ we add to
$G_\phi$ three cycles $C=\{u_1,\ldots,u_{2k-1}\},C'=\{u'_1,\ldots,u'_{2k}\},C''=\{u''_1,\ldots,u''_{2k}\}$.
We also add the edges $u_1 v_n^{2}$, $u'_1 v_n^{3}$ and $u''_1 v_n^{4}$.

Suppose that $G'_\phi$ has a $k$-role colouring $r$. By Lemma~\ref{lem:dang} and Lemma~\ref{lem:dang2}
The role graph $R$ of $r$ is $C_k$ or $P_k^{**}$. Observe that by Lemma~\ref{lem:period} $\{v_i^1: 1\leq i \leq n\} \cup \{ x_i^1 : 1 \leq i \leq n\}$ is monochromatic. The rest of the argument follows as in Proposition~\ref{prop:pkstarstar}: $\phi$ has a not-all-equal satisfying assignment.

Now suppose that $\phi$ has a not-all-equal satisfying assignment. We show that $G'_\phi$ has a $P_k^{**}$-role colouring
and therefore a $k$-role colouring. We know from Proposition~\ref{prop:pkstarstar} that
$G_\phi$ has a $P_k^{**}$-role colouring. We will extend this colouring to $G'_\phi$. We start by colouring $C$.
We set $r(u_1)=1$ (observing that $r(v_n^{2})=1$), and $r(u_2)=r(u_{2k-1})=2$. This forces us to set $r(u_3)=r(u_{2k-2})=3$. If $k=3$ this completes
the colouring of $C$, otherwise we continue the colouring $r(u_i)=r(u_{2k-(i-1)}=i$. Observe that $r(u_k)=k$ and $r(u_{2k-(k-1)})=k$, and $u_k$ is adjacent to $u_{2k-(k-1)}=u_{k+1}$. This completes the colouring of $C$.
To colour $C'$ we set $r(u'_1)=1$ (observing that $r(v_n^{3}=2$). Since $C'$ is a cycle of length $2k$,
we may colour its vertices $1,1,\ldots,k,k,\ldots,3,2$. We colour $C''$ similarly, but starting at $r(u''_1)=2$.
This completes the colouring and the proof.

\end{proof}

We now show how to increase the length of every cycle, and every $H_i$ graph, to 
prove that $\cS$ is a limit class.

\begin{lemma}
\label{lem:subdiv}
Let $G$ be a graph and let $(H,p)$ be in $\{(C_k,k)$,$(P_k,2k-2)$,$(P_k^*,2k-1)$,$(P_k^{**},2k)\}$. Let $e$ be an edge in $G$, and let $G'$ be obtained from $G$ by performing $p$ subdivisions of $e$. Then $G$ is $H$-role-colourable if and only if $G'$ is.
\end{lemma}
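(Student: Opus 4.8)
The plan is to transport an $H$-role colouring across the subdivided path in either direction, the engine in both cases being that a suitable ``non-backtracking'' walk in $H$ has period exactly $p$; this is precisely why the four values of $p$ are what they are, and for $H=P_k^{**}$ it is the walk computation already carried out in the proof of Lemma~\ref{lem:period}. Write $e=\{a,b\}$, let $y_1,\dots,y_p$ be the $p$ new vertices of $G'$ (so $a,y_1,\dots,y_p,b$ is a path and each $y_i$ has degree $2$ in $G'$), and put $y_0:=a$, $y_{p+1}:=b$. Since each of $C_k,P_k,P_k^*,P_k^{**}$ is connected, an $H$-role colouring is the same as a locally surjective homomorphism onto $H$ (surjectivity onto $V(H)$ is automatic once the graph is non-empty), so it suffices to produce such a homomorphism each way.

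\textbf{From $G$ to $G'$.} Given an $H$-role colouring $r$ of $G$, I keep $r$ on $V(G)$ and colour the $y_i$ along a walk. As $ab\in E(G)$ we have $r(a)r(b)\in E(H)$, so the walk $c_0=r(a),\ c_1=r(b),\ c_2,\dots$ in which $c_{i+1}$ is the unique vertex with $\{c_{i-1},c_{i+1}\}=N_H(c_i)$ is well defined --- uniqueness holds because $\Delta(H)\le 2$: at a degree-one colour it bounces back, at a looped endpoint it bounces through the loop. Periodicity then gives $(c_p,c_{p+1})=(c_0,c_1)$, so $c_p=r(a)$ and $c_{p+1}=c_1=r(b)$. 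Set $r'(y_i)=c_i$ for $1\le i\le p$. The role condition holds at each $y_i$ since $\{c_{i-1},c_{i+1}\}=N_H(c_i)$ by construction; at old vertices other than $a,b$ nothing has changed; and at $a$ the only change in the neighbourhood replaces $b$ (colour $r(b)$) by $y_1$ (colour $c_1=r(b)$), so the image of $N(a)$ is unchanged --- and symmetrically at $b$, because $y_p$ has colour $c_p=r(a)$.

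\textbf{From $G'$ to $G$.} Given an $H$-role colouring $r'$ of $G'$, I restrict to $V(G)$. Here I apply Lemma~\ref{lem:period} twice: to the path $a,y_1,\dots,y_p$, whose $p-1$ internal vertices $y_1,\dots,y_{p-1}$ all have degree $2$, to obtain $r'(a)=r'(y_p)$; and to the path $y_1,\dots,y_p,b$ to obtain $r'(y_1)=r'(b)$. The restriction $r=r'|_{V(G)}$ is a homomorphism (from $y_1\sim a$ and $r'(y_1)=r'(b)$ we get $r(a)r(b)\in E(H)$), and it satisfies the role condition: vertices other than $a,b$ are untouched, while at $a$ the neighbour $y_1$ is deleted and $b$ reinstated with the same colour $r'(y_1)=r'(b)$, so $r(N_G(a))=r'(N_{G'}(a))=N_H(r(a))$; symmetrically at $b$ using $r'(y_p)=r'(a)$.

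The only non-mechanical point is the periodicity claim invoked in the forward direction: that the non-backtracking walk in $H$ returns to its starting arc after exactly $p$ steps (equivalently, $p$ is a multiple of the period of the relevant orbit of the arc-dynamics on $H$). This is a short finite verification for each role graph --- in $C_k$ the walk runs once around the cycle, in $P_k$ it goes up and back down, and in $P_k^*,P_k^{**}$ each loop contributes one extra ``stay'' step --- and it is exactly the computation underlying Lemma~\ref{lem:period}. I expect this to be the step a referee would want spelled out; the rest is neighbourhood bookkeeping.
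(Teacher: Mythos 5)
Your proof is correct and follows essentially the same route as the paper: the reverse direction is exactly the paper's argument (two applications of Lemma~\ref{lem:period} to get $r'(a)=r'(y_p)$ and $r'(y_1)=r'(b)$, then restrict), while the forward direction spells out, via the period-$p$ non-backtracking walk in $H$, what the paper dismisses with ``clearly''. The extra detail is sound (the arc-dynamics on each of $C_k,P_k,P_k^*,P_k^{**}$ indeed has period exactly $p$, matching the walk computation in Lemma~\ref{lem:period}), so no changes are needed.
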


\begin{proof}
Clearly if $G$ is $H$-role colourable, then $G'$ is $H$-role colourable.
Let $e=uv$ and let $s_1,\ldots,s_p$ be the vertices added as a result of subdividing $e$, so that $us_1, s_1s_2,\ldots,s_p v$ are all edges of $G'$. Let $r'$ be an $H$-role colouring
of $G'$. By Lemma~\ref{lem:period}, $r'(u)=r'(s_p)$ and $r'(v)=r'(s_1)$. Therefore
we can $H$-role colour $G$ by setting $r(u)=r'(u),r(v)=r'(v)$.
\end{proof}

\begin{proposition}\label{prop:krolelim}
$\cS$ is a limit class for $k$-{\sc rcol}.
\end{proposition}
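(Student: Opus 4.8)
The plan is to show that $\cS$ is a limit class for $k$-{\sc rcol} by exhibiting a decreasing sequence of finitely defined classes, none of which is $k$-{\sc rcol}-easy, whose intersection is $\cS$. Recall $S_j = Free(K_{1,4},C_3,\ldots,C_j,H_1,\ldots,H_j)$ and $\cS = \cap_j S_j$. Since $S_2 \supseteq S_3 \supseteq \cdots$ with intersection $\cS$ by construction, it suffices to prove that each $S_j$ is outside the family of $k$-{\sc rcol}-easy classes, i.e.\ that $k$-{\sc rcol} restricted to $S_j$ remains $\NP$-hard (under $\Poly \neq \NP$). First I would fix $j$ and revisit the graph $G'_\phi$ built in Proposition~\ref{prop:krole}: it already has maximum degree $3$, so it is $K_{1,4}$-free, but it contains short cycles (the variable cycles, the base cycle, the three dangling cycles $C,C',C''$, and the $H$-subgraphs hidden in the formula gadget attachments) and it may contain small $H_i$'s, so it need not lie in $S_j$.

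The key step is to enlarge every cycle and every relevant $H_i$ in $G'_\phi$ so that the resulting graph lies in $S_j$, while preserving the property ``$G'_\phi$ is $k$-role colourable iff $\phi$ has a NAE-satisfying assignment.'' For the cycles this is exactly what Lemma~\ref{lem:subdiv} gives: subdividing any edge $p$ times, with $p \in \{k, 2k-2, 2k-1, 2k\}$ chosen according to the role graph, preserves $H$-role-colourability for each $H \in \{C_k,P_k,P_k^*,P_k^{**}\}$ — and these are precisely the possible connected role graphs for $G'_\phi$ by Lemmas~\ref{lem:ear},~\ref{lem:dang},~\ref{lem:dang2}. So I would repeatedly subdivide edges (say by a common multiple of $k, 2k-2, 2k-1, 2k$, applied enough times) to push the girth above $j$; each subdivision leaves the graph with $\Delta \le 3$, hence $K_{1,4}$-free, and leaves every potential role colouring and its role graph intact, so the NAE-SAT reduction still goes through. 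The point is that Lemma~\ref{lem:subdiv} is stated for a single fixed $H$, but since we have already pinned the role graph of any $k$-role colouring of (the subdivided) $G'_\phi$ down to the four candidates, and the dangling-subgraph arguments of Lemmas~\ref{lem:dang}–\ref{lem:dang2} are unaffected by lengthening ears elsewhere, the equivalence with NAE-SAT is maintained simultaneously for whichever of the four role graphs actually occurs.

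The main obstacle is the $H_i$-freeness: subdividing cycle edges handles $C_3,\ldots,C_j$, but one must also verify that the enlarged graph contains no $H_i$ with $i \le j$ as an \emph{induced} subgraph. Here I would argue that after subdividing enough, every induced subgraph with two vertices of degree $3$ joined by three internally disjoint paths (the shape of $H_i$) must have all three of those paths long — because the degree-$3$ vertices of $G'_\phi$ are sparse and, after subdivision, pairwise far apart — so any such $H_i$ present has $i > j$ and lies in $S_j$. (If one prefers, one can instead subdivide \emph{every} edge of $G'_\phi$ many times, including the attachment edges in the formula gadget, using Lemma~\ref{lem:subdiv} uniformly; then the $H$-subgraphs of the gadget become $H_i$'s with $i$ as large as we like.) Having produced, for each $j$, a polynomial-time reduction from monotone {\sc nae-sat} to $k$-{\sc rcol} on graphs in $S_j$, we conclude each $S_j$ is not $k$-{\sc rcol}-easy; since $S_2 \supseteq S_3 \supseteq \cdots$ and $\cap_j S_j = \cS$, this exhibits $\cS$ as a limit class, completing the proof.
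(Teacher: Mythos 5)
Your proposal is correct and follows essentially the same route as the paper: for each $j$, subdivide every edge of $G'_\phi$ a common multiple of $k$, $2k-2$, $2k-1$, $2k$ times (the paper uses $jp(k)$ with $p(k)=2k(k-1)(2k-1)$) so that Lemma~\ref{lem:subdiv} together with Lemmas~\ref{lem:ear}, \ref{lem:dang} and \ref{lem:dang2} preserves the equivalence with {\sc nae-sat} while the resulting graph lies in $S_j$. The uniform ``subdivide every edge'' variant you mention parenthetically is exactly the paper's construction, and it disposes of the small $H_i$'s in the same way.
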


\begin{proof}

We show that $k$-{\sc rcol} is $\NP$-hard for the class $S_j$ of $(K_{1,4},C_3,\ldots,C_j,$ $H_1,\ldots,H_j)$-free graphs for all $j\geq3$. Fix some arbitrary constant $j$.
We further modify $G'_\phi$ to obtain a graph $G_\phi^j \in S_j$ such that $G_\phi^j$ has a $k$-role colouring
if and only if $\phi$ has a not-all-equal satisfying assignment. Observe that $G'_\phi$ is $(K_{1,4},C_3,H_1,H_2,H_3)$-free.
Let $p(k)=2k(k-1)(2k-1)$, and note that $p(k)$ is a constant. For each edge $e$ of $G'_\phi$, we perform $jp(k)$ subdivisions
of $e$ to obtain $G_\phi^j$. Certainly, $G_\phi^j$ is $(K_{1,4},C_3,\ldots,C_j,H_1,\ldots,H_j)$-free.
The following are equivalent:

\begin{enumerate}
    \item $G'_\phi$ has a $k$-role colouring
    \item $G'_\phi$ has a $P_k^{**}$-role colouring
    \item $G^j_\phi$ has a $P_k^{**}$-role colouring
    \item $G^j_\phi$ has a $k$-role colouring
\end{enumerate}

We have shown that 1.$\Rightarrow$2.; that 3.$\Rightarrow$4. is immediate. By Lemma~\ref{lem:subdiv}, 2.$\Rightarrow$3. and 4.$\Rightarrow$1., as required. Since $G_\phi^j$ has a $k$-role colouring if and only if $G'_\phi$ has a $k$-role colouring
if and only if $\phi$ has a not-all equal satisfying assignment, the proof is complete.

\end{proof}

\subsection{Minimality of the Limit Class}

To prove that the limit class described above is minimal, we introduce the following minimality criterion as a lemma.

\begin{lemma}\label{lem:min}
(\cite{lozin2013boundary})
Let $M$ be the set of minimal forbidden induced subgraphs for a limit class $X$ for some family $\cF$. Then $X$ is minimal if and only if for each $G\in X$ there exists a finite subset $T \subset M$ such that $Free(T\cup\{G\})$ is in $\cF$.
\end{lemma}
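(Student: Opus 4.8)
The plan is to prove the two implications of the criterion separately, using only three facts: that a hereditary class is determined by its set $M$ of minimal forbidden induced subgraphs via $X=Free(M)$; the identity $Free(A)\cap Free(B)=Free(A\cup B)$ for arbitrary sets of graphs $A,B$; and the definition of a limit class, including its (implicit) feature that a limit class is never a member of $\cF$.

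For the sufficiency direction I would assume that for every $G\in X$ there is a finite $T\subseteq M$ with $Free(T\cup\{G\})\in\cF$. Since $X$ is a limit class by hypothesis, it suffices to show it is minimal, i.e.\ that no limit class $Z$ is properly contained in $X$. Suppose one were, and pick $G\in X\setminus Z$. Because $Z$ is hereditary and omits $G$, we get $Z\subseteq Free(\{G\})$; because $Z\subseteq X=Free(M)\subseteq Free(T)$, we get $Z\subseteq Free(T)$. Hence $Z\subseteq Free(T)\cap Free(\{G\})=Free(T\cup\{G\})\in\cF$, so $Z\in\cF$ since $\cF$ is closed under subclasses. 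This contradicts $Z$ being a limit class, so $X$ is a minimal limit class, that is, a boundary class.

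For the necessity direction I would assume $X$ is a minimal limit class, fix $G\in X$, and suppose for contradiction that $Free(T\cup\{G\})\notin\cF$ for every finite $T\subseteq M$. Exhaust $M$ by an increasing chain of finite subsets $T_1\subseteq T_2\subseteq\cdots$ with $\bigcup_n T_n=M$, and put $Z_n=Free(T_n\cup\{G\})$. By hypothesis each $Z_n$ lies outside $\cF$, and $Z_1\supseteq Z_2\supseteq\cdots$, so $Z:=\bigcap_n Z_n=Free(M\cup\{G\})=X\cap Free(\{G\})$ is a limit class. Since every graph contains itself as an induced subgraph, $G\notin Free(\{G\})$, hence $G\in X\setminus Z$, so $Z\subsetneq X$, contradicting the minimality of $X$. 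Therefore the finite-$T$ condition holds.

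The step I expect to be the only genuinely delicate one is the appeal, in the sufficiency direction, to the fact that a limit class is never in $\cF$: the bare phrasing of the definition given in the preliminaries does not state this, yet it is forced by the intended theory (without it the class $\{K_1\}$, which lies in $\cF$ and sits inside every non-empty hereditary class, would trivialise the boundary-class machinery), so before invoking it I would make this convention explicit. Everything else is routine set manipulation with $Free(\cdot)$, once one records that hereditariness of $Z$ together with $G\notin Z$ yields $Z\subseteq Free(\{G\})$.
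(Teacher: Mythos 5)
Your necessity direction is correct and is the standard one: the decreasing sequence $Free(T_n\cup\{G\})$ of classes outside $\cF$ witnesses that $Free(M\cup\{G\})=X\cap Free(\{G\})$ is a limit class properly inside $X$, contradicting minimality. The sufficiency direction, however, rests on a premise that is false, and which you yourself flag as the delicate point: a limit class can perfectly well belong to $\cF$. The definition only requires the classes $Y_1\supseteq Y_2\supseteq\cdots$ of the defining sequence to lie outside $\cF$; their intersection is very often easy --- that is precisely why boundary classes are interesting. Indeed this paper itself refutes your proposed ``convention'': the boundary class $\cS$ of Theorem~\ref{th:boundary} consists of forests whose components are induced subgraphs of spiders $S_{jjj}$, hence has treewidth $1$, and $k$-{\sc rcol} is polynomial there by Lemma~\ref{lem:btw}, so $\cS$ is a limit (even boundary) class lying inside $\cF$. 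So deriving $Z\in\cF$ is not a contradiction, and your sufficiency argument collapses. A further warning sign is that your argument never uses the finiteness of $T$, whereas the criterion is false without it: for maximum independent set, the class of all forests of maximum degree $3$ is a limit class contained in $\cF$ and satisfies the condition with $T=M$ (since $Free(M\cup\{G\})$ is a class of forests), yet it is not minimal, as it properly contains Alekseev's boundary class.

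The repair is short but requires a different contradiction: argue against the hardness of the classes in the sequence defining the smaller limit class, not against the (possibly easy) limit itself. Suppose $Z\subsetneq X$ is a limit class, say $Z=\bigcap_i Z_i$ with $Z_1\supseteq Z_2\supseteq\cdots$ hereditary classes outside $\cF$, pick $G\in X\setminus Z$, and take the finite $T\subseteq M$ given by the hypothesis. Every graph of $T\cup\{G\}$ lies outside $Z$ (members of $T$ are forbidden for $X\supseteq Z$, and $G\notin Z$ by choice), so each is absent from $Z_i$ for all sufficiently large $i$; since $T\cup\{G\}$ is finite, there is a single index $i$ for which $Z_i$ contains none of these graphs, and hereditarity then gives $Z_i\subseteq Free(T\cup\{G\})\in\cF$, whence $Z_i\in\cF$ because $\cF$ is closed under subclasses --- contradicting the choice of the sequence. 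This is where finiteness of $T$ does real work. (For calibration: the paper offers no proof of this lemma, quoting it from the cited reference, so the comparison above is with the standard argument in the literature rather than with anything in the text.)
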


Our application of this lemma is identical to that of \cite{lozin2013boundary}; we repeat it here for completeness.
We require the following auxiliary results. 

\begin{theorem}
(\cite{lozin2013boundary}) Let $Y$ be a monotone class of graphs. If there is at least one graph in $\cS \setminus Y$ then the treewidth of $Y$ is bounded.
\end{theorem}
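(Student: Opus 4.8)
The plan is to prove the logically equivalent statement that a monotone class $Y$ of unbounded treewidth contains every graph of $\cS$; since the assertion we want is its contrapositive, this suffices. As $Y$ is monotone it is closed under taking subgraphs, so it is enough to show that each $G\in\cS$ is a subgraph of some member of $Y$.

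The first step is to pin down what the graphs in $\cS$ look like. A graph in $\cS$ has no induced cycle, hence is a forest; being $K_{1,4}$-free it has maximum degree at most $3$; and being $H_i$-free for every $i\geq 1$ forces each component to contain at most one vertex of degree $3$ — indeed, if $u,v$ were two degree-$3$ vertices of a component at distance $i$, then the unique $u$--$v$ path together with two further neighbours of $u$ and two further neighbours of $v$ (all off the path) would induce a copy of $H_i$. So each component of $G$ is a path or a subdivided claw $S_{ijk}$ with $i,j,k\geq 1$. Let $c$ be the number of components of $G$ and let $\ell\geq 1$ be larger than every path length and leg length occurring in $G$. Since $P_n\subseteq P_{2\ell+1}\subseteq S_{\ell\ell\ell}$ whenever $n\leq 2\ell+1$, and $S_{ijk}\subseteq S_{\ell\ell\ell}$ whenever $\max(i,j,k)\leq\ell$, the graph $G$ is a subgraph of the disjoint union of $c$ copies of $S_{\ell\ell\ell}$. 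Hence it is enough to find, inside some member of $Y$, $c$ pairwise disjoint copies of $S_{\ell\ell\ell}$.

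For this I would invoke the Robertson--Seymour grid-minor theorem: as $Y$ has unbounded treewidth, it contains a graph $H$ having the $t\times t$ grid as a minor, where $t$ is chosen large in terms of $c$ and $\ell$ (say $t\geq c(2\ell+2)$). Fix pairwise disjoint connected branch sets $B_{a,b}$ ($a,b\in[t]$) witnessing this minor, with an edge of $H$ joining $B_{a,b}$ to $B_{a',b'}$ whenever $(a,b)$ and $(a',b')$ are grid-adjacent. Split the grid into $c$ sub-grids each of side at least $2\ell+2$. In each sub-grid pick a central branch set $B$ not too close to the boundary, and route three ``arms'' of consecutive branch sets out of $B$ along a column and a row of the sub-grid — going north, south, and west, say — each arm passing through at least $\ell+1$ branch sets and the three arms meeting only in $B$. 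Concretising this as a subgraph of $H$ (each inter--branch-set step becomes the connecting edge; inside each branch set one joins the two relevant attachment vertices by a path, using connectivity; and the degree-$3$ vertex is placed at a point of $B$ that is a median, in a spanning tree of $B$, of the three attachment vertices of the arms) produces a subgraph of $H$ isomorphic to $S_{i'j'k'}$ for some $i',j',k'\geq\ell$, which therefore contains $S_{\ell\ell\ell}$ as a subgraph. The $c$ tripods obtained in this way are vertex-disjoint because the sub-grids use disjoint branch sets. Hence the disjoint union of $c$ copies of $S_{\ell\ell\ell}$, and therefore $G$, is a subgraph of $H\in Y$, so $G\in Y$ by monotonicity.

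The structural description of $\cS$ and the length bookkeeping for the arms are routine; the only non-elementary ingredient is the grid-minor theorem. The step that genuinely needs care is the extraction of the disjoint tripods from the grid model — in particular, ensuring that the three legs are internally disjoint inside the central branch set (this is exactly what the tree-median of the three attachment vertices buys us) and that tripods coming from different sub-grids do not collide.
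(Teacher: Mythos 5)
The paper does not prove this statement at all: it is imported verbatim from \cite{lozin2013boundary} and used as a black box alongside the minimality criterion, so there is no in-paper proof to compare against. Your argument is a correct, self-contained reconstruction, and it follows the standard route behind the cited result: pass to the contrapositive, observe that every $G\in\cS$ is a forest of maximum degree at most $3$ in which each component has at most one vertex of degree $3$ (your induced-$H_i$ argument for this is right, since in a forest the connecting path plus two extra neighbours at each end carries no chords), hence $G$ is a subgraph of a disjoint union of copies of $S_{\ell\ell\ell}$; then use the Robertson--Seymour grid-minor theorem and closure of $Y$ under subgraphs. The branch-set surgery is sound: the tree-median of the three attachment vertices does make the three legs meet only at the centre, arms leaving the central branch set in different grid directions use disjoint branch sets, and tripods extracted from different sub-grids are vertex-disjoint; the only thing to tidy is the arithmetic (with side exactly $2\ell+2$ one of the three arms gets only $\ell$ branch sets beyond the centre, so take side $2\ell+3$, which changes nothing). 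If you want to avoid the surgery altogether, note that a large grid minor yields a subdivision of a large wall as a subgraph (the wall has maximum degree $3$, so minor containment upgrades to topological containment), and a subdivided wall visibly contains arbitrarily many vertex-disjoint copies of $S_{\ell\ell\ell}$ as subgraphs; this gives the same conclusion with less bookkeeping.
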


\begin{lemma}
\label{lem:btw}
The $k$-role colouring problem can be solved in polynomial time if the input is restricted to a class of bounded treewidth.
\end{lemma}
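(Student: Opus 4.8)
The plan is to exhibit a standard dynamic-programming (or, more cleanly, a Courcelle-type monadic second-order) argument over a tree decomposition. First I would observe that the property "$G$ has a $k$-role colouring" is, for each fixed $k$, a monadic second-order (MSO$_1$) property of $G$: we can existentially quantify over $k$ vertex sets $V_1,\dots,V_k$ forming a partition, and then require, for each pair $i,j$, that $\exists v\in V_i$ with a neighbour in $V_j$ be equivalent to $\forall v\in V_i\,\exists w\in V_j$ with $v\sim w$; since $k$ is a constant this is a finite conjunction of MSO formulas with only set and first-order quantifiers. By Courcelle's theorem, any fixed MSO$_1$ property can be decided in linear time on graphs of treewidth bounded by a constant, which gives the lemma immediately. (Loops cause no difficulty: the edge relation, possibly reflexive, is still part of the structure, and "$v$ has a neighbour of colour $j$" still means $\exists w\,(w\sim v \wedge w\in V_j)$, allowing $w=v$.)

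For readers who prefer an explicit algorithm, I would instead sketch the direct DP. Fix a nice tree decomposition of width $w$. For each bag $B$ and each node $t$, a \emph{state} records (i) the colour $r(v)\in\{1,\dots,k\}$ assigned to every $v\in B_t$, and (ii) for every $v\in B_t$, the set $S(v)\subseteq\{1,\dots,k\}$ of colours already realised in $r(N(v))$ among vertices processed so far in the subtree rooted at $t$. The number of states per node is at most $k^{w+1}\cdot (2^k)^{w+1}$, a constant. Leaf, introduce, forget, and join nodes are handled in the usual way: at a forget node we discard $v$ from the bag only after checking that $S(v)$ has reached its \emph{final} value, which must equal the common neighbourhood-image of all vertices sharing $v$'s colour — to enforce this we augment the state with a global function $\sigma:\{1,\dots,k\}\to 2^{\{1,\dots,k\}}$ guessing, once at the root's side, the intended image of each colour class, and we verify consistency ($S(v)=\sigma(r(v))$) whenever a vertex is forgotten. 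This adds only a constant multiplicative factor $(2^k)^k$ to the state space, so the whole DP runs in time $f(k,w)\cdot n$.

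The main obstacle — really the only subtle point — is that the role-colouring condition is a \emph{global} constraint tying together all vertices of a colour class, whereas tree-decomposition DP is inherently local. The MSO route sidesteps this entirely by pushing the global bookkeeping into Courcelle's theorem; the explicit route handles it by the "guess $\sigma$ up front, then verify locally at forget time" trick described above, which converts the global equality into a set of local checks. Either way the argument is routine once one notices that $k$ is a fixed constant, so the amount of information to be tracked per bag is bounded. I would present the MSO argument as the main proof and relegate the DP sketch to a remark.
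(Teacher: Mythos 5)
Your argument is correct, but it takes a different route from the paper. The paper proves this lemma in one line, as a corollary of a known result of Chaplick et al.\ (cited as \cite{chaplick2015locally}): $H$-role colouring ($H$-{\sc rcol}, i.e.\ locally surjective homomorphism to $H$) is polynomial-time solvable when $H$ has bounded maximum degree and the input has bounded treewidth; since a role graph on $k$ vertices automatically has degree less than $k$, one simply runs that algorithm for each of the finitely many candidate role graphs $R$ on $k$ vertices. Your proof is instead self-contained: the MSO$_1$ formulation plus Courcelle's theorem (or the explicit DP with the ``guess the image map $\sigma$ globally, verify $S(v)=\sigma(r(v))$ at forget time'' trick) is sound, and it in effect re-proves the special case of the cited result that is needed here, giving linear time for fixed $k$ and fixed treewidth after Bodlaender's algorithm produces a decomposition. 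What the paper's route buys is brevity and an appeal to an established theorem; what yours buys is independence from that reference and an explicit algorithm. One small point to make explicit in your write-up: a $k$-role colouring in this paper uses exactly $k$ colours (and $k$-role colourability is not monotone in $k$), so your MSO formula and your DP must additionally require each class $V_i$ to be nonempty; this is a trivial extra conjunct $\exists v\,(v\in V_i)$, respectively a trivial extra check, but without it you decide ``at most $k$ colours'' rather than ``exactly $k$''.
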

\begin{proof}
This is an corollary of the fact that $H$-role colouring can be solved in polynomial time if
$H$ has bounded maximum degree and the input graph $G$ has bounded treewidth \cite{chaplick2015locally}.
\end{proof}

We are now ready to prove Theorem~\ref{th:boundary}. Let $G$ be a graph in $\cS$. Observe
that $G$ is an induced subgraph of $mS_{jjj}$ for some $m$ and $j$; let $M_j$
be the set $\{K_{1,4},C_3,\ldots,C_{2j+1},H_1,\ldots H_{2j+1}\}$. The class of
$(\{G\} \cup M_j)$-free graphs is a proper
subclass of $(\{mS_{jjj}\}\cup M_j)$-free graphs. By Lemma~\ref{lem:min}, it is enough to prove the following
result.

\begin{lemma}
\label{lem:rolp}
For all $m,j\geq1$, the $k$-{\sc rcol} problem is in $\P$ for $(\{mS_{jjj}\}\cup M_j)$-free graphs.
\end{lemma}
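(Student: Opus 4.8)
The plan is to reduce the problem, in polynomial time, to $k$-role colouring on a graph of bounded treewidth, so that Lemma~\ref{lem:btw} applies. The starting observation is structural: a graph $G$ in $(\{mS_{jjj}\}\cup M_j)$-free graphs is $K_{1,4}$-free and contains no short cycle and no small $H_i$, so every vertex has degree at most $3$, and the ``branch vertices'' (those of degree $3$) are few and well separated. More precisely, since $G$ excludes $mS_{jjj}$ as an induced subgraph, $G$ cannot contain $m$ vertex-disjoint copies of $S_{jjj}$; and since $G$ excludes $H_1,\dots,H_{2j+1}$ it contains no ``theta-like'' configuration with two short internally disjoint paths joining a pair of degree-$3$ vertices. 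Putting these together, I would argue that $G$ consists of a bounded number of ``core'' pieces — each of bounded size — linked together by long induced paths (ears) and long induced cycles, together with a bounded number of pendant trees each of which is a subdivided $S_{abc}$ of bounded branching. The number of degree-$3$ vertices, and hence the number of such ears/cycles, is bounded by a function of $m$ and $j$ only.

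The next step is to contract each long ear and each long cycle down to bounded length while preserving $k$-role-colourability. This is exactly what Lemma~\ref{lem:period} and Lemma~\ref{lem:subdiv} are for: by Lemma~\ref{lem:ear}, once an ear has length at least $2k$ the role graph $R$ is forced into $\{C_k,P_k,P_k^*,P_k^{**}\}$, all of which have maximum degree $2$; and along any long path of degree-$2$ vertices the colouring is periodic with period $p$ (depending on which of the four graphs $R$ is). So for each candidate role graph $R$ among these four, I would replace every long ear by a path whose length is congruent to the original modulo the relevant period $p$ and is $O(k)$, and similarly shorten long cycles; Lemma~\ref{lem:subdiv} guarantees that $R$-role-colourability is unchanged. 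Doing this for all four choices of $R$, plus handling separately the (boundedly many) role graphs of maximum degree $\geq 3$ that could occur only when no long ear is present — in that case $G$ itself is already bounded — yields $O(1)$ auxiliary graphs, each of bounded size, such that $G$ is $k$-role-colourable iff one of them is. Alternatively, and more cleanly, one can shorten without committing to $R$: take each ear/cycle down to length in the range $[2k, 2k + \mathrm{lcm}(k,2k-2,2k-1,2k)]$ chosen to be congruent mod every relevant period simultaneously; the bounded-treewidth bound then follows directly.

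After shortening, the resulting graph $G'$ has a bounded number of branch vertices, bounded-length ears and cycles, and bounded pendant trees, so its treewidth (indeed its size) is bounded by a constant depending only on $m,j,k$. Then Lemma~\ref{lem:btw} solves $k$-{\sc rcol} on $G'$ in polynomial (in fact constant) time, and since $G$ is $k$-role-colourable iff $G'$ is, we are done.

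The main obstacle is the structural decomposition in the first paragraph: turning ``$G$ is $M_j$-free and $mS_{jjj}$-free'' into the clean statement ``$G$ is a bounded-size core with a bounded number of long ears, long cycles, and subdivided-$S_{abc}$ pendants attached.'' The $K_{1,4}$-freeness gives $\Delta(G)\le 3$ immediately; the girth condition from $C_3,\dots,C_{2j+1}$ gives that branch vertices within distance $\le j$ of each other cannot be joined by two short paths, and the $H_i$-freeness rules out the remaining short ``handle'' configurations, so any two branch vertices are either far apart or joined by at most one short path. The finiteness of the number of branch vertices is where $mS_{jjj}$-freeness enters: if there were too many branch vertices, then because they are pairwise far apart one could carve out $m$ disjoint induced copies of $S_{jjj}$ around $m$ of them (using the long ears as the three legs), contradicting $mS_{jjj}$-freeness. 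Making this packing argument precise — in particular handling branch vertices that are close together by absorbing them into a single bounded ``blob'' — is the technical heart of the proof; everything after it is bookkeeping with Lemmas~\ref{lem:period}, \ref{lem:ear}, and~\ref{lem:subdiv}.
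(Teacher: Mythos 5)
Your reduction-to-bounded-treewidth instinct (ending with Lemma~\ref{lem:btw}) is the right destination, but the route you propose breaks at its central structural claim, which is in fact false. For $j\geq 2$, take an arbitrarily long path and attach, at spine vertices pairwise at distance greater than $2j+1$, a pendant path with $j-1$ edges. This graph has maximum degree $3$; it is a tree, so it contains no $C_3,\ldots,C_{2j+1}$; it contains no induced $H_i$ with $i\leq 2j+1$, since the two degree-$3$ vertices of such a copy would be branch vertices at distance at most $i$; and it contains no induced $S_{jjj}$ at all, because every degree-$3$ vertex has one leg confined to a pendant path of length at most $j-1$. So it lies in $Free(\{mS_{jjj}\}\cup M_j)$ yet has arbitrarily many degree-$3$ vertices and arbitrarily many pendant trees. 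Hence your claims that the number of branch vertices is bounded by a function of $m$ and $j$, that there are boundedly many pendant subtrees, and that after shortening ears one obtains a graph $G'$ of bounded size, are all wrong: your packing argument only controls branch vertices all three of whose legs are long. (Separately, even where shortening is legitimate, the equivalence ``$G$ is $k$-role-colourable iff $G'$ is'' is not justified as stated: Lemma~\ref{lem:ear} needs a connected role graph and a long ear in the relevant component, and Lemma~\ref{lem:subdiv} is stated per fixed role graph $H$, so passing from $H$-role-colourability to $k$-role-colourability of a possibly disconnected graph requires an argument you have not given.)

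The lemma itself is true, and the paper's proof avoids any decomposition. One shows that a graph in $Free(\{mS_{jjj}\}\cup M_j)$ contains no $mS_{jjj}$ even as a (not necessarily induced) subgraph: if $H$ is a subgraph copy of $mS_{jjj}$ then some extra edge is present on $V(H)$; an extra edge inside one component of $H$ creates a chordless cycle of length at most $2j+1$, two or more extra edges between two components also create such a cycle, and a single edge between two components creates an induced $H_i$ with $i\leq 2j+1$ --- all contradicting $M_j$-freeness. Therefore the class is contained in the monotone class $Free_m(mS_{jjj})$, which excludes the graph $mS_{jjj}\in\cS$, so by the quoted theorem on monotone classes it has bounded treewidth, and Lemma~\ref{lem:btw} gives a polynomial-time algorithm. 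Note that bounded treewidth (not bounded size) is all that is needed, and, as the counterexample shows, all that is available; it comes from the monotone-class theorem rather than from any ear-shortening reduction.
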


\begin{proof}
It is enough to show that $Free(\{mS_{jjj}\}\cup M_j)$ is a proper subclass of $Free_m(mS_{jjj})$.
Let $G$ be a graph in $Free(\{mS_{jjj}\}\cup M_j)$, and suppose $H$ is a subgraph of $G$ isomorphic
to $mS_{jjj}$. By definition of $G$, $V(H)$ does not induce a copy of $mS_{jjj}$. So there
must be at least one additional edge between two vertices of $H$. If there is an edge between
two vertices of the same connected component of $H$ (i.e. of the same $S_{jjj}$), then 
there is necessarily a chordless cycle of length at most $2k+1$ in $G$, which is a contradiction.
Similarly, a small chordless cycle appears if two of the components of $H$ are connected by
at least two edges in $G$. Finally, if two components of $H$ are connected by a single
edge in $G$, an induced subgraph isomorphic to $H_i$ with $i\leq 2k+1$ appears, which is a contradiction.
This completes the proof of Theorem~\ref{th:boundary}.
\end{proof}

We conclude this section by observing that applying Lemma~\ref{lem:subdiv}
directly to the construction in Proposition~\ref{prop:pkstarstar} shows that $\mathcal{S}$ is a boundary class for $H$-role-colouring in the case where $H$ is a $P_k^{**}$ (for $k\geq2$) or a $C_k$ (for $k\geq3$). We note that these graphs are sparse. Previous results on $k$-role-colouring (e.g. \cite{fiala2005complete,PurcellRombach2015}) have also made use of sparse role graphs. In the next section we show that when the role graph is very dense, we can also find a boundary property.


\section{Coupon Colourings}

Observe that $2$-{\sc ccol} is $P_2^{**}$-{\sc rcol}, and so $\cS$
is a boundary class in this case. We consider $k$-coupon colourings
for $k\geq 3$.
Let $\mathcal{F}_k$ be the class of forests of maximum vertex degree at most $k$. In this section we prove the following theorem.

\newtheorem*{th:couponboundary}{Theorem \ref{th:couponboundary}}
\begin{theorem}
\label{th:couponboundary}
$\mathcal{F}_k$ is a boundary class for the $k$-{\sc ccol} problem for all $k\geq3$.
\end{theorem}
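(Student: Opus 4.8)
The plan is to prove that $\mathcal{F}_k$ is a \emph{limit class} for $k$-{\sc ccol} and then that it is \emph{minimal}, mirroring the structure of Section~3. For the limit-class half, observe that $\mathcal{F}_k = \bigcap_j T_j$ where $T_j$ is the class of $(C_3,\ldots,C_j,K_{1,k+1})$-free graphs, which is a decreasing chain of hereditary classes; excluding $K_{1,k+1}$ forces maximum degree at most $k$, and excluding all short cycles forces acyclicity in the limit. So it suffices to show $k$-{\sc ccol} is $\NP$-hard on $T_j$ for every $j$. I would start from a known $\NP$-hardness result for $k$-coupon colouring (total domatic number) — the cited algorithmic work \cite{Koivisto2017,lee2016total} should supply hardness on graphs of maximum degree at most $k$, or else one reduces from a suitable restricted {\sc nae-sat}-type problem as in Proposition~\ref{prop:pkstarstar}. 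The key technical device promised in the introduction is a girth-increasing gadget: given a graph $G$ of maximum degree $\le k$, produce $G'$ of girth $> j$, still of maximum degree $\le k$, such that $G$ is $k$-coupon colourable iff $G'$ is. First I would build, for each edge, a ``subdivision-like'' replacement that is itself a $k$-regular (or $\le k$-degree) graph of large girth with two distinguished hook vertices, chosen so that in any $k$-coupon colouring the two hooks behave exactly as the endpoints of the original edge (each hook must still see every colour on its outside neighbour and the rest of its colours inside the gadget). Explicit large-girth regular graphs exist \cite{biggs1998constructions,lazebnik1995new}; the point is to cut two adjacent vertices out of such a graph, leaving half-edges that splice into $G$. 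Applying this to every edge of the hardness instance yields an instance in $T_j$, proving $k$-{\sc ccol} is $\NP$-hard there, hence $\mathcal{F}_k$ is a limit class.

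For minimality I would invoke Lemma~\ref{lem:min}: it suffices to show that for every forest $F$ of maximum degree $\le k$, there is a finite $T\subset M$ (where $M$ is the set of minimal forbidden induced subgraphs of $\mathcal{F}_k$, namely $K_{1,k+1}$ together with all cycles) such that $Free(T\cup\{F\})\in$ $k$-{\sc ccol}-easy. Take $T=\{K_{1,k+1},C_3,\ldots,C_N\}$ where $N$ is larger than the number of vertices of $F$. As in the proof of Lemma~\ref{lem:rolp}, I claim $Free(T\cup\{F\})\subseteq Free_m(F)$: if $G$ avoids all cycles up to length $N$ and avoids $F$ as an induced subgraph, then any subgraph of $G$ isomorphic to $F$ must in fact be induced (adding a chord to a forest on $<N$ vertices creates a cycle of length $<N$), contradiction; so $G$ is $F$-subgraph-free, i.e.\ $G\in Free_m(F)$. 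Now $Free_m(F)$ is a monotone class, and since $F$ is a forest it lies in $\mathcal{S}$ — more to the point, $Free_m(F)$ omits a large-girth/large-degree graph, so by the theorem of \cite{lozin2013boundary} quoted in the previous section (or the analogous statement that a proper monotone subclass excluding some forest has bounded treewidth) it has bounded treewidth. By an analogue of Lemma~\ref{lem:btw} — $k$-coupon colouring is fixed-parameter tractable, indeed polynomial, on bounded-treewidth graphs, since it is expressible in monadic second-order logic (``there is a partition of $V$ into $k$ parts with every vertex seeing every part'') and one applies Courcelle's theorem, or cites \cite{Koivisto2017} — the problem is polynomial on $Free(T\cup\{F\})$. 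This verifies the criterion of Lemma~\ref{lem:min} and establishes minimality.

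The main obstacle is the girth-increasing construction in the limit-class half. Subdivision, which worked cleanly for role colouring via the periodicity Lemma~\ref{lem:period}, does \emph{not} preserve coupon colourability: a subdivided edge introduces degree-2 vertices, which for $k\ge3$ cannot see $k$ colours among their two neighbours. So the replacement gadget must genuinely be a piece of a large-girth $\le k$-degree graph, engineered so that (i) it has large girth, (ii) every internal vertex can be given all $k$ colours in its neighbourhood using only colours available inside the gadget plus the forced colour at the hook, and (iii) the two hooks, together with their outside neighbours, reproduce exactly the colour-adjacency constraint of the original edge — in particular the gadget should be ``colour-transparent'': for any pair of colours $(a,b)$ there is a $k$-coupon-compatible partial colouring with hook colours $a,b$, and no partial colouring forces any relation between the two hook colours beyond what the original edge forced. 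Designing such a gadget and verifying (i)–(iii) is where the real work lies; everything else is bookkeeping along the template already set up in Section~3.
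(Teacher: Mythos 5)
Your proposal has the right two-part skeleton (limit class via $\NP$-hardness on classes of large girth and maximum degree $k$, minimality via Lemma~\ref{lem:min}), but both halves have genuine gaps. In the limit-class half you correctly observe that subdivision cannot work for coupon colouring and that one needs an edge gadget of large girth and maximum degree $k$ whose hooks simulate the original edge --- but you then stop, explicitly deferring the construction. That construction is the actual content of the paper's proof: the gemel implantation replaces each edge $uv$ by two $(k-1)$-ary trees rooted at $v'$ (joined to $u$) and $u'$ (joined to $v$), whose leaves are matched according to an explicitly built high-girth graph $G_{(n,k)}$ and then contracted; explicit functions $f_0,f_1$ show that any $k$-coupon colouring is determined level by level, forcing $r(u)=r(u')$ and $r(v)=r(v')$, while Lemma~\ref{lem:girth} controls the girth, yielding Proposition~\ref{prop:ccol}. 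Your fallback --- excising two adjacent vertices from an off-the-shelf large-girth regular graph \cite{biggs1998constructions,lazebnik1995new} --- supplies no mechanism for transmitting a hook colour across the gadget: large girth alone says nothing about how coupon colourings of the gadget constrain the two hooks, and the working gadget must actively force the identification of colours that makes $u$--$v'$ and $v$--$u'$ behave like the deleted edge.

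The minimality half is not merely incomplete but would fail as written. First, the containment $Free(T\cup\{F\})\subseteq Free_m(F)$ is false: an extra edge in a subgraph copy of a forest $F$ may join two \emph{different} components of $F$ and create no cycle at all. In Section~3 that case was excluded because the graphs $H_i$ lie in the forbidden set of $\cS$; for $\cF_k$ the analogous graphs are forests of maximum degree $3\le k$, i.e.\ members of $\cF_k$, so no finite $T\subset M$ can rule them out. Second, even granting the containment, $Free_m(F)$ need not have bounded treewidth: the quoted theorem applies only when a graph of $\cS$ (a subdivided claw) is excluded, and for $F=T_k^t$ with $t\ge2$ the class $Free_m(F)$ contains all heavily subdivided walls (and, for $k\ge4$, all cubic expanders), which have unbounded treewidth, so the Courcelle step has nothing to stand on. The paper sidesteps all of this with a degree argument: given $G\in\cF_k$, choose $t$ with $G$ an induced subgraph of $T_k^t$ and take $T=\{K_{1,k+1},C_3,\ldots,C_{2t+2}\}$; by Lemma~\ref{lem:deg} every graph in $Free(T\cup\{T_k^t\})$ has a vertex of degree less than $k$, hence admits no $k$-coupon colouring at all, so $k$-{\sc ccol} is trivially polynomial on this class, which is exactly what Lemma~\ref{lem:min} requires.
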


It is known that $k$-coupon colouring is $\NP$-hard even when the input graph is $k$-regular \cite{Koivisto2017}. This is not a hereditary property of graphs; taking the hereditary closure of the $k$-regular graphs yields the graphs of maximum degree $k$. Let $G$ be a graph of maximum degree $k$. Our goal is to produce a graph $G'$ which is $(K_{1,k+1},C_3,\ldots,C_j)$-free, such that $G$ has a $k$-role colouring if and only if $G'$ has.
The general idea is illustrated in Figure~\ref{fig:implantgen}. We will replace every edge $uv$ with a pair of trees of depth much larger than $j$, such that $u$ is adjacent to $v'$ and $v$ adjacent to $u'$. We add edges between the leaves of these trees to force the colour of $u$ to match the colour of $u'$ (and $v$ to match $v'$) while avoiding cycles of length less than $j$.
We call this operation {\em gemel implantation}. The construction for $k=j=3$ is illustrated in Figure~\ref{fig:implantex}.
Let $G$ be the $3$-regular graph on the left of the figure, and $G'$ be the graph on the right, obtained by a gemel implantation. 
Observe that any $3$-coupon colouring $\phi$ of $G'$ has $\phi(u)=\phi(u')$ (and by symmetry, $\phi(v)=\phi(v')$).
Thus, $G'$ has a 3-coupon colouring if and only if $G'$ has.
Furthermore, there is no triangle in the gemel, nor can there be a triangle of $G'$ including both $u$ and $v$. 
Now let $G''$ be the graph obtained by performing a gemel implantation at every edge of $G$. 
By the previous discussion, $G''$ has a 3-coupon colouring if and only if $G$ has.
Furthermore, $G''$ is $3$-regular and triangle-free, i.e. $(K_{1,4},C_3)$-free.
Clearly $G''$ can be obtained from $G$ in polynomial time. 
Thus, 3-coupon colouring is $\NP$-hard for $(K_{1,4},C_3)$-free graphs.

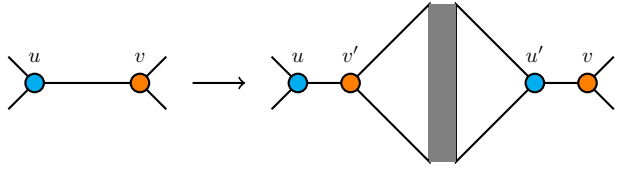
\begin{figure}[t]
    \centering
\begin{tikzpicture}[thick,scale=0.7, every node/.style={scale=0.8}]

\draw   (0,1) -- (2,1)
        (0,1) -- (-0.5,1.5)
        (0,1) -- (-0.5,0.5)
        (2,1) -- (2.5,1.5)
        (2,1) -- (2.5,0.5);

\draw [fill=cyan] (0,1) circle (5pt) node[above=5pt] {$u$};

\draw [fill=orange] (2,1) circle (5pt) node[above=5pt] {$v$};

\draw[->] (3,1) -- (4,1);

\draw (5,1) -- (6,1) -- (7.5,2.5) -- (7.5,-0.5) -- (6,1);

\draw (10.5,1) -- (9.5,1) -- (8,2.5) -- (8,-0.5) -- (9.5,1);

\draw   (5,1) -- (4.5,1.5)
        (5,1) -- (4.5,0.5)
        (10.5,1) -- (11,1.5)
        (10.5,1) -- (11,0.5);

\draw [fill=cyan] (5,1) circle (5pt) node[above=5pt] {$u$};

\draw [fill=cyan] (9.5,1) circle (5pt) node[above=5pt] {$u'$};

\draw [fill=orange] (6,1) circle (5pt) node[above=5pt] {$v'$};

\draw [fill=orange] (10.5,1) circle (5pt) node[above=5pt] {$v$};

\fill [color=gray] (7.5,-0.5) rectangle (8,2.5);

\end{tikzpicture}

    \caption{The gemel implantation.}
    \label{fig:implantgen}
\end{figure}

\begin{figure}[b]
    \centering
 
\begin{tikzpicture}[thick,scale=0.7, every node/.style={scale=0.8}]

\draw   (0,1) -- (2,1)
        (0,1) -- (-0.5,1.5)
        (0,1) -- (-0.5,0.5)
        (2,1) -- (2.5,1.5)
        (2,1) -- (2.5,0.5);

\draw [fill=cyan] (0,1) circle (5pt) node[above=5pt] {$u$};

\fill (2,1) circle (3pt) node[above=5pt] {$v$};

\draw[->] (3,1) -- (4,1);

\draw (5,1) -- (6,1) -- (7,2) -- (8.5,0) -- (9.5,1);

\draw (10.5,1) -- (9.5,1) -- (8.5,2) -- (7,0) -- (6,1);

\draw   (7,2) -- (8.5,2)
        (7,0) -- (8.5,0);

\draw   (5,1) -- (4.5,1.5)
        (5,1) -- (4.5,0.5)
        (10.5,1) -- (11,1.5)
        (10.5,1) -- (11,0.5);

\draw [fill=cyan] (5,1) circle (5pt) node[above=5pt] {$u$};

\draw [fill=cyan] (9.5,1) circle (5pt) node[above=5pt] {$u'$};

\draw [fill=orange] (7,2) circle (5pt);

\draw [fill=yellow] (7,0) circle (5pt);

\fill (6,1) circle (3pt) node[above=5pt] {$v'$};

\fill (10.5,1) circle (3pt) node[above=5pt] {$v$};

\fill (8.5,0) circle (3pt);

\fill (8.5,2) circle (3pt);

\end{tikzpicture}
    \caption{A gemel implantation resulting in a graph of girth at least $4$.}
    \label{fig:implantex}
\end{figure}
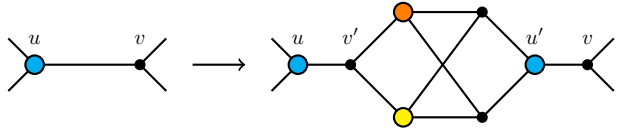

In order to generalise this argument to prove Theorem~\ref{th:couponboundary}, we need to show that we can perform a gemel implantation in $k$-regular graphs that avoids arbitrarily large cycles (and preserves $k$-coupon colourability).
We note that, for the argument to go through, the gemel implantation must be performed in polynomial time. Our construction is of constant size, and therefore gemel implantation at every edge takes $O(n^2)$ time as required. Unfortunately, the constants involved are large; in order to achieve girth $j$, $N$ vertices must be added to each edge where $\log ^* N > j$. 
We suspect there is a more efficient construction, but note that it will
have size at least exponential in $j$.

We demonstrate our construction by first presenting a family of
regular graphs of high girth. These graphs
consist of two cycles and a matching between them. 
We then show how to use the matching to create the gemel implantation.
\subsection{High girth graph $G_{(n,k)}$}

We define a family of graphs with girth $\Omega (\log ^* |V|)$. For ease of notation, let $a_n=a_{(n,r})$. Let $a_0=1$, and let $a_{n+1}=a_{n}+(k-1)^{a_{n}}$. We create a graph $G_{(n,k)}$ as follows. Let $V(G_{(n,k)})=\{0,1\}\times \{ 0,1,\ldots, k-2\} ^{a_{n}}$. We read strings from right to left. 
Let $V_0$ be all the vertices whose strings end with 0, and $V_1$ be all the vertices whose strings end with 1. Add an edge between two vertices in $V_0$ if the number expressed (in base $(k-1)$) by their strings differ by 1 $\pmod{(k-1)^{a_{n}}}$, and do the same for $V_1$. This creates two cycles of length $(k-1)^{a_{n}}$. We add a matching between these cycles to obtain a 3-regular graph. For a vertex $v$ in $V_0$, for $0\leq i \leq n-1$, let $x_i$ be the number expressed by the first $a_i$ digits of $v$. Let $w$ be the vertex in $V_1$ that differ from $v$ only in the digits at positions $a_0+(x_0+1),\ldots a_{n-1}+(x_{n-1}+1)$, namely by adding 1 to these digits $\pmod{k-1}$. Add edges $vw$ for all $v \in V_0$. Let $e:V_0\to V_1$ be the bijective function with $e(v)=w$ as described. Change the last 0 to a 1. Now, $G_{(n,k)}$ is a 3-regular graph.
For example, the graph $G_{(2,3)}$, shown in Figure~\ref{fig:G23} has vertex set $\{ 0,1\} ^4$ and girth 4.  

\begin{lemma}
\label{lem:girth}
$G_{(n,k)}$ has girth $\Omega (\log ^* a_{n})$.
\end{lemma}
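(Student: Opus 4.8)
The plan is to bound the girth from below by tracking, for each ``level'' $i$, how the matching edges $vw = v\,e(v)$ interact with short cycles. The graph $G_{(n,k)}$ is built from two cycles $V_0, V_1$ of length $(k-1)^{a_n}$ joined by the perfect matching $e$. Any cycle in $G_{(n,k)}$ alternates between maximal paths inside $V_0$ or $V_1$ (arcs along one of the two big cycles) and single matching edges; since the two big cycles themselves have length exponential in $a_n$, a short cycle must use at least two matching edges, and in fact must consist of a short arc in $V_0$, a matching edge, a short arc in $V_1$, and a matching edge back. So the key quantity is: if $v, v' \in V_0$ are close along the $V_0$-cycle (say at distance $d$), how far apart are $e(v)$ and $e(v')$ along the $V_1$-cycle? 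If I can show this distance is large whenever $d$ is small but positive, a short cycle is impossible.

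First I would unwind the definition of $e$. Writing $v \in V_0$ with its base-$(k-1)$ digits, $e(v)$ is obtained by taking, for each $i = 0, \ldots, n-1$, the number $x_i$ formed by the first $a_i$ digits of $v$, and adding $1 \pmod{k-1}$ to the digit of $v$ in position $a_{i-1} + (x_{i-1}+1)$ (with the $i=0$ term handled by the ``change the last $0$ to a $1$'' convention separating $V_0$ from $V_1$; I would make this indexing precise). Moving from $v$ to an adjacent vertex $v+1$ along the $V_0$-cycle changes the low-order digits of $v$, hence changes $x_0$ (and, on carries, $x_1, x_2, \ldots$), which in turn changes \emph{which} positions of the string get incremented when computing $e$. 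The heart of the argument is the recursive structure of the $a_i$: because $a_{i+1} = a_i + (k-1)^{a_i}$, the block of digits in positions $a_i, \ldots, a_{i+1}-1$ has length exactly $(k-1)^{a_i}$, which is precisely the number of possible values of the ``pointer'' $x_i \in \{0, \ldots, (k-1)^{a_i}-1\}$. So the level-$i$ part of the map $e$ reads a pointer computed from the first $a_i$ digits and acts on a digit inside the level-$i$ block, and different pointer values act on different digits. This self-similar ``pointer chasing'' is exactly why the girth grows like $\log^* a_n$: to close a short cycle one must, at each level, have the two endpoints agree on the level-$i$ pointer, which forces agreement on a prefix of length $a_i$, and $a_i$ grows by a tower, so $a_n$ being the length means the number of levels $n$ is about $\log^* a_n$.

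Concretely, I would prove by induction on $i$ the statement: if $v, v' \in V_0$ lie at distance at most some threshold (roughly $2i$, matching the girth target) along the $V_0$-cycle and $v \ne v'$, then for the largest level $i$ at which their length-$a_i$ prefixes differ, the images $e(v), e(v')$ differ in a digit of the level-$i$ block, and moreover the positions affected at higher levels cannot cancel this. Hence $e(v)$ and $e(v')$ differ, and more quantitatively they are far apart along the $V_1$-cycle — far enough that the arc in $V_1$ needed to complete a putative short cycle is longer than the budget allows. Combining the $V_0$-arc length, the two matching edges, and this forced $V_1$-arc length gives a lower bound of the form $\Omega(n)$ on any cycle length, and since $n = \Theta(\log^* a_n)$ by the tower recursion for $a_n$, the lemma follows.

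The main obstacle I anticipate is handling carries and the interaction between levels cleanly: when $v$ and $v'$ are close along the $V_0$-cycle, they differ only in low-order digits, but a carry can propagate and change a high-order pointer $x_i$, which then moves the incremented position at level $i$ to a completely different place — I need to argue this \emph{helps} (it makes $e(v)$ and $e(v')$ differ more, not less) rather than accidentally creating a cancellation with another level's contribution. Making the ``different pointer values $\Rightarrow$ disjoint affected positions'' claim precise across all levels simultaneously, and checking the boundary/wraparound cases $\pmod{(k-1)^{a_n}}$, is the delicate bookkeeping; the conceptual core (pointer chasing forces prefix agreement, prefixes grow as a tower) is straightforward once the indexing is pinned down.
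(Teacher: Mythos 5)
There is a genuine gap, and it sits exactly where the difficulty of the lemma lies. You reduce the problem to cycles of the form ``short arc in $V_0$, matching edge, short arc in $V_1$, matching edge back,'' i.e.\ cycles using exactly two matching edges, and you assert this structure without justification. A short cycle may alternate between $V_0$ and $V_1$ many times, using $4,6,\ldots$ matching edges; since consecutive matching edges on a cycle are traversed in opposite directions (one applies $e$, the next $e^{-1}$), the $\pm 1$ changes they make at the high-level ``pointer'' positions can cancel in pairs, so no statement about a single pair $v,v'$ and their images $e(v),e(v')$ can rule such cycles out. This multi-edge case is precisely what the paper's proof is organized around: it classifies matching edges as good or bad according to whether an endpoint is within distance $(k-1)^{a_{n-2}}$ of another vertex affecting the same top-level digit, shows good endpoints are pairwise far apart (so no short cycle uses only good edges), and shows that a cycle using a bad edge must contain two subpaths each of which projects (on the first $a_{n-1}$ digits) to a cycle of $G_{(n-1,k)}$. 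This yields the recursion $g(G_{(n,k)})\geq \min\{\,2\,g(G_{(n-1,k)}),\,(k-1)^{a_{n-2}}\,\}$ and hence girth $\Omega(2^{n})=\Omega(\log^{*}a_n)$. Your sketch has no analogue of this recursion, and your weaker target bound $\Omega(n)$, while sufficient for the statement, is not reached by the argument as given because the two-matching-edge reduction fails.

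Your ``pointer chasing forces prefix agreement'' intuition is in the right spirit (it is essentially why a bad edge forces the intermediate path to return to the same length-$a_{n-1}$ prefix, which is what makes the projection to $G_{(n-1,k)}$ a cycle), but to turn it into a proof you must formulate it as a statement about paths between two matching edges that undo the same top-level digit change, not about the distance between $e(v)$ and $e(v')$ for nearby $v,v'$, and then recurse on $n$. Without that recursive step (or some substitute handling arbitrarily many matching edges and the wraparound modulo $(k-1)^{a_n}$), the proof does not go through.
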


\begin{proof}
If a cycle in $G_{(n,k)}$ is entirely in $V_0$ or $V_1$, then it has length $(k-1)^{a_{n}}$. Let $C$ be a cycle in $G_{(n,k)}$ that contains an edge $vw \in [V_0,V_1]$, where $[A,B]$ is the set of edges $ab$ with one endpoint $a \in A$ and one endpoint $b \in B$. Then $v$ and $w$ differ in the digit at position $a_{n-1}+x_{n-1}$ (among others). Suppose a cycle starts at $v,w,\ldots$. To complete a cycle, the digit at $a_{n-1}+x_{n-1}$ must be returned to its state in $v$ (among other digits). This can be done in two ways: along an edge within $V_0$ or $V_1$, or along an edge in $[V_0,V_1]$. Remember that moving along an edge within $V_0$ or $V_1$ changes the represented number of the string by 1, while moving along an edge in $[V_0,V_1]$ changes a set of $n$ digits across the string.\\ 
Let two vertices be \emph{close} if they are distance $\leq (k-1)^{a_{n-2}}$ apart within $G[V_0]$ or in $G[V_1]$, and \emph{far} otherwise.\\
We define a \emph{good} edge in $G_{(n,k)}$ as an edge $\in [V_0,V_1]$ which has one of its endpoints close to another vertex that differs in the digit at position $a_{n-1}+x_{n-1}$.
Other edges in $\in [V_0,V_1]$ are \emph{bad}. 
Good edges have an endpoint that has only $(k-1)$s or only 0s at positions $a_{n-2}+1,\ldots,a_{n-1}$. Without loss of generality, suppose that $vw$ is a good edge, with $v \in V_0,w\in V_1$, such that $v$ has only 0s at positions $a_{n-2}+1,\ldots,a_{n-1}$. Then $w$ has exactly one 1 in the substring at positions $a_{k-2}+1,\ldots,a_{k-1}$ (namely, at position $a_{k-2}+x_{k-2}$, and 0s otherwise. \\
Then, $w$ is far away in $G[V_1]$ from any other endpoint of a good edge. Therefore, there is no cycle of length $\leq (k-1)^{a_{n-2}}$ that uses only good edges from $[V_0,V_1]$.\\
Let $C=(v,w,u,\ldots)$ be a cycle in $G_{(n,k)}$ that uses at least one bad edge $vw \in [V_0,V_1]$. Then, $v$ and $w$ differ at position $a_{n-1}+x_{n-1}$. The path $w,u\ldots$ along $C$ must reach a vertex $x\in V_0$ that is equal to $v$ in the first $a_{n-1}$ bits, or a vertex $y \in V_1$ that is equal to $w$ in the first $a_{n-1}$ bits, such that the edge $xe(x)$ or $ye^{-1}(y)$ changes the digit at position $a_{n-1}+x_{n-1}$ back to its state at $v$. Then either the path $v,w,\ldots,x$ or the path $w,\ldots,y$ correspond to a cycle in $G_{(n-1,k)}$. To complete $C$, by symmetry, there must then also be a path $x,e(x),\ldots,v$ or a path $y,e^{-1}(y),\ldots,v$, which again correspond to a cycle in $G_{(n-1,k)}$. Therefore, 
$$g(G_{(n,k)})\geq \min \{ 2g(G_{(n-1,k)}),(k-1)^{a_{n-2}}  \}, \;\;\; g(G_{(1,k)}\geq3.$$ Then, $$g(G_{(n,k)})=\Omega(2^{n})=\Omega(2^{\log^* a_{n}})=\Omega(\log^* |V|).$$
\end{proof}

\begin{figure}[!ht]
\centering
\begin{tikzpicture}[scale=.4,every node/.style={scale=.9}]
\foreach \z in {0,...,7}
{
\draw[fill=black!100,draw=black,line width=1pt] (0,\z) circle (.1);
\draw[fill=black!100,draw=black,line width=1pt] (5,\z) circle (.1);
\draw (-1.2,\z) node{\binnum{\z}};
\draw (5.8,\z) node{\binnum{\z+8}};
}
\draw[black!100,line width=1pt] (0,0) -- (5,2);
\draw[black!100,line width=1pt] (0,1) -- (5,5);
\draw[black!100,line width=1pt] (0,2) -- (5,0);
\draw[black!100,line width=1pt] (0,3) -- (5,7);
\draw[black!100,line width=1pt] (0,4) -- (5,6);
\draw[black!100,line width=1pt] (0,5) -- (5,1);
\draw[black!100,line width=1pt] (0,6) -- (5,4);
\draw[black!100,line width=1pt] (0,7) -- (5,3);
\draw[black!100,line width=1pt] (0,0) -- (0,7);
\draw[black!100,line width=1pt] (5,0) -- (5,7);
\draw[black!100,line width=1pt] (-3,0) -- (-3,7);
\draw[black!100,line width=1pt] (8,0) -- (8,7);
\draw[black!100,line width=1pt](0,0) to[out=-90,in=-90] (-3,0);
\draw[black!100,line width=1pt](0,7) to[out=90,in=90] (-3,7);
\draw[black!100,line width=1pt](5,0) to[out=-90,in=-90] (8,0);
\draw[black!100,line width=1pt](5,7) to[out=90,in=90] (8,7);
\end{tikzpicture}
\caption{An example of the graph $G_{(n,k)}$ with $k=3$ and $n=2$.}\label{fig:G23}
\end{figure}
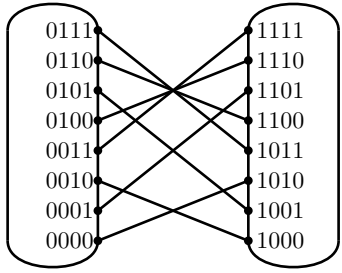

\subsection{Gemel implantation using $G_{(n,k)}$}
Given a $k$-regular graph $G$, we choose a value $n$, and create a $k$-regular graph $G'$ by replacing every edge $vw$ with two edges $uv'$ and $u'v$, and two planar $(k-1)$-ary trees of depth $a_n$, rooted at $v'$ and $u'$. We number the leaves using $k$-ary strings of length $a_n$, by their position in the tree (in the natural way), and add either a 0 or 1 at the end for the tree rooted at $v'$ or $u'$, respectively. We call these leaf sets $V_0$ and $V_1$. We add the matching in $[V_0,V_1]$ based on the girth graph $G_{(n,k)}$. Note that we do not add the cycles within $V_0,V_1$. Finally, we contract all leaf-edges of both trees, and end up with a $k$-regular graph $G'$. (Once we have done this for all edges of $G$.) Figure~\ref{fig:G23ex} shows an example sketch based on $G_{(2,3)}$.

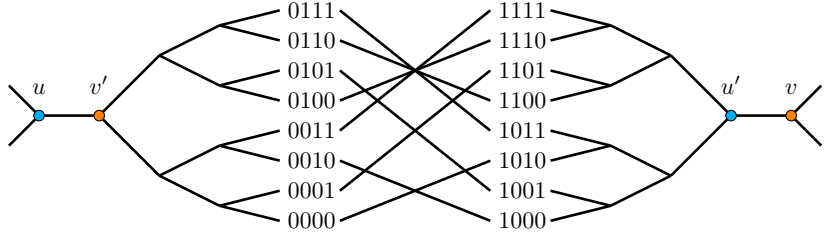
\begin{figure}[!ht]
\centering
\begin{tikzpicture}[scale=.4,every node/.style={scale=.9}]
\foreach \z in {0,...,7}
{
\draw (-1.2,\z) node{\binnum{\z}};
\draw (5.8,\z) node{\binnum{\z+8}};
}
\draw[black!100,line width=1pt] (0,0) -- (5,2);
\draw[black!100,line width=1pt] (0,1) -- (5,5);
\draw[black!100,line width=1pt] (0,2) -- (5,0);
\draw[black!100,line width=1pt] (0,3) -- (5,7);
\draw[black!100,line width=1pt] (0,4) -- (5,6);
\draw[black!100,line width=1pt] (0,5) -- (5,1);
\draw[black!100,line width=1pt] (0,6) -- (5,4);
\draw[black!100,line width=1pt] (0,7) -- (5,3);
\draw[black!100,line width=1pt] (-2,7) -- (-4,6.5);
\draw[black!100,line width=1pt] (-2,6) -- (-4,6.5);
\draw[black!100,line width=1pt] (-2,5) -- (-4,4.5);
\draw[black!100,line width=1pt] (-2,4) -- (-4,4.5);
\draw[black!100,line width=1pt] (-2,3) -- (-4,2.5);
\draw[black!100,line width=1pt] (-2,2) -- (-4,2.5);
\draw[black!100,line width=1pt] (-2,1) -- (-4,0.5);
\draw[black!100,line width=1pt] (-2,0) -- (-4,0.5);
\draw[black!100,line width=1pt] (-4,6.5) -- (-6,5.5);
\draw[black!100,line width=1pt] (-4,4.5) -- (-6,5.5);
\draw[black!100,line width=1pt] (-4,2.5) -- (-6,1.5);
\draw[black!100,line width=1pt] (-4,0.5) -- (-6,1.5);
\draw[black!100,line width=1pt] (-6,5.5) -- (-8,3.5);
\draw[black!100,line width=1pt] (-6,1.5) -- (-8,3.5);
\draw[black!100,line width=1pt] (7,7) -- (9,6.5);
\draw[black!100,line width=1pt] (7,6) -- (9,6.5);
\draw[black!100,line width=1pt] (7,5) -- (9,4.5);
\draw[black!100,line width=1pt] (7,4) -- (9,4.5);
\draw[black!100,line width=1pt] (7,3) -- (9,2.5);
\draw[black!100,line width=1pt] (7,2) -- (9,2.5);
\draw[black!100,line width=1pt] (7,1) -- (9,0.5);
\draw[black!100,line width=1pt] (7,0) -- (9,0.5);
\draw[black!100,line width=1pt] (9,6.5) -- (11,5.5);
\draw[black!100,line width=1pt] (9,4.5) -- (11,5.5);
\draw[black!100,line width=1pt] (9,2.5) -- (11,1.5);
\draw[black!100,line width=1pt] (9,0.5) -- (11,1.5);
\draw[black!100,line width=1pt] (11,5.5) -- (13,3.5);
\draw[black!100,line width=1pt] (11,1.5) -- (13,3.5);
\draw[black!100,line width=1pt] (-10,3.5) -- (-8,3.5);
\draw[black!100,line width=1pt] (-10,3.5) -- (-11,4.5);
\draw[black!100,line width=1pt] (-10,3.5) -- (-11,2.5);
\draw [fill=cyan] (-10,3.5) circle (5pt) node[above=5pt] {$u$};
\draw [fill=orange] (-8,3.5) circle (5pt) node[above=5pt] {$v'$};
\draw[black!100,line width=1pt] (15,3.5) -- (13,3.5);
\draw[black!100,line width=1pt] (15,3.5) -- (16,4.5);
\draw[black!100,line width=1pt] (15,3.5) -- (16,2.5);
\draw [fill=cyan] (13,3.5) circle (5pt) node[above=5pt] {$u'$};
\draw [fill=orange] (15,3.5) circle (5pt) node[above=5pt] {$v$};
\end{tikzpicture}
\caption{An example sketch of a gemel implantation based on $G_{(2,3)}$.}\label{fig:G23ex}
\end{figure}

\newtheorem*{prop:ccol}{Proposition \ref{prop:ccol}}
\begin{proposition}\label{prop:ccol}
The $k$-{\sc ccol} problem is $\NP$-complete for graphs of girth at least $g$ for all $g\geq 3$.
\end{proposition}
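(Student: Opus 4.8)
The plan is a polynomial reduction from $k$-{\sc ccol} restricted to $k$-regular graphs, which is $\NP$-hard by \cite{Koivisto2017}; membership in $\NP$ is immediate, since a $k$-coupon colouring is a polynomial-size, polynomial-time-checkable certificate. Fix $g\ge 3$. By Lemma~\ref{lem:girth} there is a constant $n=n(g,k)$, depending only on $g$ and $k$, with $g(G_{(n,k)})\ge g$; fix such an $n$, the associated value $a_n$, and the bijection $e:V_0\to V_1$ underlying $G_{(n,k)}$. Given a $k$-regular graph $G$, I would form $G'$ by performing a gemel implantation at every edge, exactly as described above: replace each edge $uv$ by the edges $uv'$ and $u'v$ together with two planar $(k-1)$-ary trees of depth $a_n$ rooted at $v'$ and $u'$, insert the matching $e$ between their leaf sets $V_0$ and $V_1$, and finally contract all leaf-edges. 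Then $G'$ is $k$-regular; in particular it has maximum degree $k$ and hence is $K_{1,k+1}$-free (which will be needed for Theorem~\ref{th:couponboundary}, though only the girth bound matters for the present statement), and since each gemel has constant size $G'$ is obtained from $G$ in polynomial time. It then remains to show (a) $G'$ has girth at least $g$, and (b) $G'$ is $k$-coupon colourable if and only if $G$ is.

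For (a), I would split into two cases. If a cycle of $G'$ meets an original vertex of $G$, then it must traverse some gemel from $u$ to $v$; such a path descends one of the two trees to a matched leaf, uses one matching edge, and climbs the other tree, so it has length at least $2(a_n-1)+3$, which our choice of $n$ makes at least $g$. Otherwise the cycle lies inside a single gemel, and since deleting the matching edges separates the gemel into two trees it must use at least two matching edges; writing the cycle as an alternating concatenation of tree paths and matching edges and expressing each tree distance as $2(a_n-\ell)$ with $\ell$ the length of a longest common prefix, the recursion used to prove Lemma~\ref{lem:girth} carries over — with tree distance in place of cyclic distance, and using that $e$ sends leaves with a long common prefix to leaves with a long common prefix and otherwise spreads them apart — to give a length of at least $g(G_{(n,k)})\ge g$. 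I expect this adaptation of the recursion of Lemma~\ref{lem:girth} to the tree metric to be the main obstacle in the proof.

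For (b), the crux is a pair of claims about a single gemel with attachment vertices $u,v$ and associated $u',v'$. \emph{Forcing:} every $k$-coupon colouring $\psi$ of $G'$ has $\psi(u)=\psi(u')$ and $\psi(v)=\psi(v')$. Since $G'$ is $k$-regular, the coupon condition says that the $k$ neighbours of each vertex realise all $k$ colours, so for an internal tree vertex $x$ the colours on the children of $x$ are exactly the $k-1$ colours of $\{1,\dots,k\}\setminus\{\psi(\mathrm{parent}(x))\}$; propagating this identity up the two $(k-1)$-ary trees and across the $e$-matching — which is arranged precisely so that the two sides are compatible — forces the stated equalities, generalising the explicit verification of the $k=j=3$ base case preceding Figure~\ref{fig:implantex}. \emph{Extension:} conversely, given a $k$-coupon colouring $\psi$ of $G$, set $\psi(u')=\psi(u)$ and $\psi(v')=\psi(v)$ in each gemel and fill in the two trees by choosing, for every internal vertex, a bijection from its children to the set of $k-1$ colours those children must collectively realise, the choices on the two sides made coherently so that the matching constraints at the (contracted) leaves hold. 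Granting these two claims, forcing shows that $\psi$ restricted to $V(G)$ is a $k$-coupon colouring of $G$ — the neighbours of $u$ in $G'$ are exactly the vertices $v'$ with $v\sim_G u$, and $\psi(v')=\psi(v)$ — while extension gives the converse, so $G$ is a yes-instance of $k$-{\sc ccol} if and only if $G'$ is, which completes the reduction. The forcing claim, like the girth bound, depends essentially on the combinatorics of $G_{(n,k)}$ and is where the bulk of the remaining work lies.
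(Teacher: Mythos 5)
Your reduction skeleton (from $k$-{\sc ccol} on $k$-regular graphs, gemel implantation at every edge, constant-size gemels, forcing plus extension claims) is the same as the paper's, but there is a genuine gap exactly at the crux: you take the matching $e$ of $G_{(n,k)}$ unchanged and merely assert the \emph{forcing} and \emph{extension} claims, saying the matching ``is arranged precisely so that the two sides are compatible'' and deferring the verification as remaining work. That compatibility is the actual content of the proof, and it is not automatic. In a $k$-regular graph the coupon condition pins down colours only at every other level of each tree (a vertex's colour is tied to its grandparent's colour via the digit that distinguishes its parent among its siblings), so whether the contracted matching imposes \emph{consistent} demands --- in particular, whether all $k-1$ leaves of $T_0$ matched to the children of one leaf-parent of $T_1$ are forced to carry the same colour, which is what lets the constraint propagate back up $T_1$ and force $\psi(u)=\psi(u')$, $\psi(v)=\psi(v')$ --- depends delicately on \emph{which} digit positions the matching alters and on the parity offset between the two trees. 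This is precisely why the paper does not reuse the matching of $G_{(n,k)}$ verbatim: it doubles the recursion to $a_{n+1}=a_n+2(k-1)^{a_n}$, changes only digits at \emph{even} positions, and then cyclically shifts one digit (removing the second-to-last digit and reinserting it at the front) to align the parities of the two trees; it then exhibits explicit colourings $f_0,f_1$ (sums of odd-position digits mod $k$) and checks the grandparent, sibling, and matching-compatibility conditions, which yields both the forcing direction and the extension of any $k$-coupon colouring of $G$ to $G'$. Without such a verification your claim (b) is unsupported, and with the unmodified matching the extension direction (and hence the ``if'' half of the reduction) may simply fail, since the leaf constraints arriving at a single leaf-parent of $T_1$ need not be simultaneously satisfiable.

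On the girth side your plan --- rerunning the recursion of Lemma~\ref{lem:girth} in the tree metric --- is plausible but heavier than necessary and, as you note, nontrivial; the paper instead uses the observation $d_{T_i}(w_1,w_2)=\Omega\bigl(\log d_{G_{(n,k)}[V_i]}(w_1,w_2)\bigr)$, so a cycle of $G'$ inside a gemel projects to a cycle of the (modified) matching graph and has length $\Omega(\log g(G_{(n,k)}))$, which still grows with $n$ and hence exceeds any fixed $g$ for suitable $n$; cycles passing through original vertices of $G$ are long because they must descend and climb entire depth-$a_n$ trees, as you say. Note also that because the matching must be modified for the colouring argument, the girth bound cannot simply be quoted from Lemma~\ref{lem:girth}; it has to be rechecked for the modified matching (the paper remarks that the digit removal/insertion changes leaf distances only by a constant factor). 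So the girth part of your argument is repairable, but the colouring-compatibility part is the missing idea.
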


\begin{proof}
To prove Proposition~\ref{prop:ccol}, we use a construction for the $k$-coupon colouring problem that is a slight variation on the matchings in the graph $G_{(n,k)}$ that forces the pairs ($v,v'$) and ($u,u'$) to be mono-coloured in a $k$-coupon colouring, while still having girth that grows with $n$. This results in a graph $G'$ of high girth that has a $k$-coupon colouring if and only if $G$ has a $k$-coupon colouring. 

Let $a_0=1$, and let $a_{n+1}=a_{n}+2*(k-1)^{a_{n}}$. Note the factor of 2. Given a $k$-regular graph $G$, we choose a value $n$ and, for each edge of $G$, add the two $(k-1)$-ary trees of depth $a_n$, $T_0$ rooted at $v'$ and $T_1$ rooted at $u'$. Label the vertices of the trees by $(k-1)$-ary strings in the natural way, adding a 0 or a 1 at the end for $T_0$ and $T_1$, respectively. The bijective function $e$ that adds the matching now takes two steps. The first step is very similar to $G_{(n,k)}$, except that we change only digits at even positions. For every digit at positions $a_0+2*(x_0+1)-1,\ldots a_{n-1}+2*(x_{n-1}+1)-1$, increase it by $1$ and take $\pmod{k-1}$. Change the last 0 of the string to a 1. Then, remove the second-to last digit (adjacent to the final 1) and insert it in the first position. The first digit becomes the second digit, etc\ldots For example, if $n=1$ and $k=4$, then the string $\mathbf{\mathcolor{cyan}{0}}\mathbf{\mathcolor{orange}{1}}1\mathbf{\mathcolor{orange}{\underline{\overline{1}}}}10002$ has $a_0=1$ and $x_0=2$. The trit at position $a_0+2*(x_0+1)-1=6$ changes to $0+1 \pmod{3}=1$. The last 0 changes to a 1. Finally, the first trit is removed and inserted into second to last position. The resulting string is $\mathbf{\mathcolor{cyan}{1}}1\mathbf{\mathcolor{orange}{\underline{\overline{2}}}}10002\mathbf{\mathcolor{orange}{1}}$.
Without loss of generality, suppose that $v'$ has colour 0 in a $k$-coupon colouring. The colours of $u$, as well as the children of $v'$, are undetermined, but we know that the grand children of $v'$ in $T_0$ must have colors $1,\ldots,k-1$. Then, the colours of the great-grandchildren of $v'$ are undetermined, but we know the sets of colours for each set of great-great-grandchildren of $v'$ that share a common parent. In other words, up to isomorphism, we know the colours of vertices in the tree that are at even distance from $v'$. We describe a function $f_0$ that maps the even-distance vertices of this tree to colours in a $k$-coupon colouring, such that any valid $k$-coupon colouring matches this colouring up to ismorphism. Note that for any vertex $w$ in the tree, $d(v',w)$ is one less than the length of the string. Let 
$$f_0(w)=\sum_{j=0}^{\frac{d(v',w)}{2}-1} (t_{2j+1}+1) \pmod{k},$$
where $t_j$ is the value of the digit at position $j$. For example, the colour of the vertex with string $011110002$ is
$$f_0(011110002)=(2+1)+(0+1)+(1+1)+(1+1) =0 \pmod{4}.$$
To check that this corresponds to a valid $k$-coupon colouring, we need to verify two things:
\begin{itemize}
\item[(1)] Vertices receive colours that are different from their grandparents,
\item[(2)] vertices receive colours that are different from their siblings.
\end{itemize}
Note that the colour of a vertex can be computed from the colour of its grandparent by adding $1+t_1$. Since $0 \leq t_1 \leq k-2$, and $t_1$ distinguishes siblings, these two conditions are met.\\
The leaves of $T_0$ do receive a colour, but they will disappear in the final graph. This colour must be passed on to the vertices at the level above the leaves of $T_1$. For every leaf-parent of $T_1$, this means that all leaves of $T_0$ matched to its children must have the same colour. This is the colour it must adopt to satisfy the leaf-parents in $T_0$, once all leaves have been deleted. Therefore, we need a function $f_1$ that maps vertices of $T_1$ that are at even distance from $u'$ to colours in a $k$-coupon colouring. Let this function be 

$$f_1(w)=\sum_{j=0}^{\frac{d(u',w)-1}{2}} (t_{2j+1}+1) \pmod{k}.$$

For example, the color of the vertex with string $11210002$ is
$$f_1(11210002)=(2+1)+(0+1)+(1+1)+(1+1) =0 \pmod{4}.$$
To check that this corresponds to a valid $k$-coupon colouring, we need to verify three things:
\begin{itemize}
\item[(1)] Vertices receive colours that are different from their grandparents,
\item[(2)] vertices receive colours that are different from their siblings,
\item[(3)] leaf-parents in $T_1$ receive the same colour as the leaves of $T_0$ that are matched to their children.
\end{itemize}
Condition (1) and (2) are verified as before, and condition (3) is verified by the fact that the function of a leaf-parent in $T_1$ sums exactly the same set of numbers as a leaf of $T_0$ that maps to any of its children. \\
Finally, we also note that the children of $u'$ in $T_1$ have colors in the set $\{0,\ldots,k-2\}+1$, implying that $v'$ must have colour 0. By a similar argument, the vertices $u$ and $u'$ must also have the same colour (independent of the colour of $v$ and $v'$).\\

Finally, we note that $g(G')=\Omega (\log (g(G_{(n,k)})))$, because for two leaves (or leaf-parents) $w_1,w_2$ in one of the trees $T_i$ we have $$d_{T_i}(w_1,w_2)=\Omega\left( \log d_{G_{(n,k)}[V_i]}(w_1,w_2)\right).$$ The removal/insertion of the first digit changes this distance only by a constant factor. \end{proof}

\subsection{Minimality}

We apply the minimality criterion to show that the limit class $\cF_k$ is a boundary class for $k$-{\sc ccol}. Let $G$ be a graph in $\cF_k$. Clearly $G$ is an induced subgraph of some perfect $k$-regular tree. 
Let $t$ be the smallest integer such that $G$ is an induced subgraph of $T_k^t$. Since the $M\cup\{G\}$-free graphs form a subset of the $M\cup\{T_k^t\}$-free graphs for any set $M$, it is enough to show that the criterion holds for $T_k^t$ for all  $t$.
We make use of Lemma \ref{lem:deg}.

\begin{lemma}
\label{lem:deg}
Every $(T_k^t,K_{1,k+1},C_3,\ldots,C_{2t+2})$-free graph has a vertex of degree less than $k$.
\end{lemma}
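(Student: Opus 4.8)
The plan is to argue by contradiction: suppose $H$ is $(T_k^t,K_{1,k+1},C_3,\ldots,C_{2t+2})$-free but every vertex of $H$ has degree at least $k$ (in particular $H$ is nonempty), and produce an induced copy of $T_k^t$. First I would dispose of the acyclic case: since $k\geq 3$ and a finite forest always has a vertex of degree at most $1$, the hypothesis $\delta(H)\geq k$ forces $H$ to contain a cycle; as $H$ has no cycle of length at most $2t+2$, every cycle of $H$ has length at least $2t+3$, i.e. $H$ has girth at least $2t+3$.

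Next I would use the standard local structure of high-girth graphs. Fix any vertex $v$ and let $B=\{u:\mathrm{dist}_H(u,v)\leq t\}$. Because the girth of $H$ is at least $2t+3>2t+1$, the induced subgraph $T:=H[B]$ is a tree. (Briefly: fix a BFS tree $\mathcal T$ rooted at $v$; if some edge $xy$ of $H[B]$ were not an edge of $\mathcal T$, then together with the $\mathcal T$-path from $x$ to $y$ it would form a cycle of length at most $\mathrm{dist}_H(v,x)+\mathrm{dist}_H(v,y)+1\leq 2t+1$, which is impossible; hence $H[B]=\mathcal T[B]$, a subtree of $\mathcal T$.) Root $T$ at $v$. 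Every $H$-neighbour of $v$ lies in $B$, so $\deg_T(v)=\deg_H(v)\geq k$; and for any $w$ at distance $\ell<t$ from $v$, all $H$-neighbours of $w$ are at distance at most $\ell+1\leq t$ from $v$, so $\deg_T(w)=\deg_H(w)\geq k$, i.e. $w$ has at least $k-1$ children in $T$.

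Finally I would prune $T$: keep $v$, choose exactly $k$ of its children, choose exactly $k-1$ children of every other kept vertex of depth less than $t$, and retain all kept vertices down to depth exactly $t$. The degree bounds from the previous paragraph make this possible, and the result is a copy of $T_k^t$. Since it is obtained from the tree $T$ by deleting whole branches, it is a connected subtree of $T$ and hence an induced subgraph of $T$, therefore an induced subgraph of $H$ — contradicting that $H$ is $T_k^t$-free, and completing the proof.

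I do not expect any step here to be a serious obstacle; the proof is essentially the remark that a graph of large girth and minimum degree at least $k$ looks locally like a $k$-regular tree. The only points requiring a little care are (i) the exact girth threshold in the ``balls induce trees'' step — girth $2t+2$ already suffices, so the forbidden cycles $C_3,\ldots,C_{2t+2}$ give ample room — and (ii) noting that the pruned tree is genuinely an \emph{induced} subgraph of $H$, which is automatic once one knows $H[B]$ itself is an induced tree. (The hypothesis $K_{1,k+1}$-free is not needed for this particular lemma, though it is part of the forbidden set for $\cF_k$.)
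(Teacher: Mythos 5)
Your proof is correct, and it differs from the paper's in one substantive way: the paper uses the $K_{1,k+1}$-freeness together with triangle-freeness to conclude that a counterexample $H$ is exactly $k$-regular, after which the radius-$t$ ball around any vertex induces precisely $T_k^t$ and the contradiction is immediate; you never invoke $K_{1,k+1}$-freeness, instead showing only that the ball induces a tree in which every vertex at depth less than $t$ has degree at least $k$, and then pruning a parent-closed subset to exhibit an induced copy of $T_k^t$. Your route is slightly longer (it needs the pruning step and the observation that a parent-closed vertex set in the induced ball-tree induces exactly the pruned tree), but it buys a marginally stronger statement, confirming your parenthetical remark that the star hypothesis is superfluous for this lemma; the paper's route is shorter precisely because exact regularity makes the ball itself the forbidden tree, and it is the only place in the argument where $K_{1,k+1}$ earns its keep. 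Two small points you should make explicit: the forbidden cycles are forbidden as \emph{induced} subgraphs, so to pass to ``no cycle of length at most $2t+2$'' you should note that a shortest cycle is necessarily chordless (the paper elides this too); and for the ball-induces-a-tree step your BFS computation in fact only needs girth at least $2t+2$, as you say, so the hypotheses give room to spare.
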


\begin{proof}
Let $H$ be $(T_k^t,K_{1,k+1},C_3,\ldots,C_{2t+2})$-free, and suppose every vertex of $H$ has degree at least $k$. Since $H$ is $(K_{1,k},C_3)$-free, $h$ is in fact $k$-regular. Let $v$ be a vertex of $H$ and consider its radius $t$ neighbourhood $N^t(v)$.
Since $H$ is $k$-regular and contains no induced cycle of length shorter than $2t+2$, it follows that $N^t(v)$ induces a $T_k^t$. This contradiction completes the proof.
\end{proof}

\begin{proposition}
The $k$-coupon colouring problem can be solved in polynomial time for $(T_k^t,K_{1,k+1},C_3,\ldots,C_{2t+1})$-free graphs.
\end{proposition}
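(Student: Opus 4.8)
The plan is to show that $Free(T_k^t,K_{1,k+1},C_3,\ldots,C_{2t+1})$ contains no $k$-regular graph on a nonempty vertex set; consequently every nonempty graph in this class has a vertex of degree less than $k$, hence no $k$-coupon colouring at all, and the algorithm simply rejects every nonempty input.

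First I would establish this degree bound, which is essentially the argument behind Lemma~\ref{lem:deg}; one only needs $(C_3,\ldots,C_{2t+1})$-freeness, not the slightly stronger hypothesis stated there. Let $G$ be in the class with $\delta(G)\ge k$. Since $G$ is $K_{1,k+1}$-free we have $\Delta(G)\le k$, so $G$ is $k$-regular. Since $G$ has no induced $C_3$ it is triangle-free, and then excluding induced $C_4,\ldots,C_{2t+1}$ forces the girth of $G$ to be at least $2t+2$ (any cycle of length at most $2t+1$ contains a chordless cycle of length between $3$ and $2t+1$). Now fix a vertex $v$ and look at $N^t(v)$. If $G[N^t(v)]$ were not a forest, a BFS tree of it rooted at $v$ would contain a non-tree edge whose fundamental cycle---two root-paths of length at most $t$ together with one edge---has length at most $2t+1$, contradicting the girth bound; so $G[N^t(v)]$ is a tree, which by $k$-regularity of $G$ is exactly $T_k^t$. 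This contradicts $T_k^t$-freeness, so $\delta(G)<k$.

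Second, I would observe that a vertex of degree less than $k$ cannot have a neighbour in each of the $k$ colour classes, so no nonempty graph in the class admits a $k$-coupon colouring. Hence the procedure that answers ``no'' on every nonempty input (the empty graph being handled trivially) decides $k$-{\sc ccol} correctly on this class in constant time. Combined with the minimality criterion of Lemma~\ref{lem:min}, this completes the proof that $\mathcal{F}_k$ is a boundary class: every $G\in\mathcal{F}_k$ is an induced subgraph of some $T_k^t$, so $Free(\{G,K_{1,k+1},C_3,\ldots,C_{2t+1}\})$ is a subclass of $Free(\{T_k^t,K_{1,k+1},C_3,\ldots,C_{2t+1}\})$, which is tractable by the proposition, and $\{K_{1,k+1},C_3,\ldots,C_{2t+1}\}$ is a finite subset of the minimal forbidden induced subgraphs of $\mathcal{F}_k$.

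The one point that needs care is the girth estimate in the first step---that the radius-$t$ ball of a $k$-regular graph already induces $T_k^t$ once the girth is $2t+2$---since this is what makes the finite set $\{K_{1,k+1},C_3,\ldots,C_{2t+1}\}$ suffice. Everything else is routine bookkeeping.
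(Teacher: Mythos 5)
Your proof is correct and follows essentially the same route as the paper: a minimum-degree argument (in a $k$-regular graph of girth at least $2t+2$ the radius-$t$ ball induces $T_k^t$) shows every graph in the class has a vertex of degree less than $k$, hence admits no $k$-coupon colouring, so the algorithm trivially rejects. Your version is in fact slightly sharper than the paper's Lemma~\ref{lem:deg}, which is stated with $C_3,\ldots,C_{2t+2}$ forbidden although the proposition only excludes cycles up to $C_{2t+1}$; your observation that forbidding induced cycles up to length $2t+1$ already forces girth at least $2t+2$ (a shortest cycle is chordless), and that this girth suffices for the ball argument, closes that small off-by-one mismatch in the paper's appeal to the lemma.
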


\begin{proof}
By the previous lemma, graphs in this class have no $k$-coupon colouring.
Thus the problem is trivially solvable in polynomial time.
\end{proof}


\section{Discussion}

We end the paper with a discussion of the uniqueness of our boundary classes and give some open problems. A complete list of boundary classes for a given problem $\Pi$ would characterise the finitely defined hereditary classes for which $\Pi$ is easy (assuming $\Poly \not = \NP$). A set of boundary classes $\{\cB_1,\cB_2,\ldots\}$ for $\Pi$ contains every boundary class if and only if for all $i$ and for all $G\in \cB_i$, $\Pi$ is solvable in polynomial time for $G$-free graphs. 

\subsection{Role colouring $2K_2$-free graphs}

We can immediately see that, for $k \geq 4$, there must be a boundary class for $k$-{\sc rcol}
different from $\cS$, due to the following result.

\begin{theorem}
(\cite{Dourado2016}) For $k\geq 4$, $k$-{\sc rcol} is $\NP$-hard for $(2K_2,C_4,C_5)$-free graphs.
\end{theorem}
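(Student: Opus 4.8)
The plan is to start from the classical characterisation of F\"oldes and Hammer: a graph is $(2K_2,C_4,C_5)$-free if and only if it is a \emph{split} graph, that is, its vertex set partitions into a clique $K$ and an independent set $I$. So it suffices to prove that $k$-{\sc rcol} is $\NP$-hard for split graphs whenever $k\ge 4$. The first step would be to record what a role colouring $r$ of a split graph $G=(K\cup I,E)$ with role graph $R$ must look like. If two vertices of $K$ receive the same colour $c$ then $R$ has a loop at $c$; if two vertices of $K$ receive distinct colours $c,c'$ then $cc'\in E(R)$. Hence the colours used on $K$ induce a clique of $R$ in which every loopless colour is used exactly once, so when $|K|$ is large almost every colour on $K$ carries a loop. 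Moreover, a colour used only on $I$ has, by local surjectivity applied at such a vertex, all of its $R$-neighbours among the colours used on $K$; in particular $R$ is itself a split (loopy) graph. These observations cut $R$ down to a short list of candidates once $G$ is arranged to contain a large clique.

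The second step is the reduction itself, chosen to mirror the clique/independent-set split. I would let $K$ be the ground set of a hypergraph, padded by $2k$ dummy vertices coloured $1,1,2,2,\dots,k,k$, and let each vertex of $I$ represent a hyperedge, adjacent precisely to the ground-set elements it contains. Taking the target role graph to be the reflexive $K_k$ (every colour looped and adjacent to every colour), a role colouring with this role graph forces each vertex's neighbourhood to meet all $k$ colours; the padding makes this automatic for the vertices of $K$, so the only genuine requirement is that each hyperedge $N(v)$, $v\in I$, see all $k$ colours --- a \emph{panchromatic} $k$-colouring, which is $\NP$-hard. Conversely a panchromatic colouring of the hypergraph lifts to a reflexive-$K_k$-role colouring of $G$, and the whole construction is polynomial-time; so, modulo pinning the role graph, this yields the reduction.

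The hard part will be precisely that last clause: showing $G$ has no $k$-role colouring whose role graph $R'$ differs from the reflexive $K_k$ --- for instance one with a colour used only on $I$, which the structural analysis above does not on its own forbid, or one missing an edge among the clique colours. Unlike the reductions earlier in this paper, a split graph has diameter at most $3$, so $R'$ automatically has small diameter and the ear and dangling-cycle devices behind Lemma~\ref{lem:period} and Lemma~\ref{lem:dang} are unavailable. Instead one must bolt onto $G$ small \emph{split} gadgets --- a few extra clique vertices together with independent-set vertices whose neighbourhoods in $K$ are carefully prescribed --- that force the distinguished colour to carry a loop and to be adjacent to every other colour, collapsing every competing role graph onto the reflexive $K_k$; and this must be done without ever creating an induced $2K_2$, $C_4$ or $C_5$, which is what makes wiring the gadgets delicate. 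The threshold $k\ge 4$ is no accident here: for smaller $k$ there is simply too little room in a $k$-vertex role graph to accommodate such forcing gadgets. Once the role graph is pinned down, the equivalence of the second step completes the proof.
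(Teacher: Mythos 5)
First, note that the paper does not prove this statement at all: it is quoted verbatim from \cite{Dourado2016} as a known result, so there is no in-paper argument to compare yours against. Judged on its own terms, your proposal has a sound skeleton but a genuine gap at exactly the point you yourself flag. The reduction to split graphs via F\"oldes--Hammer is fine, the structural observations about which role graphs a split graph can admit (clique colours form a clique of $R$, loopless colours occur at most once on $K$, colours confined to $I$ have all $R$-neighbours among the clique colours, so $R$ is split) are correct, and the backward direction of your equivalence (a panchromatic $k$-colouring of the hypergraph lifts to a reflexive-$K_k$-role colouring of $G$) works. But the forward direction is not established: $k$-{\sc rcol} asks whether \emph{some} $k$-role colouring exists, and your construction as described does nothing to exclude colourings whose role graph is not the reflexive $K_k$ --- for instance colourings that lump the whole clique into few looped colours and spend the remaining colours on $I$, or that miss edges among the clique colours. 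You cannot ``pad by $2k$ dummy vertices coloured $1,1,\dots,k,k$'': in a reduction no vertex comes pre-coloured, so the dummies' colours, like everything else, must be forced by gadgets. The entire difficulty of such hardness proofs lives in this forcing step (compare Section~3 of the paper, where Lemmas~\ref{lem:period}, \ref{lem:ear}, \ref{lem:dang} and \ref{lem:dang2} exist solely to pin the role graph, using ears and dangling cycles that, as you correctly observe, are unavailable inside a split graph). Your text acknowledges this and gestures at ``small split gadgets'' that would force loops and adjacencies without creating an induced $2K_2$, $C_4$ or $C_5$, but no such gadget is constructed, no verification of the forbidden-subgraph condition is given, and the claim that this is possible precisely when $k\geq 4$ is asserted rather than derived --- yet the threshold $k\geq 4$ is the substance of the theorem (the paper shows $k=2$ is easy on $2K_2$-free graphs and leaves $k=3$ open, so the gadgets must genuinely fail for small $k$).

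A secondary soft spot: you invoke NP-hardness of panchromatic (polychromatic) $k$-colouring of hypergraphs without restricting the hypergraph, but your target graph class constrains the incidence structure --- e.g.\ two hyperedges sharing two ground elements create a $C_4$ unless those elements are adjacent, which they are since $K$ is a clique, so that particular worry dissolves, but you should still check that the auxiliary gadget vertices you add to $I$ never produce an induced $C_4$ or $C_5$ with the hyperedge vertices. Until the forcing gadgets are exhibited and both verifications (role-graph pinning and $(2K_2,C_4,C_5)$-freeness of the final graph) are carried out, what you have is a plausible plan, not a proof.
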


We resolve the case $k=2$, leaving $k=3$ open. We need the 
following lemma.

\begin{lemma}\label{lem:2k2struc}
Let $G$ be a connected $2K_2$-free graph. If $G$ has a maximal independent set $I$ of size at least $2$, then the graph induced by $V(G) \setminus I$ is connected.
\end{lemma}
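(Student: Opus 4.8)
The plan is to argue by contradiction. Suppose $H:=G[V(G)\setminus I]$ is disconnected, and write $J=V(G)\setminus I$. Partition $J=A\cup B$ into two non-empty parts with no edge of $G$ between them — for instance, take $A$ to be one connected component of $H$ and $B$ the union of all the others. Before anything else I would record the two ``domination'' facts forced by the hypotheses: every vertex of $J$ has a neighbour in $I$ (otherwise $I$ is not a \emph{maximal} independent set), and every vertex of $I$ has a neighbour, which must lie in $J$ because $I$ is independent (here $G$ connected together with $|V(G)|\ge|I|\ge 2$ guarantees $G$ has no isolated vertex).

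Second, I would locate a vertex of $I$ adjacent to both sides. Let $I_A=\{i\in I:N(i)\cap A\neq\emptyset\}$ and let $I_B$ be defined analogously for $B$. By the domination facts, $I_A$ and $I_B$ are non-empty and $I_A\cup I_B=I$. If $I_A\cap I_B=\emptyset$, then $A\cup I_A$ and $B\cup I_B$ would partition $V(G)$ with no edge between them: there are no $A$--$B$ edges, no edges inside $I$, and an edge from $A$ to $I_B$ (or from $B$ to $I_A$) is impossible by the very definition of $I_A,I_B$. This contradicts connectedness of $G$, so there is $i_0\in I_A\cap I_B$; fix a neighbour $a_0\in A$ and a neighbour $b_0\in B$ of $i_0$.

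Third, invoke $|I|\ge 2$ to pick $i_1\in I\setminus\{i_0\}$, together with a neighbour $c\in J$ of $i_1$; say $c\in A$ (the case $c\in B$ is symmetric, swapping $A\leftrightarrow B$ and $a_0\leftrightarrow b_0$). Consider the two vertex-disjoint edges $i_1c$ and $i_0b_0$. Of the four potential chords joining them, $i_0i_1\notin E$ (both lie in the independent set $I$) and $cb_0\notin E$ (no $A$--$B$ edge), so $2K_2$-freeness forces $i_1b_0\in E$ or $i_0c\in E$. The intended endgame is to iterate this kind of step over suitable choices of $i_1$ and of the neighbour $c$ until one is forced to display an honest induced $2K_2$: a pair of vertex-disjoint edges, one with its $J$-endpoint in $A$ and one with its $J$-endpoint in $B$, attached to two distinct vertices of $I$, and with no chord. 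That is the contradiction; the whole argument is pure adjacency bookkeeping and never touches role colourings.

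The step I expect to be the main obstacle is precisely closing that final dichotomy. The dangerous configuration is one in which $i_0$ is adjacent to \emph{every} neighbour of $i_1$ lying in $A$, so that each candidate disjoint pair $\{i_1a,\ i_0b_0\}$ picks up the chord $i_0a$ and no $2K_2$ appears. To defeat this I would use the remaining freedom in the choice of data: pick the ``cross-component'' pair $(a,b)$ and the two members of $I$ witnessing them extremally — for example so that the $I$-vertex attached to $a$ has as few neighbours in $A$ as possible — and then argue that in the residual case either a chordless disjoint pair of edges surfaces, or one can trace an explicit path inside $G[J]$ joining $A$ to $B$, contradicting the choice of $A$ as (part of) a component. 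This is where essentially all of the content of the lemma resides, and it is the part that will require the most care.
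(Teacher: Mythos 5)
Your preparatory steps are all correct (the two domination facts, the existence of $i_0\in I_A\cap I_B$, and the forced dichotomy $i_1b_0\in E$ or $i_0c\in E$), but the step you yourself flag as the main obstacle is a genuine gap, and it cannot be closed, because the statement as written is false. Take $G=P_4$ with vertices $1,2,3,4$, edges $\{1,2\},\{2,3\},\{3,4\}$, and $I=\{1,3\}$: $G$ is connected and $2K_2$-free, $I$ is a maximal independent set of size $2$, yet $V(G)\setminus I=\{2,4\}$ induces no edge, so $G[V(G)\setminus I]$ is disconnected. Running your argument on this graph gives $A=\{2\}$, $B=\{4\}$, $i_0=3$, $b_0=4$, $i_1=1$, $c=2$, and the dichotomy is resolved by the chord $i_0c=\{3,2\}\in E$ --- exactly the ``dangerous configuration'' you describe, where $i_0$ is adjacent to every neighbour of $i_1$ in $A$. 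No extremal choice of witnesses can surface an induced $2K_2$ here, since the graph contains none; the contradiction you are aiming for simply does not exist, so the endgame is unfixable rather than merely delicate.

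For comparison, the paper's own proof founders on the same configuration: it takes $x,y\in S=V\setminus I$ with $x\sim u$, $y\sim v$ ($u,v\in I$) and asserts that $x\not\sim y$ would make $\{u,v,x,y\}$ an induced $2K_2$, overlooking the possible chords $uy$ and $vx$ --- which is precisely what happens in $P_4$, where $x=2$ is adjacent to both $1$ and $3$. So your caution at this point reflects a real defect in the lemma, not a weakness of your technique. Note also that the natural weakening one might hope to substitute (``either every vertex of $V(G)\setminus I$ has a neighbour in $V(G)\setminus I$, or none does''), which is what the subsequent theorem's colouring actually needs, also fails: for the $2K_2$-free graph on $u,v,p,q,r$ with edges $up,uq,ur,qr,rv$ and $I=\{u,v\}$, the vertex $p$ has no neighbour outside $I$ while $q,r$ do, so the colouring $I\mapsto 1$, $V(G)\setminus I\mapsto 2$ is not a role colouring (though that graph does admit a different $2$-role colouring). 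Any repair therefore has to restructure the argument, not just sharpen the case analysis you set up.
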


\begin{proof}
Let $u,v\in I$ and consider $S=V\setminus I$. If $|S|=1$ then by definition the graph induced by $S$ is connected. Let $x,y \in S$ and suppose $x$ is a neighbour of $u$. If $y$ is a neighbour of $v$, then $x$ and $y$ are adjacent, otherwise $u,v,x,y$ induce a $2K_2$. If $y$ is not a neighbour of $v$, then there must be a vertex $z \in S$ which is adjacent to $v$, otherwise $G$ is not connected. Now $z$ must be adjacent to $x$, otherwise $u,v,x,z$ induce a $2K_2$. Similarly, $z$ must be adjacent to $y$. Therefore there is a path from $x$ to $y$ as required.
\end{proof}

\begin{theorem}
Every $2K_2$-free graph with at least $2$ vertices has a $2$-role colouring.
Furthermore, such a colouring can be computed in polynomial time.
\end{theorem}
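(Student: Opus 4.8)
The plan is to produce an explicit two-colouring from a maximal independent set and to verify it is a role colouring using Lemma~\ref{lem:2k2struc}.

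First I reduce to the connected case. In a disconnected $2K_2$-free graph at most one connected component contains an edge, since two edges lying in distinct components would induce a $2K_2$; hence $G$ consists of a (possibly empty) collection of isolated vertices together with at most one other component $C$. If there is no component with an edge, $G$ is an independent set on $n \ge 2$ vertices and any split into two nonempty classes is a role colouring (role graph: two isolated vertices). Otherwise colour every isolated vertex with colour $1$ and every vertex of $C$ with colour $2$; since every vertex of $C$ has a neighbour, this is a role colouring whose role graph is an isolated vertex together with a looped vertex. So we may assume $G$ is connected with $n \ge 2$ vertices.

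If $G$ has no pair of non-adjacent vertices, colour one vertex with colour $1$ and all others with colour $2$: the colour-$1$ vertex has neighbourhood image $\{2\}$ and each colour-$2$ vertex has image $\{1\}$ if $n = 2$ and $\{1,2\}$ if $n \ge 3$, so this is a valid $2$-role colouring. Otherwise fix two non-adjacent vertices and extend them greedily to a maximal independent set $I$, so $|I| \ge 2$; put $S = V(G) \setminus I$. I claim that colouring $I$ with colour $1$ and $S$ with colour $2$ is a $2$-role colouring. Each vertex of $I$ has all its neighbours in $S$ (independence) and at least one neighbour ($G$ connected, $n \ge 2$), so its image is exactly $\{2\}$. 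Each vertex of $S$ has a neighbour in $I$ by maximality of $I$. The key step is to invoke Lemma~\ref{lem:2k2struc}: since $|I| \ge 2$ we obtain that $G[S]$ is connected, so either $|S| = 1$ and the sole vertex of $S$ has image $\{1\}$, or $|S| \ge 2$ and (using that a connected graph on $\ge 2$ vertices has minimum degree at least $1$) every vertex of $S$ additionally has a neighbour in $S$, hence image $\{1,2\}$. In both cases all colour-$2$ vertices receive the same image, so the colouring is a role colouring (with role graph $K_2$ or $P_2^*$, respectively).

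The crux of the argument -- and the only place where the $2K_2$-free hypothesis is used -- is the need for $G[S]$ to be connected so that the colour-$2$ class has a uniform neighbourhood image; for a poorly chosen maximal independent set (say a single dominating vertex) the complement side can simultaneously contain vertices seeing only colour $1$ and vertices seeing both colours, which would violate the role colouring condition, and choosing $|I| \ge 2$ together with Lemma~\ref{lem:2k2struc} is exactly what prevents this. Finally, the construction is polynomial: deciding whether $G$ has two non-adjacent vertices and, if so, greedily extending them to a maximal independent set both run in polynomial time, and the remaining colouring is immediate.
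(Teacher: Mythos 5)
Your proof is correct and follows essentially the same route as the paper: reduce to a connected non-complete graph, take a maximal independent set $I$ containing two non-adjacent vertices, apply Lemma~\ref{lem:2k2struc} to get that $G[V(G)\setminus I]$ is connected, and colour $I$ with one colour and its complement with the other. You simply spell out the disconnected and clique cases that the paper dismisses as easy, and verify the colour classes' neighbourhood images explicitly.
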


\begin{proof}

If $G$ is disconnected or is a clique, then a $2$-role colouring
is easy to find.
Otherwise, $G$ is a connected graph with a pair
of non-adjacent vertices $u,v$. Let $I$ be a maximal 
independent set containing $u,v$. By Lemma~\ref{lem:2k2struc},
the graph induced by $V(G) \setminus I$ is connected.
So we can colour all vertices of $I$ with colour $1$ and 
all vertices of $V(G) \setminus I$ with colour $2$. 
Since $I$ can be found in polynomial time, this completes the proof.

\end{proof}

\subsection{Coupon colouring claw-free graphs}

Our boundary class for $k$-{\sc ccol} is also not unique due to the following
result.

\begin{theorem}
(\cite{Koivisto2017,heggernes1998partitioning}) The $k$-{\sc ccol} problem is $\NP$-complete for 
$k$-regular claw-free graphs.
\end{theorem}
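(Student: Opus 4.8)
The plan is to establish both halves of $\NP$-completeness, with almost all of the effort going into hardness. Membership in $\NP$ is immediate: a colouring $\phi:V(G')\to\{1,\dots,k\}$ serves as a certificate, and one verifies in linear time that every vertex has a neighbour of each of the $k$ colours. For hardness I would \emph{not} reduce from the $k$-regular case of $k$-{\sc ccol} directly, but from the problem of deciding whether a $k$-regular graph admits a proper $k$-edge-colouring. That problem is $\NP$-complete for every $k\geq 3$ (due to Holyer for $k=3$ and Leven and Galil in general), which is exactly the range the paper considers.

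The construction is a \emph{half-edge gadget}. Given a $k$-regular graph $G$, build $G'$ as follows: for each edge $e=uv$ of $G$ introduce two vertices $e_u,e_v$ together with the edge $e_ue_v$, and for each vertex $v$ of $G$ make the $k$ half-edge vertices $\{e_v : e\ni v\}$ into a clique $K_k$. Each $e_v$ then has $k-1$ clique-neighbours plus its partner $e_u$, so $G'$ is $k$-regular and is plainly computable in polynomial time. For claw-freeness, note that the neighbourhood of any $e_v$ splits into the clique of its $k-1$ clique-mates (pairwise adjacent) together with the single partner $e_u$, and $e_u$ is adjacent to none of those clique-mates; hence every neighbourhood has independence number at most $2$, so no induced $K_{1,3}$ can be centred anywhere.

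The equivalence to prove is that $G'$ has a $k$-coupon colouring if and only if $G$ is properly $k$-edge-colourable. Sufficiency is easy: colour both halves of each edge with that edge's colour; then each vertex-clique becomes rainbow, and every $e_v$ sees the $k-1$ colours of its clique-mates together with the colour of $e_u$ (equal to that of $e_v$), i.e.\ all $k$ colours. For necessity, take a $k$-coupon colouring of $G'$ and first argue that each vertex-clique must be rainbow: if two half-edges at $v$ shared a colour, then any third half-edge at $v$ would see at most $k-2$ distinct colours among its $k-1$ clique-mates and only one more from its partner, falling short of $k$. Once every clique is rainbow, each $e_v$ already sees every colour except its own among its clique-mates, which forces its partner $e_u$ to repeat the colour of $e_v$; thus the two halves of every edge agree, and these common edge-colours constitute a proper $k$-edge-colouring of $G$.

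The main obstacle is the necessity direction, and specifically the rainbow-clique step, since the whole reduction rests on showing that the coupon condition is rigid enough to force the clean edge-colouring structure. By contrast, the regularity and claw-freeness checks are routine local computations and membership in $\NP$ is trivial, so once the rainbow-clique rigidity is in hand the remaining implications fall out immediately.
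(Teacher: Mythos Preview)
Your argument is correct. Note that the paper does not actually prove this theorem: it is quoted from \cite{Koivisto2017,heggernes1998partitioning}, with only the added remark that ``it is easy to see that their construction (which follows Theorem~5 in \cite{heggernes1998partitioning}) is claw-free.'' Your half-edge gadget --- which is nothing other than $G'=L(S(G))$, the line graph of the subdivision of $G$ --- is precisely that standard construction, and your rainbow-clique forcing argument is the usual way to establish the equivalence with proper $k$-edge-colouring of $k$-regular graphs. So you have faithfully reconstructed the cited proof rather than taken a different route; in particular, your observation that every neighbourhood has independence number at most $2$ is exactly the ``easy to see'' claw-freeness the paper alludes to.
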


The authors of \cite{Koivisto2017} only state the theorem for $k$-regular graphs, but it is easy
to see that their construction (which follows Theorem~5 in \cite{heggernes1998partitioning}) is claw-free.

\subsection{Open problems}

The fact that $k$-role colouring $2K_2$-free graphs is easy for $k=2$ but hard for $k\geq 4$ suggests the following questions.

\begin{open}
What is the computational complexity of $3$-role colouring $2K_2$-free graphs?
\end{open}

\begin{open}
Which boundary classes for $k$-role colouring exist inside $2K_2$-free graphs (for $k \geq 4$)?
\end{open}

For $k$-role colouring, the important class of claw-free graphs is a minimal open case.
Also, claw-free graphs must contain a boundary class for $k$-coupon colouring. This suggests
claw-free graphs as a topic of further study.

\begin{open}
What is the computational complexity of $k$-role colouring claw-free graphs?
\end{open}



\begin{open}
Which boundary classes for $k$-coupon colouring exist inside claw-free graphs?
\end{open}

\bibliographystyle{plain}
\bibliography{role}

\begin{thebibliography}{10}

\bibitem{Alekseev2003}
Vladimir~E. Alekseev.
\newblock On easy and hard hereditary classes of graphs with respect to the
  independent set problem.
\newblock {\em Discrete Applied Mathematics}, 132:17--26, 2003.

\bibitem{biggs1998constructions}
Norman Biggs.
\newblock Constructions for cubic graphs with large girth.
\newblock {\em the electronic journal of combinatorics}, 5(1):1, 1998.

\bibitem{borgatti1992notions}
Stephen~P. Borgatti and Martin~G. Everett.
\newblock Notions of position in social network analysis.
\newblock {\em Sociological Methodology}, 22(1):1--35, 1992.

\bibitem{borgatti1993two}
Stephen~P. Borgatti and Martin~G. Everett.
\newblock Two algorithms for computing regular equivalence.
\newblock {\em Social Networks}, 15(4):361--376, 1993.

\bibitem{burt1990detecting}
Ronald~S. Burt.
\newblock Detecting role equivalence.
\newblock {\em Social Networks}, 12(1):83--97, 1990.

\bibitem{chaplick2015locally}
Steven Chaplick, Jiří Fiala, Pim van~'t Hof, Daniël Paulusma, and Marek
  Tesař.
\newblock Locally constrained homomorphisms on graphs of bounded treewidth and
  bounded degree.
\newblock {\em Theoretical Computer Science}, 590:86 -- 95, 2015.
\newblock Fundamentals of Computation Theory.

\bibitem{chen2015coupon}
Bob Chen, Jeong~Han Kim, Michael Tait, and Jacques Verstra{\"{e}}te.
\newblock On coupon colorings of graphs.
\newblock {\em Discrete Applied Mathematics}, 193:94--101, 2015.

\bibitem{ChenJin2017}
He~Chen and Zemin Jin.
\newblock Coupon coloring of cographs.
\newblock {\em Applied Mathematics and Computation}, 308:90--95, 2017.

\bibitem{Dourado2016}
Mitre~C. Dourado.
\newblock Computing role assignments of split graphs.
\newblock {\em Theoretical Computer Science}, 635:74--84, 2016.

\bibitem{everett1994regular}
Martin~G. Everett and Stephen~P. Borgatti.
\newblock Regular equivalence: General theory.
\newblock {\em Journal of Mathematical Sociology}, 19(1):29--52, 1994.

\bibitem{fiala2005complete}
Ji\v{r}\'{i} Fiala and Dani\"{e}l Paulusma.
\newblock A complete complexity classification of the role assignment problem.
\newblock {\em Theoretical Computer Science}, 349:67--81, 2005.

\bibitem{heggernes1998partitioning}
Pinar Heggernes and Jan~Arne Telle.
\newblock Partitioning graphs into generalized dominating sets.
\newblock {\em Nord. J. Comput.}, 5(2):128--142, 1998.

\bibitem{Koivisto2017}
Mikko Koivisto, Petteri Laakkonen, and Juho Lauri.
\newblock {NP}-completeness results for partitioning a graph into total
  dominating sets.
\newblock In {\em Computing and Combinatorics - 23rd International Conference,
  {COCOON} 2017, Hong Kong, China, August 3-5, 2017, Proceedings}, pages
  333--345, 2017.

\bibitem{korpelainen2011boundary}
Nicholas Korpelainen, Vadim~V. Lozin, Dmitriy~S. Malyshev, and Alexander
  Tiskin.
\newblock Boundary properties of graphs for algorithmic graph problems.
\newblock {\em Theoretical Computer Science}, 412(29):3545 -- 3554, 2011.

\bibitem{lazebnik1995new}
Felix Lazebnik, Vasiliy~A Ustimenko, and Andrew~J Woldar.
\newblock A new series of dense graphs of high girth.
\newblock {\em Bulletin of the American mathematical society}, 32(1):73--79,
  1995.

\bibitem{lee2016total}
C.~M. Lee.
\newblock Total k-domatic partition on some classes of graphs.
\newblock In {\em 2016 International Computer Symposium (ICS)}, pages 74--79,
  2016.

\bibitem{lozin2013boundary}
Vadim Lozin and Christopher Purcell.
\newblock Boundary properties of the satisfiability problems.
\newblock {\em Information Processing Letters}, 113:313--317, 2013.

\bibitem{PurcellRombach2015}
Christopher Purcell and M.~Puck Rombach.
\newblock On the complexity of role colouring planar graphs, trees and
  cographs.
\newblock {\em Journal of Discrete Algorithms}, 35:1--8, 2015.

\bibitem{sailer1979structural}
Lee~Douglas Sailer.
\newblock Structural equivalence: Meaning and definition, computation and
  application.
\newblock {\em Social Networks}, 1(1):73--90, 1979.

\bibitem{schaefer1978complexity}
Thomas~J. Schaefer.
\newblock The complexity of satisfiability problems.
\newblock In {\em Proceedings of the Tenth Annual ACM Symposium on Theory of
  Computing}, STOC '78, pages 216--226, New York, NY, USA, 1978. ACM.

\bibitem{shi2017coupon}
Yongtang Shi, Meiqin Wei, Jun Yue, and Yan Zhao.
\newblock Coupon coloring of some special graphs.
\newblock {\em J. Comb. Optim.}, 33(1):156--164, 2017.

\bibitem{van2010computing}
Pim van~’t Hof, Dani{\"e}l Paulusma, and Johan M~M van Rooij.
\newblock Computing role assignments of chordal graphs.
\newblock {\em Theoretical Computer Science}, 411(40):3601--3613, 2010.

\end{thebibliography}

\end{document}